\newcommand{\C}{\mathcal{C}}
\newcommand{\I}{\mathcal{I}} \newcommand{\J}{\mathcal{J}}
 \newcommand{\R}{\mathcal{R}}
 \newcommand{\V}{\mathcal{V}}
\newcommand{\ol}[1]{\overline{#1}}                % overline
\newcommand{\set}[1]{\{#1\}}                      % set
\newcommand{\tup}[1]{\ensuremath{\langle #1\rangle}}            % tuple
\newcommand{\replfunc}[3]{\ensuremath{{#1}\frac{#3}{\raisebox{0.5 ex}{$\scriptstyle{#2}$}}}}
\definecolor{snow}{rgb}{0.93,0.91,0.91}
\newcommand{\graybox}[1]{%
  \begin{lrbox}{0}\begin{tabular}{@{}l@{}}#1\end{tabular}\end{lrbox}%
  \setlength{\fboxsep}{0ex}% adjust at will
  {\centering\fcolorbox{snow}{snow}{\usebox{0}}}}
\newcommand{\nullvalue}{\textbf{\textsc{\textsf{N}}}\xspace}
\renewcommand{\i}{\emph{(i)}\xspace}
\DeclareMathOperator*{\RAJoin}{\bowtie}
\newcommand{\subR}[1][\R]{\ensuremath{\widetilde{#1}}}
\newcommand{\PowerSet}{\ensuremath{\wp}}
\newcommand{\intConv}[3]{\ensuremath{{#2}^{#1}\ifthenelse{\equal{#3}{}}{}{_{(#3)}}}\xspace}
\newcommand{\IRA}[2][]{\intConv{\PowerSet}{#2}{#1}}
\newcommand{\InRA}[2][]{\intConv{\nullvalue}{#2}{#1}}
\newcommand{\IFOLe}[2][]{\intConv{\hole}{#2}{#1}}
\newcommand{\IFOLnin}[2][]{\intConv{{\scriptscriptstyle\text{ni}}}{#2}{#1}}
\newcommand{\FOLetoFOLd}{\ensuremath{\Omega_f}\xspace}
\newcommand{\FOLdtoFOLe}{\ensuremath{\Omega_{f}^{-1}}\xspace}
\newcommand{\hole}{\ensuremath{\varepsilon}\xspace}
\newcommand{\FOLfam}[2]{\ensuremath{\mathcal{F\!O\!L}\ifthenelse{\equal{#1}{}}{}{^{#1}}\ifthenelse{\equal{#2}{}}{}{_{#2}}}\xspace}
\newcommand{\FOL}[1][]{\FOLfam{}{#1}}
\newcommand{\FOLe}[1][]{\FOLfam{\hole}{#1}}
\newcommand{\true}{\ensuremath{\textbf{\textsf{true}}}\xspace}
\newcommand{\false}{\ensuremath{\textbf{\textsf{false}}}\xspace}
\newcommand{\foleq}[2]{\ensuremath{{\foleqp}({#1},{#2})}\xspace}
\newcommand{\foleqp}{\ensuremath{=}\xspace}
\newcommand{\nullify}[3]{\ensuremath{\Pi_{#1}{#3}({#2})}\xspace}
\newcommand{\RA}[1][]{\ensuremath{\mathcal{R\!\!A}_{{#1}}}\xspace}
\newcommand{\nRA}[1][]{\ensuremath{\mathcal{R\!\!A}^{\nullvalue}_{{#1}}}\xspace}
\newcommand{\select}[1]{{\ensuremath{\mathrel{\sigma_{#1}}}\xspace}}
\newcommand{\project}[2]{{\ensuremath{\mathrel{\pi_{#1}{#2}}}\xspace}}
\newcommand{\isnull}{\ensuremath{\mathsf{isNull}}\xspace}
\newcommand{\isnotnull}{\ensuremath{\mathsf{isNotNull}}\xspace}
\newcommand{\singleton}[1]{{\ensuremath{\tup{#1}}\xspace}}
\newcommand{\anstup}[2][\tau]{\ensuremath{{#1}[{#2}]}\xspace}
\newcommand{\eval}[2][\I]{\ensuremath{[#2]_{#1}}\xspace}
\newcommand{\sql}{\texttt{SQL}\xspace}
\newcommand{\fosql}{\texttt{SQL}\ensuremath{^{\mathcal F\!O}}\xspace}
\newtheorem{theorem}{Theorem} 
\newtheorem{lemma}{Lemma} 
\newtheorem{proposition}[theorem]{Proposition} 
\newdefinition{definition}{Definition} 
\newdefinition{example}{Example} 
\newproof{proof}{Proof}
\begin{document} 

\begin{frontmatter}

\title{Relational Algebra and Calculus \\ with SQL Null Values}
\author[1]{Enrico Franconi}
\ead{franconi@inf.unibz.it}
\author[1]{Sergio Tessaris}
\ead{tessaris@inf.unibz.it}
\affiliation[1]{organization={KRDB Research Centre for Knowledge and Data}, 
                addressline={Free University of Bozen-Bolzano}, 
                country={Italy}}

%%%%%%%%%%%%%%%%%%%%%%%%%%%%%%%%%%%%%%%%%%%%%%%%%%%%%%%%%%%%%%%%%%%%%%%%%%%%
\begin{abstract}
	The logic of nulls in databases has been subject of investigation since their introduction in Codd's Relational Model, which is the foundation of the SQL standard. 
	We show a logical characterisation of a first-order fragment of SQL with null values, by first focussing on a simple extension with null values of standard relational algebra, which captures exactly the SQL fragment, and then proposing two different domain relational calculi, in which the null value is a term of the language but it does not appear as an element of the semantic interpretation domain of the logics. In one calculus, a relation can be seen as a set of partial tuples, while in the other (equivalent) calculus, a relation is horizontally decomposed as a set of relations each one holding regular total tuples.
	We extend Codd's theorem by proving the equivalence of the relational algebra with both domain relational calculi in presence of SQL null values.
\end{abstract}
%%%%%%%%%%%%%%%%%%%%%%%%%%%%%%%%%%%%%%%%%%%%%%%%%%%%%%%%%%%%%%%%%%%%%%%%%%%%

\end{frontmatter}

%%%%%%%%%%%%%%%%%%%%%%%%%%%%%%%%%%%%%%%%%%%%%%%%%%%%%%%%%%%%%%%%%%%%%%%%%%%%
%%% GIT manager
%\begin{center}
%\noindent
%\versionBox
%\end{center}
%%%%%%%%%%%%%%%%%%%%%%%%%%%%%%%%%%%%%%%%%%%%%%%%%%%%%%%%%%%%%%%%%%%%%%%%%%%%

%%%%%%%%%%%%%%%%%%%%%%%%%%%%%%%%%%%%%%%%%%%%%%%%%%%%%%%%%%%%%%%%%%%%%%%%%%%%%%%

\section{Relational Databases and SQL Null Values}
\label{sec:intro} 

This paper studies how missing information is represented and handled in the SQL relational database standard by means of null values. Since their inception, SQL null values have been at the centre of long discussions about their real meaning and their formal semantics -- good summaries of the discussions and critiques can be found in \citep{grant:2008,codd_missing_1986,rubinson_ontological_2014}. The SQL standard committee itself discussed whether to change the semantics of SQL null value~\citep{ANSI-SQL-1975,cannan_proposal_1987}. Our interest is not to propose alternative ways to represent null values and missing information in relational databases, but to exactly characterise SQL null values in the way they are defined in the SQL standard. Most of the theoretical research on null values in relational databases has been focused on different notions of semantics which diverge from the behaviour of null values in SQL. In spite of the fact that these works have their merits and provide a well founded characterisation of incomplete information in databases, they don't provide an explanation of null values according to the SQL standard.
As a matter of fact, to the best of our knowledge, while there has been some recent formalisation of a relational algebra and calculus dealing with SQL null values, there has been no attempt yet to formalise a logic extending the standard domain relational calculus with SQL null values, where the null value acts a syntactic marker for missing information in the language and therefore it does not appear as an element of the semantic interpretation domain.
In this paper, we want to understand the exact model-theoretic semantics of null values in SQL, and characterise it by an extended relational algebra and domain calculus, just like the null-free well-behaving core fragment of SQL has been characterised by standard relational algebra and domain calculus. 

To start with this endeavour, let's consider the following example SQL database as a ``black box'' and let's check its behaviour with respect to some SQL queries. The database contains a table \texttt{r} having a tuple with a null value (denoted by the symbol \nullvalue here):

\vspace{1ex}\noindent
\begin{minipage}{.075\textwidth}
~
\end{minipage}
\begin{minipage}{.6\textwidth}
{\footnotesize
\begin{verbatim}
CREATE TABLE r ( c1 TEXT NOT NULL, c2 TEXT NULL );
INSERT INTO r (c1, c2) VALUES ('a', 'a'), ('b', NULL);
\end{verbatim}
}
\end{minipage}
\begin{minipage}{.3\textwidth} 
{\footnotesize
\hfill
\stackanchor{\texttt{r}~~~~~~~~~~~~}{
	$ 
	\begin{array}{|@{\hspace{1ex}}wc{1em}@{\hspace{1ex}}|@{\hspace{1ex}}wc{1em}@{\hspace{1ex}}|}
	\hline
		\texttt{c1}          & \texttt{c2}          \\ 
		\hline
		\texttt{a}          & \texttt{a}          \\ 
		\texttt{b}          & \nullvalue          \\ 
		\hline
	\end{array}$}
}
\end{minipage}

\vspace{1ex}
Let's first consider the \emph{self-join} SQL query asking for the tuples in \texttt{r} with each column's value equal to itself.  Our intuition suggests that this query should behave as the identity query; however this is not the case in SQL:

\vspace{1ex}\noindent
\begin{minipage}{.075\textwidth}
\hspace{1em}(1)
\end{minipage}
\begin{minipage}{.6\textwidth}
{\footnotesize
\begin{verbatim}
SELECT c1, c2 FROM r WHERE c1 = c1 AND c2 = c2;
\end{verbatim}
}
\end{minipage}
\begin{minipage}{.3\textwidth}
\hfill 
{\footnotesize
	$\stackanchor{\text{\tiny{SQL}}}{\text{\tiny{semantics}}}\!\dashrightarrow\  
	\begin{array}{|@{\hspace{1ex}}wc{1em}@{\hspace{1ex}}|@{\hspace{1ex}}wc{1em}@{\hspace{1ex}}|}
	\hline
		\texttt{c1}          & \texttt{c2}          \\ 
		\hline
		\texttt{a}          & \texttt{a}          \\ 
		\hline
	\end{array}$
}
\end{minipage}

\vspace{1ex}
\noindent
The SQL query (1) above does not yield the table \texttt{r} itself: it returns just the tuples not containing a null value. If null values were treated as standard database constants, the direct translation in the standard relational algebra \RA of the above SQL query would be equivalent to the \emph{identity} expression for \texttt{r}: $\select{c1=c1}{\select{c2=c2}{r}}$, giving as an answer the table \texttt{r} itself, namely the tuples $\{\langle a,b\rangle,\langle b,\nullvalue\rangle\}$.

To explain and characterise the different behaviours of null values in databases, three distinct semantics have been proposed in the literature -- see, e.g., the seminal paper by \citet{Zaniolo:84}. 

\begin{enumerate}[a)]
	\item In the \emph{unknown} (also called \emph{existential}) semantics for null values, a null value denotes an existing but unknown value -- for example, in a \texttt{person} table the tuples corresponding to married people for which the identity of the spouse is unknown, should have an \emph{unknown} null value in the column \texttt{spouse}. The same \emph{unknown} semantics is given to \emph{Codd tables} \citep{Imielinski:1984} and to \emph{RDF blank nodes} \citep{franconi:et:al:rdf:05}.
	\item In the \emph{nonexistent} (also called \emph{inapplicable}) semantics for null values, the presence of a null value states the fact that no value exists whatsoever or, in other words, that the property represented by a column is \emph{inapplicable} for the tuple in which the null value appears. For example, in a \texttt{person} table the tuples corresponding to unmarried people should have a \emph{nonexistent} null value in the column \texttt{spouse}. A formal account of inapplicable nulls has been given by \citet{lerat:lipski:86}.
	\item In the \emph{no information} semantics for null values, null values are interpreted as having either inapplicable semantics or unknown semantics -- namely, it is not specified which of the two meanings can be associated to the null value. A formal account of no information semantics for null values has been given by, e.g., \citet{kohler_possible_2016}.
\end{enumerate}

Let's explore now the different behaviour of queries under the three different semantics. Under the \emph{unknown} semantics of null values the answer to the query (1) above would be different from the SQL answer, and indeed it would behave as the identity query, since the null value states the existence of a value without stating exactly which one:

\vspace{1ex}\noindent
\begin{minipage}{.075\textwidth}
~
\end{minipage}
\begin{minipage}{.6\textwidth}
{\footnotesize
\begin{verbatim}
SELECT c1, c2 FROM r WHERE c1 = c1 AND c2 = c2;
\end{verbatim}
}
\end{minipage}
\begin{minipage}{.3\textwidth}
\hfill
{\footnotesize
	$\stackanchor{\text{\tiny{unknown}}}{\text{\tiny{semantics}}}\!\dashrightarrow\  
	\begin{array}{|@{\hspace{1ex}}wc{1em}@{\hspace{1ex}}|@{\hspace{1ex}}wc{1em}@{\hspace{1ex}}|}
	\hline
		\texttt{c1}          & \texttt{c2}          \\ 
		\hline
		\texttt{a}          & \texttt{a}          \\ 
		\texttt{b}          & \nullvalue          \\ 
		\hline
	\end{array}$
}
\end{minipage}

\vspace{1ex}
\noindent
Let's try now a SQL query asking for all the tuples having a null value in column \texttt{c1}: 

\vspace{1ex}\noindent
\begin{minipage}{.075\textwidth}
\hspace{1em}(2)
\end{minipage}
\begin{minipage}{.6\textwidth}
{\footnotesize
\begin{verbatim}
SELECT c1, c2 FROM r WHERE c1 = NULL;
\end{verbatim}
}
\end{minipage}
\begin{minipage}{.3\textwidth}
\hfill
{\footnotesize
	$\stackanchor{\text{\tiny{SQL}}}{\text{\tiny{semantics}}}\!\dashrightarrow\  
	\begin{tabular}{|@{\hspace{1ex}}wc{1em}@{\hspace{1ex}}|@{\hspace{1ex}}wc{1em}@{\hspace{1ex}}|}
	\hline
		\texttt{c1}          & \texttt{c2}          \\ 
		\hline
		~       &   ~       \\
		\hline
	\end{tabular}$
}
\end{minipage}

\vspace{1ex} 
\noindent
 (note that we would get the same result by replacing the \texttt{WHERE} clause with ``\texttt{WHERE c1 IS NULL}''). This query yields in SQL the empty answer. However, if the null value in the query is interpreted with \emph{unknown} semantics -- meaning that there is a value but I don't know exactly which one -- then the above query (2) should behave as the identity query. Indeed, the \emph{unknown} null value in the query can be understood as matching any of the active domain values:

\vspace{1ex}\noindent
\begin{minipage}{.075\textwidth}
~
\end{minipage}
\begin{minipage}{.6\textwidth}
{\footnotesize
\begin{verbatim}
SELECT c1, c2 FROM r WHERE c1 = NULL;
\end{verbatim}
}
\end{minipage}
\begin{minipage}{.3\textwidth}
\hfill
{\footnotesize
	$\stackanchor{\text{\tiny{unknown}}}{\text{\tiny{semantics}}}\!\dashrightarrow\  
	\begin{array}{|@{\hspace{1ex}}wc{1em}@{\hspace{1ex}}|@{\hspace{1ex}}wc{1em}@{\hspace{1ex}}|}
	\hline
		\texttt{c1}          & \texttt{c2}          \\ 
		\hline
		\texttt{a}          & \texttt{a}          \\ 
		\texttt{b}          & \nullvalue          \\ 
		\hline
	\end{array}$
}
\end{minipage}

\vspace{1ex}
\noindent We can conclude from the behaviour of SQL queries (1) and (2) that SQL null values do not have an \emph{unknown} semantics.

\vspace{1ex}
On the other hand, the \emph{inapplicable} semantics for null values would informally explain the behaviour of the SQL queries (1) and (2): if a null value points out a non existing value, then a join of a null value with any actual value or a non existing value would fail, and a null value in a query would not match any actual value or a non existing value. Indeed, queries (1) and (2) under \emph{inapplicable} semantics would return the same answer as in SQL:

\vspace{1ex}\noindent
\begin{minipage}{.075\textwidth}
~
\end{minipage}
\begin{minipage}{.6\textwidth}
{\footnotesize
\begin{verbatim}
SELECT c1, c2 FROM r WHERE c1 = c1 AND c2 = c2;
\end{verbatim}
}
\end{minipage}
\begin{minipage}{.3\textwidth}
\hfill 
{\footnotesize
	$\stackanchor{\text{\tiny{inapplicable}}}{\text{\tiny{semantics}}}\!\dashrightarrow\  
	\begin{array}{|@{\hspace{1ex}}wc{1em}@{\hspace{1ex}}|@{\hspace{1ex}}wc{1em}@{\hspace{1ex}}|}
	\hline
		\texttt{c1}          & \texttt{c2}          \\ 
		\hline
		\texttt{a}          & \texttt{a}          \\ 
		\hline
	\end{array}$
}
\end{minipage}

\vspace{1ex}\noindent
\begin{minipage}{.075\textwidth}
~
\end{minipage}
\begin{minipage}{.6\textwidth}
{\footnotesize
\begin{verbatim}
SELECT c1, c2 FROM r WHERE c1 = NULL;
\end{verbatim}
}
\end{minipage}
\begin{minipage}{.3\textwidth}
\hfill
{\footnotesize
	$\stackanchor{\text{\tiny{inapplicable}}}{\text{\tiny{semantics}}}\!\dashrightarrow\  
	\begin{array}{|@{\hspace{1ex}}wc{1em}@{\hspace{1ex}}|@{\hspace{1ex}}wc{1em}@{\hspace{1ex}}|}
	\hline
		\texttt{c1}          & \texttt{c2}          \\ 
		\hline
		~          & ~         \\ 
		\hline
	\end{array}$
}
\end{minipage}

\vspace{1ex}
\noindent
But let's try now the SQL query (3), asking for the \texttt{c2} column of the table \texttt{r}: 

\vspace{1ex}\noindent
\begin{minipage}{.075\textwidth}
\hspace{1em}(3)
\end{minipage}
\begin{minipage}{.6\textwidth}
{\footnotesize
\begin{verbatim}
SELECT c2 FROM r;
\end{verbatim}
}
\end{minipage}
\begin{minipage}{.3\textwidth}
\hfill
{\footnotesize
	$\stackanchor{\text{\tiny{SQL}}}{\text{\tiny{semantics}}}\!\dashrightarrow\  
	\begin{array}{|@{\hspace{1ex}}wc{1em}@{\hspace{1ex}}|}
	\hline
		\texttt{c2}         \\ 
		\hline
		\texttt{a}         \\ 
		\nullvalue         \\ 
		\hline
	\end{array}$
}
\end{minipage}

\vspace{1ex}
\noindent
According to the \emph{inapplicable} semantics for null values, the first tuple in the answer is strictly \emph{more informative} than the second one~\citep{lien:82,lerat:lipski:86,roth_null_1989}. The first tuple is said to \emph{subsume} by the second tuple. In other words, there can not be a tuple with an inapplicable null value if in the same table there is an equal tuple with an actual value instead of the  null value.
As a matter of fact, according to the inapplicable semantics for null values, a minimality principle based on subsumption applies to the outcome of each algebraic operator~\citep{lerat:lipski:86}, and the result of query (3) should be minimised as follows:

\vspace{1ex}\noindent
\begin{minipage}{.075\textwidth}
~
\end{minipage}
\begin{minipage}{.6\textwidth}
{\footnotesize
\begin{verbatim}
SELECT c2 FROM r;
\end{verbatim}
}
\end{minipage}
\begin{minipage}{.3\textwidth}
\hfill
{\footnotesize
	$\stackanchor{\text{\tiny{inapplicable}}}{\text{\tiny{semantics}}}\!\dashrightarrow\  
	\begin{array}{|@{\hspace{1ex}}wc{1em}@{\hspace{1ex}}|}
	\hline
		\texttt{c2}         \\ 
		\hline
		\texttt{a}         \\ 
		\hline
	\end{array}$
}
\end{minipage}

\vspace{1ex}
\noindent
The peculiar semantics of null values with inapplicable semantics influences also how constraints are validated over databases with null values, e.g., as discussed by \citet{atzeni_functional_1986} and \citet{hartmann_implication_2012}.\\
Moreover, if we consider the SQL query (4) below, asking for the \texttt{c2} column of the table \texttt{r} restricted to tuples with a non-null value in column \texttt{c2}:

\vspace{1ex}\noindent
\begin{minipage}{.075\textwidth}
\hspace{1em}(4)
\end{minipage}
\begin{minipage}{.6\textwidth}
{\footnotesize
\begin{verbatim}
SELECT c2 FROM r WHERE c2 = c2;
\end{verbatim}
}
\end{minipage}
\begin{minipage}{.3\textwidth}
\hfill
{\footnotesize
	$\stackanchor{\text{\tiny{SQL}}}{\text{\tiny{semantics}}}\!\dashrightarrow\  
	\begin{array}{|@{\hspace{1ex}}wc{1em}@{\hspace{1ex}}|}
	\hline
		\texttt{c2}         \\ 
		\hline
		\texttt{a}         \\ 
		\hline
	\end{array}$
}
\end{minipage}

\vspace{1ex}
\noindent
we observe that, for any instance of the table \texttt{r} containing at least an actual non-null value in column \texttt{c2}, the query (3) and the query (4) are equivalent under the \emph{inapplicable} semantics, namely, query (3) and query (4) return the same answer. As we have seen, this is clearly not true with the SQL semantics for null values.

\noindent
We can conclude from the behaviour of SQL queries (3) and (4) that SQL null values do not have an \emph{inapplicable} semantics.

\vspace{1ex}
In the \emph{no information} semantics a null value can be understood as matching either any of the active domain values or a non existing value. It is easy to see that under no information semantics the query (2) we have seen before, asking for all the tuples having a null value in column \texttt{c1}, should behave, unlike SQL, as the identity query:

\vspace{1ex}\noindent
\begin{minipage}{.075\textwidth}
~
\end{minipage}
\begin{minipage}{.6\textwidth}
{\footnotesize
\begin{verbatim}
SELECT c1, c2 FROM r WHERE c1 = NULL;
\end{verbatim}
}
\end{minipage}
\begin{minipage}{.3\textwidth}
\hfill
{\footnotesize
	$\stackanchor{\text{\tiny{no information}}}{\text{\tiny{semantics}}}\!\dashrightarrow\  
	\begin{array}{|@{\hspace{1ex}}wc{1em}@{\hspace{1ex}}|@{\hspace{1ex}}wc{1em}@{\hspace{1ex}}|}
	\hline
		\texttt{c1}          & \texttt{c2}          \\ 
		\hline
		\texttt{a}          & \texttt{a}          \\ 
		\texttt{b}          & \nullvalue          \\ 
		\hline
	\end{array}$
}
\end{minipage}

\vspace{1ex}
\noindent
We can conclude from the behaviour of query (2) that SQL null values do not have a \emph{no information} semantics. 

\medskip

All these examples show that SQL null values do not have any of the above-mentioned three semantics for null values.

Informally, the above behaviour of SQL with null values could be explained by the fact that an SQL null value is never equal (or not equal) to anything, including other null values or even itself. How can we formally capture this behaviour in an algebra extending the standard relational algebra, and prove that it corresponds exactly to the SQL behaviour? How can we find an adequate model theory and a domain calculus for it?

In order to specify the semantics of SQL with null values, we first define a relational algebra -- called \nRA, see~\citep{franconi:tessaris:null:a:12,franconi:tessaris:null:b:12} -- specifying directly the \emph{behaviour} of SQL with null values. In \nRA, the null value is an actual value, and it is part of the domain of interpretation of the algebra. Therefore, a null value, as any other value, may appear explicitly in a database representation within the algebra, just like null values appear explicitly in SQL databases. We prove the equivalence of the fragment of \nRA without zero-ary relations with the \fosql first-order fragment of SQL, in which we do not consider aggregates and bags -- see Figure~\ref{fig:fosql}. \nRA is a simple extension of the standard relational algebra, which is well known to capture \fosql without null values. A similar result, extended with bag semantics, has been found recently by~\citep{guagliardo:libkin:17,libkin_peterfreund_2020}, albeit with a restriction which disallows the introduction of constants in the language and without zero-ary relations. 

A relational calculus, instead of specifying directly the behaviour of SQL, relies on a \emph{declarative} logic-based approach in order to specify the data of interest, without specifying how to obtain it. When looking for a relational calculus equivalent to \nRA, we have imposed the requirement that its model theory should not include the null value in its domain of interpretation: this is exactly what it is expected since the null value is just a syntactic marker representing a \textit{missing} value and it should not correspond to some special ``null'' element in the data. It would be meaningless for the data in the domain of interest to include an actual value labelled ``null'', since this would not identify any entity in the domain. Depending on the semantics given to null values (e.g., unknown or inapplicable), a null syntactic marker in the language would correspond to some actual value in the data or to the fact that there is not any value at all for the property.

The main result of our work is in the definition and analysis of two domain relational calculi, both proved to be equivalent to \nRA, satisfying our requirement above. The first one, called \FOLe, is just like the standard domain relational calculus with an additional syntactic term in the language representing the null marker, and where tuples are interpreted as \emph{partial} function from attributes to values, as opposed to \emph{total} functions like in standard domain relational calculus. The second one, the \textit{Decomposed Relational Calculus with Null Values}, is just like the standard domain relational calculus with an extended relational signature including the \emph{decomposition} of all its predicate symbols: the signature includes, for each predicate symbol $R$ of arity $n$ and for each subset of positional attributes $A\subseteq \{1,\cdots , n\}$, the predicate $\subR[R]_A$ of arity $|A|$. The decomposed signature corresponds to the \emph{horizontal decomposition} of each relation \citep{date:darwen:2010}, so that the language does not need to have an explicit marker for the null value.

The elegance and the simplicity of these two equivalent domain relational calculi -- both equivalent to \nRA and therefore able to capture \fosql -- may provide two alternative intuitive explanations of SQL null values in \fosql\!\!. One could see a SQL table either as a set of \emph{partial tuples}, or as a set of \emph{horizontally decomposed tables} each one holding regular total tuples. In the former case, we need a null marker to represent the partiality of a tuple, in the latter case decomposed tables hold partial tuples explicitly without using a null marker. One may argue that the presence of a SQL null value in a tuple could be an indication of bad modelling of the domain: according to this view, a good model should use an explicit decomposed table for that tuple without the null column, capturing the fact that the missing column is inapplicable for the tuple.

\medskip
The paper is organised as follows. We first introduce the \fosql first order fragment of SQL, and we show how it can be reduced to a non-standard variant of SQL  with classical two-valued logic semantics instead of the standard SQL three-valued logic semantics. Then, in Section~\ref{sec:preliminaries}, we discuss three isomorphic different representations of databases with null values, which will be used as the interpretation structures of the algebra and calculi developed later in the paper. In Section~\ref{sec:null_relational_algebra} we introduce the relational algebra with null values \nRA, as an obvious extension to the standard relational algebra \RA, and we show the expressive equivalence to \fosql of the fragment of \nRA without zero-ary relations. In Section~\ref{sec:null_relational_calculus} we introduce an extension to standard first-order logic \FOL, called \FOLe, which takes into account the possibility that some of the arguments of a relation might not exist. We then show that \FOLe is equally expressive in a strong sense to the standard first-order logic with a decomposed signature (Section~\ref{sub:logical_reconstruction}), and in Section~\ref{sec:queries} we study some of formal properties of the introduced calculi. In Section~\ref{sub:nra_and_nRC} we show our main result: \nRA is equally expressive to the domain independent fragment of \FOLe with the standard name assumption, a result which parallels in an elegant way the classical equivalence result by Codd for the standard relational algebra \RA and the  domain independent fragment of first-order logic \FOL with the standard name assumption. We conclude the paper with examples of usage of \fosql to represent standard constraints from SQL:99%, and with a discussion on how to interpret the formal results in order to understand the semantics of SQL null values
.

\medskip
This paper is a major revision and extension of our previous papers~\citep{franconi:tessaris:null:a:12,franconi:tessaris:null:b:12}. Sections~\ref{sec:fosql},~\ref{sec:queries}, and~\ref{sec:conclusions} are completely new, Section~\ref{sec:intro}, \ref{sec:null_relational_algebra}, and~\ref{sec:null_relational_calculus} have been rewritten, few imprecisions have been fixed, and detailed proofs have been added.

%%%%%%%%%%%%%%%%%%%%%%%%%%%%%%%%%%%%%%%%%%%%%%%%%%%%%%%%%%%%%%%%%%%%%%%%%%%%%%%

\section{The first-order fragment of SQL \fosql}\label{sec:fosql}

In order to deal with null values in the relational model, \citet{codd:tods-79} included the special \texttt{NULL} value in the domain and adopted a three-valued logic having a third truth value \emph{unknown} together with \emph{true} and \emph{false}. The comparison expressions in Codd's algebra are evaluated to \emph{unknown} if they involve a null value, while in set operations, tuples otherwise identical but containing null values are considered to be different. SQL uses this three-valued logic for the evaluation of \texttt{WHERE} clauses -- see, e.g., \citep{Date:2003}. 

In this Section, we study the \fosql fragment of \sql, as defined in Figure~\ref{fig:fosql}. \fosql is a simple bags-free (enforced by ``\texttt{SELECT DISTINCT}'') and aggregation-free fragment of SQL.

%%%%%%%%%%%%%%%%%%%%%%%%%%%%%%%%%%%%%%%%%%%%%%%%%%%%%%%%%%%%%%%%%%%%%%%%%%%%%%%%%%%
\lstset{basicstyle=\ttfamily\footnotesize}
\newsavebox{\fosqlbox}
\begin{lrbox}{\fosqlbox}
\begin{lstlisting}
create-table := CREATE TABLE table 
                 ( { column { TEXT | INTEGER } [ NOT ] NULL , }+
                   { CONSTRAINT constraint-name 
                                CHECK (simple-condition) , }* ;

query := SELECT DISTINCT { expression [ AS column ] }+ 
         FROM { table [ AS table ] | (query) AS table }+ 
         WHERE condition ;
         | query { UNION | INTERSECT | EXCEPT } query ;

simple-condition := simple-comparison  
                    | (simple-condition) 
                    | NOT simple-condition 
                    | simple-condition { AND | OR } simple-condition

condition := atomic-condition 
             | (condition) 
             | NOT condition 
             | condition { AND | OR } condition

atomic-condition := simple-comparison
                    | [ NOT ] EXISTS query
                    | expression-tuple [ NOT ] IN 
                          ({ query | expression-tuple })
                  
simple-comparison := expression-tuple { = | != } expression-tuple
                     | expression IS [ NOT ] NULL

expression-tuple := expression | ({ expression }+)

expression := table.column | column | value

value := 'text' | integer | NULL
\end{lstlisting}
\end{lrbox}

\begin{figure}
{\footnotesize
  \fbox{\begin{minipage}{\textwidth}\usebox{\fosqlbox}\end{minipage}}
}
\caption{The syntax of \fosql}\label{fig:fosql}
\end{figure}

%%%%%%%%%%%%%%%%%%%%%%%%%%%%%%%%%%%%%%%%%%%%%%%%%%%%%%%%%%%%%%%%%%%%%%%%%%%%%%%%%%%

We show how any \fosql query can be transformed into an equivalent query with classical two-valued logic semantics instead of the standard SQL three-valued logic semantics, with the null value being just an additional domain value, and simple comparisons behaving in a special way. This transformation will be crucial to identify the algebra with null values characterising \fosql.
This result, which we first stated in~\citep{franconi:tessaris:null:a:12,franconi:tessaris:null:b:12} for the fragment of \fosql with the ``\texttt{WHERE}'' condition  composed only by simple comparisons, has been recently shown in a different way in~\citep{guagliardo:libkin:17,ricciotti_formalization_2020} for slightly different fragments of \fosql but generalised with bag semantics. We show in this Section our alternative proof.

The impact of null values in \fosql is only in the evaluation of the truth value of the condition in the ``\texttt{WHERE}'' clause, which is evaluated under three-valued (\textit{true}, \textit{false}, and \textit{unknown}) logic semantics (see Figure~\ref{fig:3vl}).
In absence of null values, the condition in the ``\texttt{WHERE}'' clause never evaluates to the unknown value, and the three-valued logic (3VL) semantics coincides with the classical two-valued logic (2VL) semantics. That's why we can reduce null-free \fosql into standard relational algebra and calculus.
Note that the constraint ``\texttt{CHECK (simple-condition)}'' on a table \texttt{r} is satisﬁed, by definition (see \citep{TuGe:VLDB:01a}), if and only if the condition ``\texttt{NOT EXISTS (SELECT 'fail' FROM r WHERE NOT (simple-condition))}'' yields \textit{true}, and so also in this case the impact of null values is only in the evaluation of the truth value of the condition in the ``\texttt{WHERE}'' clause.

\begin{figure}
\begin{center}
\includegraphics[scale=0.2]{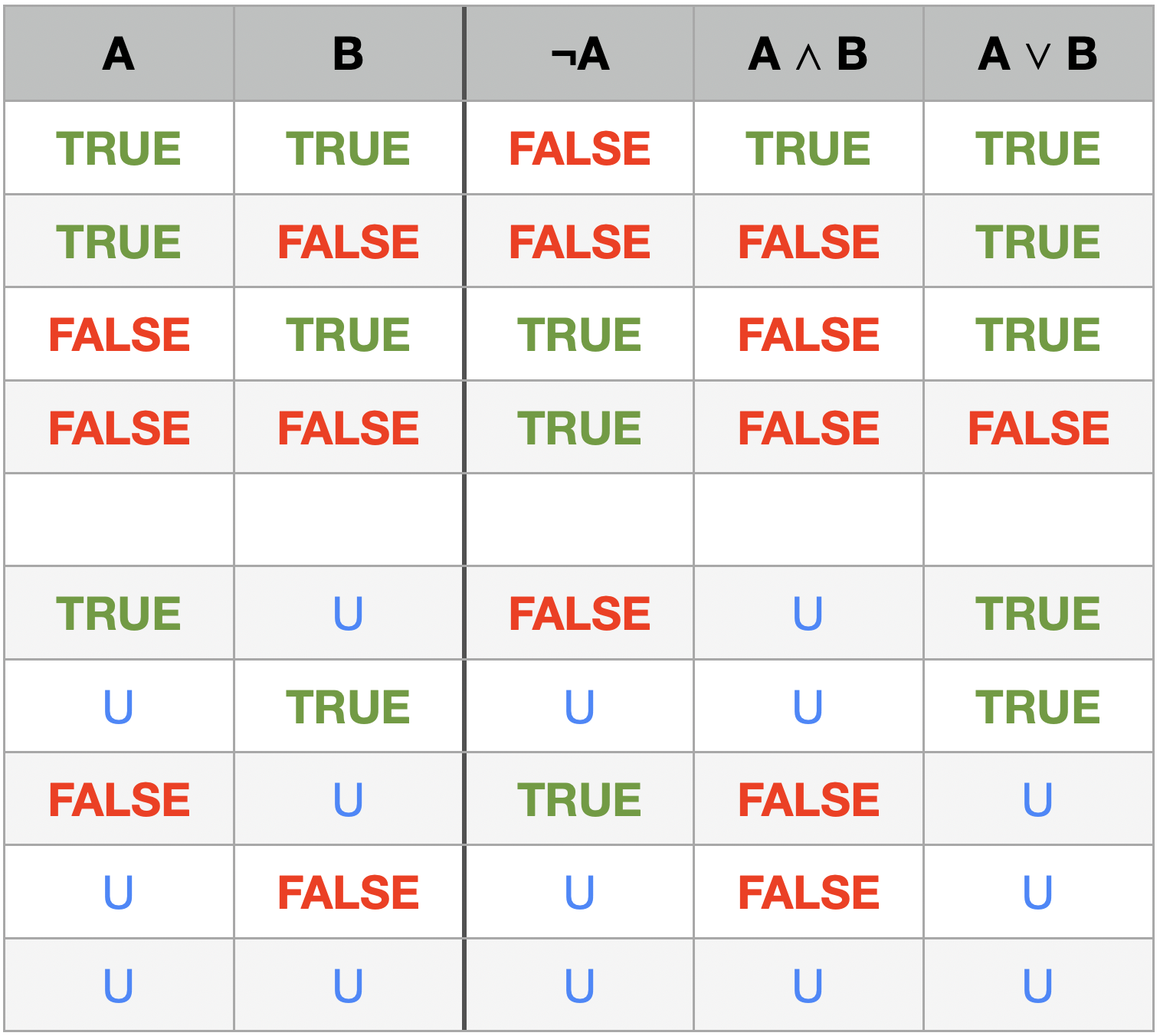}	
\end{center}
\caption{The three-valued (\textit{true}, \textit{false}, and \textit{unknown}) logic semantics}\label{fig:3vl}\end{figure}

Now, we are going to focus on the evaluation of the condition in the ``\texttt{WHERE}'' clause. 

\begin{definition} [Equality ``='' under 3VL semantics]
Equality ``='' between two values evaluates under 3VL semantics to:
\begin{itemize}
	\item \textit{true} if the two values are equal and each is different from \texttt{NULL};  
	\item \textit{false} if the two values are not equal and each is different from \texttt{NULL};
	\item \textit{unknown} if at least one value is \texttt{NULL}.\qed
\end{itemize}
Equality ``='' between two tuples of values of the same arity evaluates under 3VL semantics to:
\begin{itemize}
	\item \textit{true} if the 3VL equality ``='' between each element of the left tuple and the corresponding element of the right tuple evaluates to \textit{true};
	\item \textit{false} if the 3VL equality ``='' between some element of the left tuple and the corresponding element of the right tuple evaluates to \textit{false};
	\item \textit{unknown} otherwise.\qed
\end{itemize}
The 3VL semantics negation of the ``\texttt{=}'' simple comparison is the ``\texttt{!=}'' simple comparison. \qed
\end{definition} 

\begin{definition} [``\texttt{IS NULL}'' simple comparison]
The ``\texttt{IS NULL}'' unary simple comparison evaluates under 3VL semantics to:
\begin{itemize}
	\item \textit{true} if the value is \texttt{NULL};
	\item \textit{false} otherwise.
\end{itemize}
The 3VL semantics negation of the ``\texttt{IS NULL}'' simple comparison is the ``\texttt{IS NOT NULL}'' unary simple comparison. \qed
\end{definition} 

\begin{definition} [``\texttt{EXISTS}'' atomic condition]
The ``\texttt{EXISTS}'' atomic condition evaluates under 3VL semantics to:
\begin{itemize}
	\item \textit{true} if the query returns a non empty set of tuples;
	\item \textit{false} otherwise.
\end{itemize}
The 3VL semantics negation of the ``\texttt{EXISTS}'' atomic condition is the ``\texttt{NOT EXISTS}'' atomic condition. \qed
\end{definition} 

\begin{definition} [``\texttt{IN}'' atomic condition]
The ``\texttt{IN}'' atomic condition evaluates under 3VL semantics to:
\begin{itemize}
	\item \textit{true} if the 3VL equality ``=''  between the left expression and some element of the right expression set or of the result set of the query evaluates to \textit{true};
	\item \textit{false} if the 3VL equality ``=''  between the left expression and each element of the right expression set or of the result set of the query evaluates to \textit{false};
	\item \textit{unknown} otherwise. 
\end{itemize}
The 3VL semantics negation of the ``\texttt{IN}'' atomic condition is the ``\texttt{NOT IN}'' atomic condition. \qed
\end{definition}

We observe that atomic conditions are closed under negation, so that an arbitrary level of negations in front of any atomic condition can be \textit{absorbed} in the usual way to obtain an atomic condition equivalent under 3VL semantics.

We now show that by normalising the condition in the ``\texttt{WHERE}'' clause, we can transform any \fosql query into an equivalent (under 3VL semantics) query where the condition does not contain any explicit negation.

\begin{lemma} [de Morgan with absorption]
de Morgan's laws are valid under 3VL semantics: it is always possible to normalise a Boolean combination of atomic conditions under 3VL semantics to an equivalent (under 3VL semantics) negation normal form (NNF), where (possibly multiple levels of) negations appear only in front of atomic conditions. Negations can be absorbed, in order to obtain an equivalent (under 3VL semantics) absorbed negation normal form (ANNF) as a negation-free combination of conjunctions and disjunctions of atomic conditions. The transformation into absorbed negation normal form increases only linearly the size of the original Boolean combination of atomic conditions.
\end{lemma}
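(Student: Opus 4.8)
The plan is to prove the four assertions of the lemma in turn: the 3VL validity of the laws that drive the normalisation, the negation-normal-form (NNF) transformation, the absorption step producing the ANNF, and finally the linear size bound. First I would establish that, for the Kleene three-valued interpretation of $\neg$, $\land$, $\lor$ used by SQL, the equivalences $\neg(\varphi \land \psi) \equiv \neg\varphi \lor \neg\psi$, $\neg(\varphi \lor \psi) \equiv \neg\varphi \land \neg\psi$, and $\neg\neg\varphi \equiv \varphi$ hold as \emph{3VL} equivalences, meaning the two sides evaluate to the same value among \textit{true}, \textit{false}, \textit{unknown} under every assignment — not merely that they agree on the true/false distinction. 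This is a finite check over the combinations of the three truth values; the only point beyond the classical two-valued case is the behaviour on \textit{unknown}, and it goes through precisely because $\neg$ sends \textit{unknown} to \textit{unknown} while $\land$ and $\lor$ are the min and max with respect to the order $\textit{false} < \textit{unknown} < \textit{true}$.

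Next I would describe the NNF transformation as a single polarity-tracking recursion $\mathrm{push}(\varphi, b)$, where the flag $b$ records whether $\varphi$ occurs under an even or odd number of negations. On a binary connective the recursion either keeps it or flips it by de Morgan, e.g. $\mathrm{push}(\varphi \land \psi, \text{odd}) = \mathrm{push}(\varphi, \text{odd}) \lor \mathrm{push}(\psi, \text{odd})$ and symmetrically for the other three cases; on a negation it flips the flag; and on an atomic condition it emits the atom with or without a single leading negation. Correctness — that $\mathrm{push}(\varphi, \text{even})$ is 3VL-equivalent to $\varphi$ — follows by structural induction, each inductive step being exactly one application of a law from the first part. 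The output carries negations only in front of atomic conditions, which is the required NNF.

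I would then invoke the observation already recorded before the lemma, that atomic conditions are closed under 3VL negation (``$=$''/``$!=$'', ``IS NULL''/``IS NOT NULL'', ``EXISTS''/``NOT EXISTS'', ``IN''/``NOT IN''), to replace each negated leaf $\neg\alpha$ by the 3VL-equivalent atomic condition. This removes all negations and yields the ANNF as a negation-free combination of $\land$ and $\lor$ over atomic conditions. For the size bound I would argue that the recursion is non-duplicating: unlike distributivity, it never copies a subformula, so each atomic condition of the input contributes exactly one leaf of the output, each binary connective of the input contributes exactly one binary connective of the output, and every input negation is absorbed into the polarity flag and re-emitted as at most one negation per leaf, which the final absorption then removes. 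Hence the size of the ANNF is bounded by a constant multiple of the size of the input.

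The main thing to get right is the combination of this last accounting with the genuinely \emph{three}-valued (rather than merely two-valued) validity of de Morgan: every rewriting step must preserve the \textit{unknown} value and not just the true/false outcome, since the lemma asserts equivalence under 3VL semantics. Once the truth-table verification of the first part is in place and the recursion is formulated with an explicit polarity flag, the remaining obstacles dissolve into routine structural induction for correctness and a straightforward leaf-and-connective count for the linear bound.
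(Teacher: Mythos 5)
Your proposal is correct and follows essentially the same route as the paper: the paper's proof is exactly the truth-table verification of the 3VL de Morgan (and double-negation) equivalences, combined with the observation, stated just before the lemma, that atomic conditions are closed under 3VL negation so negated leaves can be absorbed. Your polarity-tracking recursion and the explicit leaf-and-connective count merely make precise what the paper leaves implicit (the non-duplicating nature of the rewriting and hence the linear bound), so there is no substantive difference in approach.
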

\begin{proof}
Equivalence can be checked by using the 3VL truth tables:
\begin{center}\includegraphics[scale=0.2]{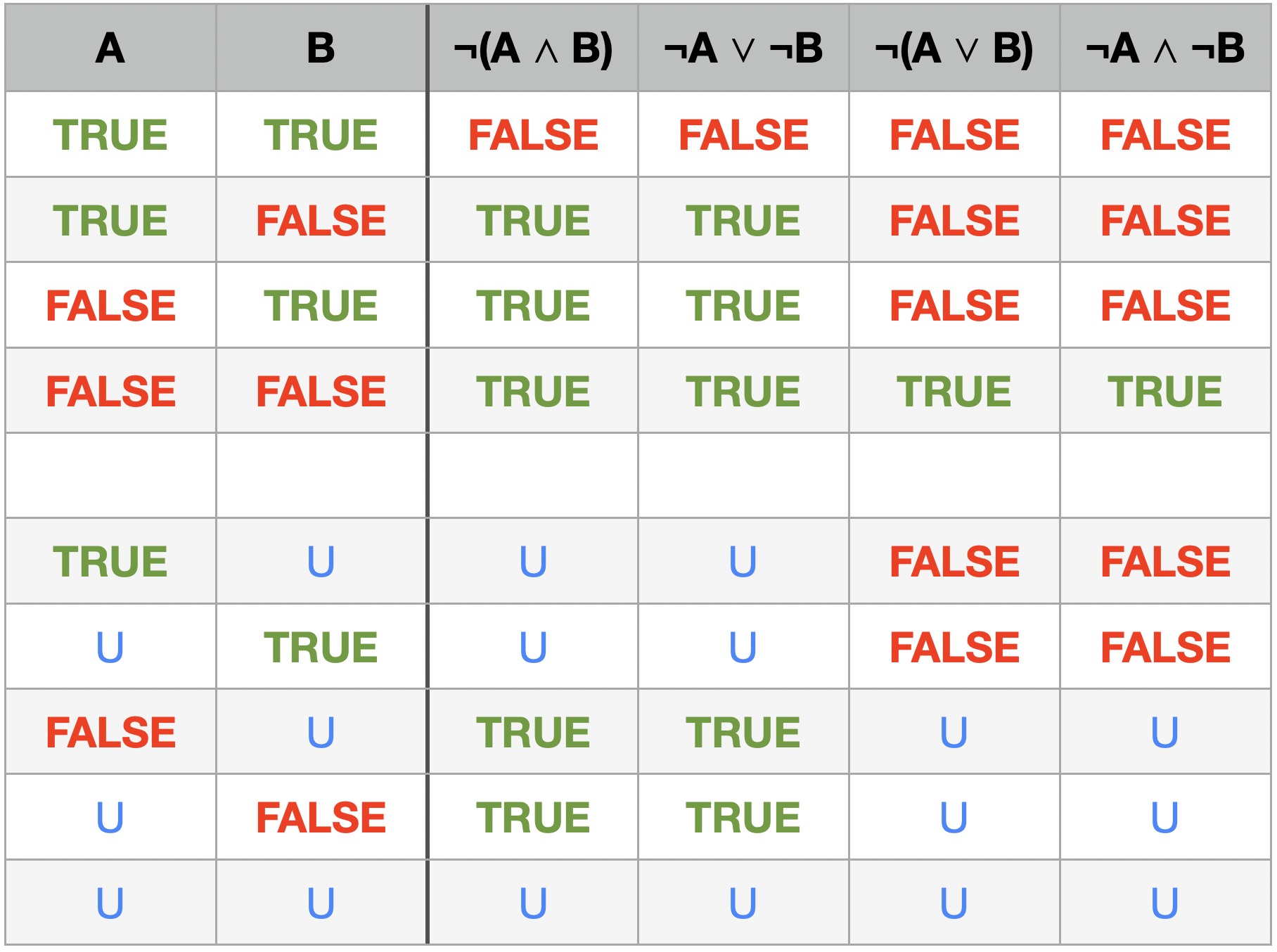}\qed\end{center}	
\end{proof} 

According to the \sql standard, the ``\texttt{SELECT}'' statement only considers rows for which the condition in the ``\texttt{WHERE}'' clause evaluates to \textit{true} under 3VL semantics. If the condition is in absorbed negation normal form, we show that we can evaluate the condition under classical 2VL semantics, with the proviso of evaluating each atomic condition to \textit{false} if it evaluates to \textit{unknown} under 3VL semantics -- we call these atomic conditions \textit{atomic conditions under classical 2VL semantics}.

\begin{definition} [Atomic conditions under classical 2VL semantics]~\\
	A (possibly absorbed) atomic condition under classical 2VL semantics evaluates to \textit{true} if and only if it evaluates to \textit{true} under 3VL semantics, and it evaluates to \textit{false} otherwise:
\begin{center}
\includegraphics[scale=0.2]{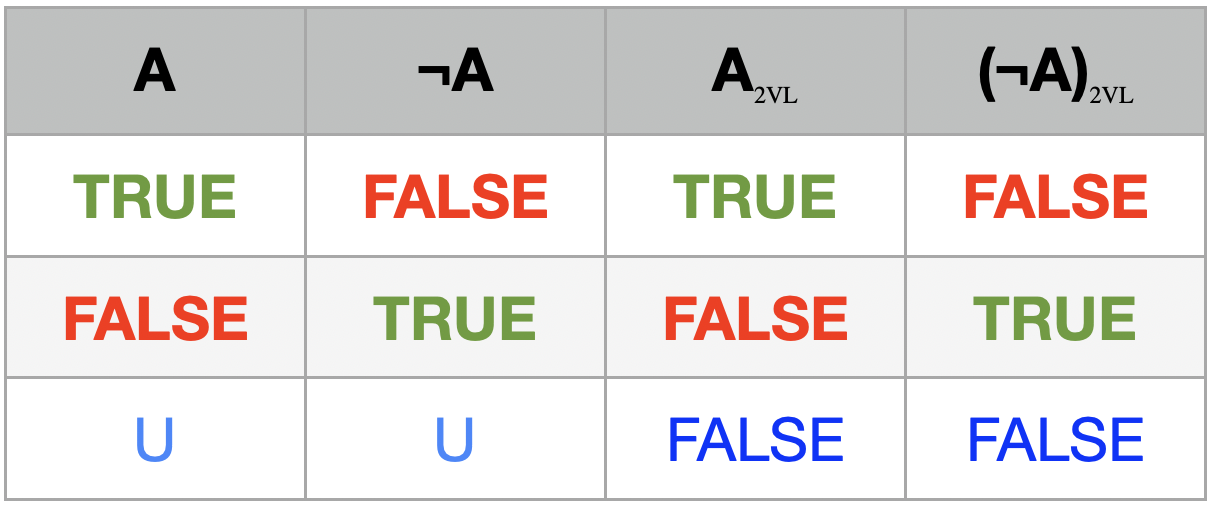}\hfill~
\end{center}	
Note that \textbf{\textsf{\small ($\neg$A)}}$_{\textrm{\tiny 2VL}}$ denotes the \textit{absorbed} negated atomic condition \textbf{\textsf{\small A}} under classical 2VL semantics.\qed
\end{definition}

\begin{lemma} [Equivalence of ``\texttt{WHERE}'' clause]
A Boolean combination of atomic conditions evaluates to \emph{true} under 3VL semantics
if and only if
the absorbed negation normal form of the Boolean combination of atomic conditions under classical 2VL semantics evaluates to \emph{true}.
\end{lemma}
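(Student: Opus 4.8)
The plan is to split the proof into two stages: a reduction to negation normal form supplied by the previous lemma, followed by a structural induction that compares the 3VL and 2VL evaluations on a negation-free formula. Write $\varphi$ for the given Boolean combination of atomic conditions. By the de Morgan with absorption lemma, $\varphi$ is 3VL-equivalent to its absorbed negation normal form $\varphi'$, which is a negation-free combination of $\wedge$ and $\vee$ over atomic conditions (each of which may itself be an absorbed negation such as ``\texttt{!=}'' or ``\texttt{IS NOT NULL}'', but is still an atomic condition). Hence $\varphi$ evaluates to \emph{true} under 3VL \iff $\varphi'$ does, and it remains only to prove that $\varphi'$ evaluates to \emph{true} under 3VL \iff the 2VL reading of $\varphi'$ (each atomic condition replaced by its version under classical 2VL semantics) evaluates to \emph{true} under 2VL.

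For this reduced claim I would induct on the structure of $\varphi'$. In the base case $\varphi'$ is a single atomic condition $A$; by the definition of atomic conditions under classical 2VL semantics, $A$ is 2VL-\emph{true} exactly when $A$ is 3VL-\emph{true}, which is the required equivalence. For the inductive step there are only two connectives to consider. If $\varphi' = \psi_1 \wedge \psi_2$, the 3VL conjunction truth table gives that $\psi_1 \wedge \psi_2$ is 3VL-\emph{true} \iff both $\psi_1$ and $\psi_2$ are 3VL-\emph{true}; by the induction hypothesis this holds \iff both of their 2VL readings are 2VL-\emph{true}, i.e. \iff the 2VL reading of the conjunction is \emph{true}. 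Symmetrically, if $\varphi' = \psi_1 \vee \psi_2$, the 3VL disjunction truth table gives that the disjunction is 3VL-\emph{true} \iff at least one disjunct is 3VL-\emph{true}, matching exactly the 2VL disjunction rule. Thus on every negation-free formula the 3VL-\emph{true} region coincides with the 2VL-\emph{true} region obtained by collapsing \emph{unknown} to \emph{false} at the atoms.

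The one point that must be made explicit — and the only genuine obstacle — is why the induction is confined to negation-free formulas. The collapse sending \emph{true}, \emph{false}, \emph{unknown} to \emph{true}, \emph{false}, \emph{false} respectively commutes with $\wedge$ and $\vee$ on their \emph{true} outputs, precisely because in the Kleene tables a conjunction (resp.\ disjunction) is \emph{true} exactly when the classical rule would declare it \emph{true} given which inputs are \emph{true}. It does \emph{not} commute with $\neg$: in 3VL the negation of \emph{unknown} is again \emph{unknown}, which the collapse would send to \emph{false}, whereas the 2VL reading would first collapse \emph{unknown} to \emph{false} and then negate it to \emph{true}. This is exactly why the preliminary appeal to the de Morgan lemma is indispensable: it pushes every negation down onto the atomic conditions, which are closed under 3VL negation, so that all Boolean structure above the atoms consists only of $\wedge$ and $\vee$ and the induction goes through. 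Without the absorption step the claimed equivalence would simply fail.
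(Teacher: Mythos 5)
Your proof is correct and follows essentially the same route as the paper: the paper's own proof is just the side-by-side comparison of the \emph{true} entries in the 3VL and 2VL truth tables for $\wedge$ and $\vee$, and your structural induction (base case from the definition of atomic conditions under 2VL semantics, inductive step from the coincidence of the \emph{true} rows of the Kleene and classical tables) is the explicit formalisation of exactly that comparison. Your added remark on why the collapse of \emph{unknown} to \emph{false} fails to commute with $\neg$, making the prior absorption lemma indispensable, is a point the paper leaves implicit and is worth having.
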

\begin{proof}
Equivalence can be checked by comparing the \textit{true} values in the truth table below on the left (corresponding to the definition of 3VL semantics of Figure~\ref{fig:3vl}) with the the \textit{true} values in the truth table on the right, representing the truth values of absorbed atomic conditions and of their Boolean combinations under classical 2VL semantics. 
\begin{center}
\includegraphics[scale=0.2]{3vl.png}~~~
\includegraphics[scale=0.2]{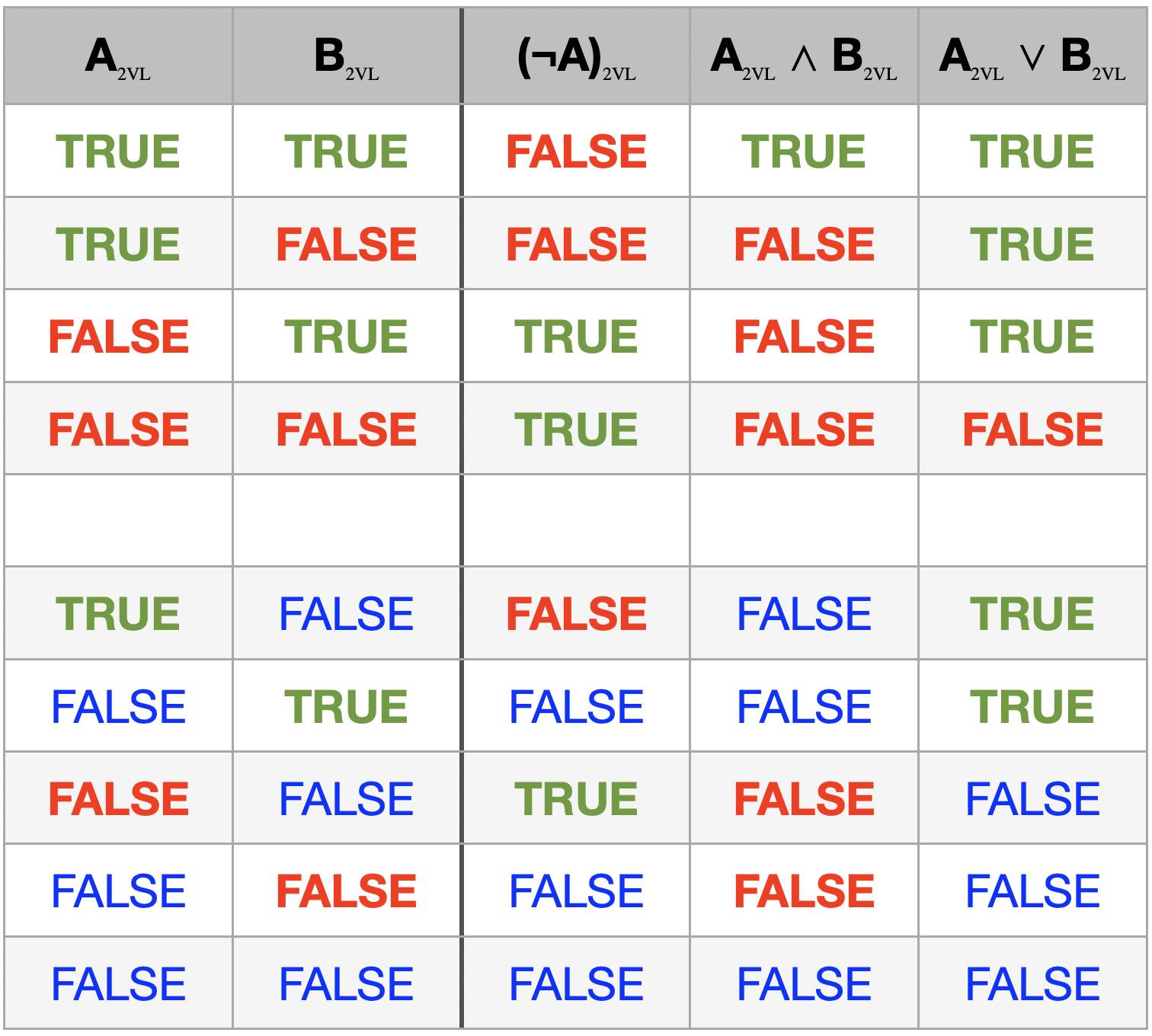}\qed
\end{center}	
\end{proof}

Let's see how we can encode atomic conditions under classical 2VL semantics, by just relying on the equality predicate ``\texttt{=$_{\textrm{\tiny 2VL}}$}\!'' and its negation ``\texttt{!=$_{\textrm{\tiny 2VL}}$}\!'' under classical 2VL semantics. Note that both ``\texttt{=$_{\textrm{\tiny 2VL}}$}\!'' and ``\texttt{!=$_{\textrm{\tiny 2VL}}$}\!'' behave just like standard equality and inequality but they fail when any of their arguments is the null value.
 
{\footnotesize
\begin{lstlisting}
e = f $\leadsto$
  e =$_{\textrm{\tiny 2VL}}$ f
\end{lstlisting}
\begin{lstlisting}
e != f $\leadsto$
  e !=$_{\textrm{\tiny 2VL}}$ f
\end{lstlisting}
\begin{lstlisting}
e IN (f$_1$ ... f$_n$) $\leadsto$
  e =$_{\textrm{\tiny 2VL}}$ f$_1$ OR ... OR e =$_{\textrm{\tiny 2VL}}$ f$_n$
\end{lstlisting}
\begin{lstlisting}
e NOT IN (f$_1$ ... f$_n$) $\leadsto$
  e !=$_{\textrm{\tiny 2VL}}$ f$_1$ AND ... AND e !=$_{\textrm{\tiny 2VL}}$ f$_n$
\end{lstlisting}
\begin{lstlisting}
e$_1$ ... e$_n$ IN (q) $\leadsto$
  EXISTS (SELECT DISTINCT v.1 ... v.n FROM q AS v WHERE 
           e$_1$ =$_{\textrm{\tiny 2VL}}$ v.1 AND ... AND e$_n$ =$_{\textrm{\tiny 2VL}}$ v.n)
\end{lstlisting}
\begin{lstlisting}
e$_1$ ... e$_n$ NOT IN (q)) $\leadsto$
  NOT EXISTS (SELECT DISTINCT v.1 ... v.n FROM q AS v WHERE 
               ((e$_1$ =$_{\textrm{\tiny 2VL}}$ v.1 OR e$_1$ IS NULL) OR v.1 IS NULL) 
               AND ... AND
               ((e$_n$ =$_{\textrm{\tiny 2VL}}$ v.n OR e$_n$ IS NULL) OR v.n IS NULL)) 
\end{lstlisting}}

The main theorem can now be stated, as a simple consequence of the two above lemmata and the provided encoding.

\begin{theorem} [2VL evaluation of \fosql] \label{th:2vl-fosql}
The evaluation of a \fosql query is the same as the \fosql query with the condition in the ``\!\texttt{WHERE}'' clause transformed to an absorbed negation normal form. The evaluation of the transformed \fosql query is the same as that query evaluated using 2VL semantics for Boolean operators in the ``\texttt{WHERE}'' clause, by considering the null value as any other domain value, and encoding the atomic conditions under classical 2VL semantics.
\end{theorem}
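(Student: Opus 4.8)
The plan is to prove the theorem by structural induction on the \fosql query, reducing it at each nesting level to the two lemmata already established plus a case check of the encoding table. The statement really bundles three claims: (i) replacing each \texttt{WHERE} condition by its absorbed negation normal form (ANNF) does not change the query result; (ii) the ANNF may then be evaluated with ordinary two-valued Boolean connectives, provided each atomic condition is read under classical 2VL semantics; and (iii) each such 2VL atomic condition can be rewritten using only the primitive comparisons $=_{\textrm{2VL}}$, $\ne_{\textrm{2VL}}$, \texttt{IS NULL} and \texttt{EXISTS}, in which the null value is treated as an ordinary domain element that merely makes $=_{\textrm{2VL}}$ and $\ne_{\textrm{2VL}}$ fail. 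Since \texttt{WHERE} clauses occur not only at the top level but also inside the subqueries appearing in the \texttt{FROM} clause and inside the \texttt{EXISTS}/\texttt{IN} atomic conditions, the argument must be carried out over the query nesting, with the induction hypothesis asserting that a transformed subquery returns exactly the same set of tuples as the original.

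First I would dispatch the compositional cases. For a query built with \texttt{UNION}/\texttt{INTERSECT}/\texttt{EXCEPT}, the result is a fixed set operation applied to the results of the immediate subqueries, so the claim follows directly from the induction hypothesis. For a \texttt{SELECT DISTINCT ... FROM ... WHERE} query I would first apply the induction hypothesis to every subquery occurring in the \texttt{FROM} clause, so that those relations are evaluated identically. The \sql standard retains a candidate row exactly when the \texttt{WHERE} condition evaluates to \emph{true} under 3VL semantics; hence the output is determined solely by the set of rows on which the condition is 3VL-\emph{true}. Claim (i) is then immediate from the Lemma ``de Morgan with absorption'', which guarantees that the ANNF is 3VL-equivalent to the original condition, so the two conditions select the same rows. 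Claim (ii) follows from the Lemma ``Equivalence of the \texttt{WHERE} clause'': since the ANNF is a negation-free combination of conjunctions and disjunctions of atomic conditions, its 3VL-\emph{truth} coincides with the 2VL-\emph{truth} of the same combination evaluated with the atomic conditions read under classical 2VL semantics.

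It remains to verify claim (iii), the encoding, by checking each line of the translation table against the 3VL definitions of the atomic conditions. For the base comparisons the 2VL reading is by definition $=_{\textrm{2VL}}$ and $\ne_{\textrm{2VL}}$; for \texttt{IN}/\texttt{NOT IN} against an explicit tuple list, the disjunctive and conjunctive expansions reproduce exactly the 3VL definition of \texttt{IN} once an \emph{unknown} outcome is collapsed to \emph{false}; and the \texttt{EXISTS} case is immediate, since \texttt{EXISTS} is already two-valued and, by the induction hypothesis, tests the same (unchanged) subquery result set. I expect the genuine obstacle to be \texttt{NOT IN} with a subquery: being the absorbed negation of \texttt{IN}, it must evaluate to \emph{true} precisely when the 3VL \texttt{IN} is 3VL-\emph{false}, that is, when the left tuple is 3VL-unequal to \emph{every} tuple returned by the subquery. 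Here one must check that the null-guarded disjuncts inside the wrapping \texttt{NOT EXISTS} --- each of the form ``the $i$-th components agree under $=_{\textrm{2VL}}$, or one of them is null'' --- capture exactly the condition that the 3VL tuple-equality is \emph{not} false, so that the \texttt{NOT EXISTS} asserts that no returned tuple is 3VL-equal-or-unknown to the left tuple. This is where SQL's notorious null-sensitive behaviour of \texttt{NOT IN} is reproduced, and verifying it --- together with confirming that the recursion feeds the correct result sets into the \texttt{EXISTS} and \texttt{IN} tests --- is where essentially all the content lies; the remaining lines are routine truth-table comparisons.
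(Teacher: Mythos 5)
Your proposal is correct and follows essentially the same route as the paper, which derives the theorem directly as a consequence of the two lemmata (de Morgan with absorption, and equivalence of the \texttt{WHERE} clause) together with a verification of the provided encoding of atomic conditions. The only difference is that you make explicit the structural induction over query nesting (subqueries in \texttt{FROM}, \texttt{EXISTS}, and \texttt{IN}) and the case analysis for \texttt{NOT IN} with a subquery, which the paper leaves implicit.
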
 

 Note that the results in this Section can be easily generalised to include more simple comparisons, like ``\texttt{<}'', ``\texttt{>}'', ``\texttt{$\leq$}'', ``\texttt{$\geq$}''.

\begin{example}
	The \fosql query:
{\footnotesize
\begin{lstlisting}
SELECT DISTINCT person.name
FROM person
WHERE NOT (person.passport != 'Italian' AND
           person.cityofbirth NOT IN 
             (SELECT DISTINCT city.name
              FROM city
              WHERE city.country = 'Italy'));
\end{lstlisting}}
\noindent
is transformed into the following query under classical 2VL semantics, according to Theorem~\ref{th:2vl-fosql}:
{\footnotesize
\begin{lstlisting}
SELECT DISTINCT person.name
FROM person
WHERE person.passport =$_{\textrm{\tiny 2VL}}$ 'Italian' OR
      EXISTS (SELECT DISTINCT italiancity.name
              FROM (SELECT DISTINCT city.name
                    FROM city
                    WHERE city.country =$_{\textrm{\tiny 2VL}}$ 'Italy') AS italiancity
              WHERE person.cityofbirth =$_{\textrm{\tiny 2VL}}$ italiancity.name);                
\end{lstlisting}}

\noindent
The above transformed query, which does not conform to standard SQL due to the special ``\texttt{=$_{\textrm{\tiny 2VL}}$}\!'' equality predicate, can be transformed into a standard SQL query by rewriting the occurrences of the ``\texttt{=$_{\textrm{\tiny 2VL}}$}\!'' equality predicate with the 3VL semantics ``\texttt{=}'' SQL equality predicate restricted to 2VL semantics via a ``\texttt{NOT NULL}'' check. For example, the 2VL condition \texttt{person.cityofbirth =$_{\textrm{\tiny 2VL}}$ italiancity.name} would be transformed as:

{\footnotesize
\begin{lstlisting}[escapechar=§]
person.cityofbirth =$_{\textrm{\tiny 2VL}}$ italiancity.name $~~\leadsto~~$
\end{lstlisting}
\vspace{-2ex}
\begin{lstlisting}[escapechar=§]
  person.cityofbirth = italiancity.name AND  
  person.cityofbirth IS$\!\!$ NOT$\!\!$ NULL AND  
  italiancity.name IS$\!\!$ NOT$\!\!$ NULL  §\qed
\end{lstlisting} 
}	
\end{example}

%%%%%%%%%%%%%%%%%%%%%%%%%%%%%%%%%%%%%%%%%%%%%%%%%%%%%%%%%%%%%%%%%%%%%%%%%%%%%%%

\section{Database Instances with Null Values}
\label{sec:preliminaries}

In this section we discuss some necessary formal preliminaries in order to properly define an algebra and a calculus equivalent to \fosql. We introduce the notions of tuple, relation, and database instance in presence of null values. To simplify the formalism we consider the unnamed (positional) perspective for the attributes of tuples: elements of a tuple are identified by their position within the tuple. 

Given a set of domain values $\Delta$, an $n$-tuple over $\Delta$ is defined for $n\geq 0$: 
if $n\geq 1$ then an $n$-tuple $\langle d_1,\cdots, d_n \rangle$ is the \emph{total} $n$-ary function $\{1,\cdots, n\}\mapsto\Delta\cup\set{\nullvalue}$, represented also as $\{1\mapsto d_1,\cdots, n\mapsto  d_n\}$ with $d_1,\cdots, d_n\in\Delta\cup\set{\nullvalue}$, 
while if $n=0$ then the zero-tuple $\langle\rangle$ is the \emph{constant} zero-ary function, represented also as \set{}. In a $n$-tuple with $n\geq 1$, the integer $i$ from 1 up to $n$ denotes the $i$-th \emph{position} of the element $d_i$ within the tuple: for example, the 2-tuple $\langle \texttt{b},\nullvalue \rangle$ with $b$ in position 1 and \nullvalue in position 2 is represented as $\{1\mapsto \texttt{b},2\mapsto \nullvalue\}$. 
In the following we may represent the set $\set{i\in\mathbb{N}\mid 1 \leq i \leq n}$ also as $[1\cdots n]$. 
A relation of arity $n$ is a set of $n$-tuples; if we want to specify that $n$ is the arity of a given relation $R$, we write the relation as $R/n$. The arity of a relation is also called \emph{degree}. Note that a relation of arity zero is either empty or it includes only the zero-tuple \set{}.
A relational schema $\R$ includes a set of relation symbols with their arities, a set of constant symbols $\C$, and the special term \nullvalue. A database instance \InRA{\I} includes a domain $\Delta\supseteq\C$, and it associates to each relation symbol $R$ of arity $n$ from the relational schema $\R$ a set of $n$-tuples $\InRA{\I}(R)$, to each constant symbol $c$ in $\C$ a domain value in $\Delta$ equal to $c$, and it associates to the \nullvalue term the special null value $\nullvalue\in\Delta$. 
As an example of a database instance \InRA{\I} consider Figure~\ref{fig:db-inst}(a).
 Database instances in SQL are represented as as \InRA{\I} databases instances, with an explicit representation of the null value in the domain of interpretation.

In our work we consider two alternative representations of a database instance \InRA{\I} where null values do not appear explicitly in the domain: \IFOLe{\I} and \IRA{\I}. We will show that they are isomorphic to \InRA{\I}. The representations share the same constant symbols, domain values (without \nullvalue), and mappings from constant symbols to domain values. The differences are in the way null values are encoded within a tuple. 

Compared with a database instance \InRA{\I}, a corresponding database instance \IFOLe{\I} differs only in the way it represents $n$-tuples: an $n$-tuple is a \emph{partial} function from integers from 1 up to $n$ into the set of domain values - the function is undefined exactly for those positional arguments which are otherwise defined and mapped to a null value in \InRA{\I}. 
%\\
As an example of a database instance \IFOLe{\I} consider Figure~\ref{fig:db-inst}(b).

\begin{figure}[t]% 
\fbox{%
\begin{minipage}{.98\textwidth}%
{%
\small%
\vspace{-1em}\begin{minipage}[b]{.31\textwidth}
\begin{center}
\begin{align*}
\texttt{r}/2:\ & \texttt{}
\begin{aligned}[t]
\{ &\{1\mapsto \texttt{a},2\mapsto \texttt{a}\}, \\
   & \{1\mapsto \texttt{b},2\mapsto \nullvalue\}\}
\\&\\&
\end{aligned}
\end{align*}
\\
%\smallskip\vspace{2ex}
	(a) instance \InRA{\I}
\end{center}
\end{minipage}
\begin{minipage}[b]{.31\textwidth}
\begin{center}
\begin{align*}
\texttt{r}/2 :\ &
\begin{aligned}[t]
\{&\{1\mapsto \texttt{a},2\mapsto \texttt{a}\}, \\
&\{1\mapsto \texttt{b}\}\}
\\&\\&
\end{aligned}
\end{align*}
\\
%\smallskip\vspace{2ex}
	(b) instance \IFOLe{\I}
\end{center}
\end{minipage}
\begin{minipage}[b]{.35\textwidth}
\begin{center}
\begin{align*}
\subR[\texttt{r}]_{\set{1,2}}/2 :\ & \{\{1\mapsto \texttt{a},2\mapsto \texttt{a}\}\}\\
\subR[\texttt{r}]_{\set{1}}/1 :\ & \{\{1\mapsto \texttt{b}\}\}\\ 
\subR[\texttt{r}]_{\set{2}}/1 :\ & \{\}\\
\subR[\texttt{r}]_{\set{}}/0 :\ & \{\}
\end{align*}
\\
%\smallskip
	(c) instance \IRA{\I}
\end{center}
\end{minipage}
}
\end{minipage}}
\caption{The three representations of the example database instance $\I$}%\vspace{-2ex}
\label{fig:db-inst}
\end{figure}

Compared with a database instance \InRA{\I}, a corresponding database instance \IRA{\I} differs in the way relation symbols are interpreted: a relation symbol of arity $n$ is associated to a set of null-free total tuples of dishomogeneous arities up to $n$.
Given a database instance \InRA{\I} defined over a relational schema $\R$,
the corresponding database instance \IRA{\I} is defined over the \emph{decomposed} relational schema $\subR$: for each relation symbol $R\in\R$ of arity $n$ and for each (possibly empty) subset of its positional arguments $A\subseteq [1\cdots n]$, the decomposed relational schema $\subR$ includes a predicate $\subR[R]_A$ with arity $|A|$.
The correspondence between \InRA{\I} and \IRA{\I} is based on the fact that each $|A|$-tuple in the relation $\IRA{\I}(\subR[R]_A)$ corresponds exactly to the $n$-tuple in $\InRA{\I}(R)$ having non-null values only in the positional arguments in $A$, with the same values and in the same relative positional order. This corresponds to the notion of lossless \emph{horizontal decomposition} in relational databases, and it has been advocated by~\citet{date:darwen:2010} as the right way to represent null values.
%\\
As an example of a database instance \IRA{\I} consider Figure~\ref{fig:db-inst}(c).

In absence of null values, the \InRA{\I} and \IFOLe{\I} representations of a database instance coincide, and they coincide also with the \IRA{\I} representation if we equate each $n$-ary $R$ relation symbol in \InRA{\I} and \IFOLe{\I} with its corresponding $\subR[R]_{[1\cdots n]}$ relation symbol in \IRA{\I}. Indeed, for every $n$-ary relation $R$, in absence of null values, $\InRA{\I}(R)=\IFOLe{\I}(R)=\IRA{\I}(\subR[R]_{[1\cdots n]})$ and $\IRA{\I}(\subR[R]_A)=\emptyset$ for every relation of arity $|A|<n$.
Given the discussed isomorphisms, in the following -- whenever the difference in the representation of null values is not ambiguous -- we will denote as $\I$ the database instance represented in any of the above three forms \InRA{\I}, \IFOLe{\I}, or \IRA{\I}.

The different representations of a database instance $\I$ with null values are used in different contexts. The \InRA{\I} representation is the direct representation of database instances in SQL and in our proposed relational algebra with SQL nulls values \nRA. The \IFOLe{\I} representation is the representation of database instances in our proposed domain calculus with SQL null values. This representation preserves the SQL relational schema, and tuples are \textit{partial} functions to the domain of interpretation, which does not include the null value: this is exactly what it is expected since the null value represents a \textit{missing} value and it should not correspond to any element of the domain. The \IRA{\I} representation  is the representation of database instances in our proposed \textit{decomposed} domain calculus with SQL null values. This representation decomposes the SQL relational schema, and tuples are \textit{total} functions to the domain of interpretation, which does not include the null value.

\section{Relational Algebra with Null Values}
\label{sec:null_relational_algebra}

%%%%%%%%%%%%%%%%%%%%%%%%%%%%%%%%%%%%%%%%%%%%%%%%%%%%%%%%%%%%%%%%%%%%%%%%%%%%%%%

\begin{figure}[t]
\fbox{%
\begin{minipage}{\textwidth}
\noindent
\textbf{Atomic relation -} \graybox{${R}$} \textbf{-} (where $R\in\R$)

\hspace{2em}
$\eval{R} = \InRA{\I}(R).$

\vspace{1ex}\noindent
\textbf{Constant singleton -} \graybox{${\singleton{v}}$} \textbf{-} (where $v\in\C$) 

\hspace{2em}
$\eval{\singleton{v}} = \set{1\mapsto v}.$ 

\vspace{1ex}\noindent
\textbf{Selection -} \graybox{${\select{i = v}{e}}$},\ \graybox{${\select{i = j}{e}}$} \textbf{-} (where $v\in\C$,\ \ $\ell$ is the arity of $e$,\ \  and $1\leq i,j\leq\ell$)

\hspace{2em}
$	\eval{\select{i = v}{e}} = \set{ s\ \text{a $\ell$-tuple}  \mid s \in \eval{e} \land s(i) = v},$
 
\hspace{2em} 
$	\eval{\select{i = j}{e}} = \set{ s\ \text{a $\ell$-tuple}  \mid s \in \eval{e} \land s(i) = s(j)}.$

\vspace{1ex}\noindent
\textbf{Projection -} \graybox{${\project{i_1,\ldots,i_k}{e}}$} \textbf{-} (where $\ell$ is the arity of $e$,\ \  and $\set{i_1,\ldots,i_k} \subseteq [1\ldots\ell]$)

\hspace{2em}
$\eval{\project{i_1,\ldots,i_k}{e}} = \set{s\ \text{a $k$-tuple} \mid 	\begin{aligned}[t]
&\text{exists } s'\in \eval{e} \text{ s.t.} %\\
%&~~
\text{  for all } 1\leq j\leq k .\  s(j)=s'(i_j)}.
\end{aligned}
$

\vspace{1ex}\noindent
\textbf{Cartesian product -} \graybox{${e \times e'}$} \textbf{-} (where $n,m$ are the arities of $e,e'$)

\hspace{2em} 
{\setlength\arraycolsep{0.5ex}
$\eval{e \times e'}  = \set{ s \text{ a }  
	\begin{aligned}[t]
	& (n+m)\text{-tuple}\mid \text{exists } t\in \eval{e},t'\in \eval{e'} \text{ s.t.} \\
	& %\hspace{2ex}
	\begin{array}[t]{lrcll}
	\text{for all} & 1\leq &\! j\! &\leq n : & s(j)=t(j) \text{ and}\\
	\text{for all} & 1+n\leq &\! j\! &\leq (n+m) : & s(j)=t'(j-n)}. 
	\end{array}
\end{aligned}
$}

\vspace{1ex}\noindent
\textbf{Union/Difference -} \graybox{${e \cup e'}$},\ \graybox{${e - e'}$} \textbf{-} (where $\ell$ is the arity of $e$ and $e'$)

\hspace{2em}
$\eval{e \cup e'} = \{ s\ \text{a $\ell$-tuple} \mid s\in \eval{e} \vee s\in \eval{e'} \}$,

\hspace{2em} 
$\eval{e - e'} = \{ s\ \text{a $\ell$-tuple} \mid s\in \eval{e} \wedge s\not\in \eval{e'} \}$.

\vspace{1ex}\textbf{Derived operators -} where $v\in\C$,\ \ $\ell\leq\mathsf{min}(m,n)$,\ \ $m,n$ are the arities of $e,e'$,\\ $i,j,i_1,\ldots,i_\ell\leq m$,\ \ and $k_1,\ldots,k_\ell\leq n$

\hspace{2em} $
\begin{array}[t]{rclcrcl}
	e \cap e' &\doteq & e - (e - e'),\\
	\select{i \neq v}{e} &\doteq & e - \select{i = v}{e}, & &
	\select{i \neq j}{e} &\doteq& \select{i=i}{\select{j=j}{e}} - {\select{i = j}{e}},\\
	\select{c\land c'}{e} &\doteq &  \select{c}{e}\ \cap\select{c'}{e}, & &
	\select{c\lor c'}{e} &\doteq & \select{c}{e}\ \cup\select{c'}{e},\\
	\select{\neg(c\land c')}{e} &\doteq & \select{\neg c}{e}\ \cup\select{\neg c'}{e}, & &
	\select{\neg(c\lor c')}{e} &\doteq & \select{\neg c}{e}\ \cap\select{\neg c'}{e},\\
	e\RAJoin_{\substack{i_1=k_1\\\cdots\\i_\ell=k_\ell}} e' &\doteq &
	\multicolumn{5}{c}{\project{([1\cdots m+n]\setminus\set{m+k_1,\ldots,m+k_\ell})}{\select{i_1=m+k_1}{\ldots\select{i_\ell=m+k_\ell}{(e\times e').}}}}
\end{array}
$
\end{minipage}
}
\caption{The standard relational algebra \RA}\vspace{-1ex}
\label{fig:ra}\label{def:RA}
\end{figure}

We introduce in this Section the formal semantics of the relational algebra \nRA dealing with null values, which we  prove characterising \fosql exactly.

Let's first recall the notation of the standard relational algebra \RA (see, e.g., \citep{AbHV95} for details). 
Standard relational algebra expressions over a relational schema $\R$ are built according to the inductive formation rules in the boxed expressions of Figure~\ref{fig:ra}. Also in Figure~\ref{fig:ra} the semantics of an algebra expression $e$ is inductively defined as the transformation of database instances $\I$ -- with the Standard Name Assumption -- to a set of tuples \eval{e}. Given a tuple $t$ and a \RA expression $e$, we call \emph{models} of the expression $e$ with the answer $t$ all the database instances $\I$ such that $t\in \eval{e}$.

We remind that a relational schema for \RA may include zero-ary relations, and that the projection operator may project over the empty set of positional attributes, leading to a zero-ary derived relation. It is well known that zero-ary relations can not be represented as SQL (zero-ary) tables, as ~\citet{date_database_2010} discuss at length. As a matter of fact, the well known result that \RA is equally expressive as null-free \fosql \citep{ceri:gottlob:ieee-1985,paredaens_structure_1989,dadashzadeh_converting_1990,VandenBussche:et:al:2009} holds only for the case of \RA without zero-ary relations and with projections over non empty sets of positional attributes. We call \RA[0] the relational algebra \RA without zero-ary relations and with projections over non empty sets of positional attributes.

We extend the standard relational algebra to deal with SQL nulls; we refer to it as \emph{Relational Algebra with Null Values} (\nRA). In order to define \nRA from the standard relational algebra, we adopt -- just like SQL -- the \InRA{\I} representation of a database instance where null values are explicitly present as possible elements of tuples
%, and with the Standard Name Assumption
. 
Figure~\ref{fig:nra} introduces the syntax and semantics of \nRA expressed in terms of its difference with \RA, including the derived operators involving null values which are added to \RA. All the \RA expressions are valid \nRA expressions, and maintain the same semantics, with the only change in the semantics of the \emph{selection} expressions $\mathbf{\select{i = j}{e}}$ and $\mathbf{\select{i = v}{e}}$: their semantic definitions make sure that the elements to be tested for equality are both different from the null value in order for the equality to succeed. In other words, it is enough to let equality fail whenever null values are involved. Equality in \nRA behaves exactly as equality ``\texttt{=}$_{\textrm{\tiny 2VL}}$''  under classical 2VL semantics in \fosql. In \nRA we will use the standard notation for equality ``\texttt{=}'' and inequality ``\texttt{$\neq$}'' with the assumption that they actually mean ``\texttt{=}$_{\textrm{\tiny 2VL}}$'' and ``\texttt{!=}$_{\textrm{\tiny 2VL}}$''.

\begin{figure}[t]
\fbox{%
\begin{minipage}{\textwidth}
\noindent
\textbf{Null singleton -} \graybox{${\singleton{\nullvalue}}$} 
\textbf{-}

\hspace{2em}
$\eval{\singleton{\nullvalue}} = \set{1\mapsto \nullvalue}.$

\vspace{1ex}\noindent
\textbf{Selection -} \graybox{${\select{i = v}{e}}$},\ \graybox{${\select{i = j}{e}}$} \textbf{-} (where $v\in\C$,\ \ $\ell$ is the arity of $e$,\ \  and $i,j\leq\ell$)

\hspace{2em}
$\eval{\select{i = v}{e}} = \set{ s\ \text{a $\ell$-tuple}  \mid s \in \eval{e} \land s(i) = v \land s(i) \neq \nullvalue},$

\hspace{2em} 
$\eval{\select{i = j}{e}} = \set{ s\ \text{a $\ell$-tuple}  \mid s \in \eval{e} \land s(i) = s(j) \land s(i) \neq \nullvalue \wedge s(j) \neq \nullvalue}.$

\vspace{1ex}\noindent
\textbf{Derived operators -} where $v\in\C$,\ \ $\ell\leq\mathsf{min}(m,n)$,\ \ $m,n$ are the arities of $e,e'$,\\ $i,j,i_1,\ldots,i_\ell\leq m$,\ \ and $k_1,\ldots,k_\ell\leq n$

\hspace{2em} $
\begin{array}[t]{rclrcl}
	\select{i = \nullvalue}{e} &\doteq& e - e, &
	\select{i \neq \nullvalue}{e} &\doteq& e - e, \\
	\select{\isnull(i)}{e} & \doteq& e - {\select{i = i}{e}}, &
	\select{\isnotnull(i)}{e} &\doteq& \select{i = i}{e},\\ 
	e \leftouterjoin_{\substack{i_1=k_1\\\cdots\\i_\ell=k_\ell}} e' &\doteq &
	\multicolumn{4}{c}{(e\RAJoin_{\substack{i_1=k_1\\\cdots\\i_\ell=k_\ell}}e')\cup(e - \project{1,\ldots,m}{(e\RAJoin_{\substack{i_1=k_1\\\cdots\\i_\ell=k_\ell}}e')})\times(\underbrace{\singleton{\nullvalue}\times\cdots\times\singleton{\nullvalue}}_{n-\ell}).}
\end{array}
$
\end{minipage}
}
\caption{The relational algebra with null values \nRA defined in the parts with null values different from \RA}\label{fig:nra}\label{def:nRA}
\end{figure}

\begin{example}\label{ex:nRA-identity}
The \textit{pseudo-identity} \nRA query
$(\select{1=1}{\select{2=2}{\texttt{r}}})$
evaluates to
$\{\langle \texttt{a},\texttt{a}\rangle\}$ over the database instance $\I$ with $\InRA{\I}(\texttt{r})=\{\langle \texttt{a},\texttt{a}\rangle,\langle \texttt{b},\nullvalue\rangle$\}, namely $\eval{\select{1=1}{\select{2=2}{\texttt{r}}}}=\{\langle \texttt{a},\texttt{a}\rangle\}$.\qed	
\end{example}

\begin{example}\label{ex:nRA-denial}
Consider the schema $\{\texttt{r}/2, \texttt{s}/2\}$ with the following database instance and \nRA denial constraints: 

\medskip
\noindent
{\small
\begin{minipage}{.15\textwidth}
	$\texttt{p}: 
	\begin{array}{|@{\hspace{1ex}}c@{\hspace{1ex}}|@{\hspace{1ex}}c@{\hspace{1ex}}|}
	\hline
		\texttt{1}          & \texttt{2}          \\ 
		\hline
		\texttt{a}          & \texttt{a}          \\ 
		\texttt{b} & \nullvalue          \\ 
		\nullvalue & \texttt{b}          \\ 
		\nullvalue & \nullvalue          \\ 
		\hline
	\end{array}$
\end{minipage}
\begin{minipage}{.14\textwidth}
	$\texttt{q}: 
	\begin{array}{|@{\hspace{1ex}}c@{\hspace{1ex}}|@{\hspace{1ex}}c@{\hspace{1ex}}|}
	\hline
		\texttt{1}          & \texttt{2}          \\ 
		\hline
		\texttt{a}          & \texttt{a}          \\ 
		\texttt{b}          & \nullvalue          \\ 
		\hline
	\end{array}$
\end{minipage}
}
%\hspace{1em}
\begin{minipage}{.8\textwidth}
\hspace{-1em}
\begin{itemize}

\item UNIQUE constraint for $\texttt{p.1}$:
$\select{1=3}{\select{2\neq 4}{(\texttt{p}\times \texttt{p})}}=\emptyset$

\item UNIQUE constraint for $\texttt{q.1}$:
$\select{1=3}{\select{2\neq 4}{(\texttt{q}\times \texttt{q})}}=\emptyset$

\item NOT-NULL constraint for $\texttt{q.1}$:
$\select{\isnull(1)}{\texttt{q}}=\emptyset$

\item FOREIGN KEY constraint from $\texttt{p.2}$ to $\texttt{q.1}$:\\
$\project{2}{\select{\isnotnull(2)}{\texttt{p}}} - \project{1}{\select{\isnotnull(1)}{\texttt{q}}}\ =\emptyset$

\end{itemize}

\end{minipage}
\medskip

\noindent
The constraints are all satisfied in the given database instance.\qed
\end{example}

We call \nRA[0] the relational algebra \nRA without zero-ary relations and with projections over non empty sets of positional attributes. As expected, \nRA[0] characterises exactly \fosql. 

\begin{theorem}
	\nRA[0] and \fosql are equally expressive.
\end{theorem}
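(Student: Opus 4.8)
The plan is to establish both inclusions by exhibiting mutually inverse compositional translations between $\nRA[0]$ expressions and $\fosql$ queries that preserve the answer over every database instance \InRA{\I}. The linchpin is Theorem~\ref{th:2vl-fosql}: it lets me treat \nullvalue as an ordinary domain element and reduce every $\fosql$ query to an equivalent one whose \texttt{WHERE} condition is in absorbed negation normal form and uses only the 2VL atomic conditions built from the special ``$=$'' and ``$\neq$'' comparisons. Since the paper already stipulates that the $\nRA$ selections $\select{i=j}{e}$ and $\select{i=v}{e}$ behave exactly like that 2VL equality (both fail as soon as a \nullvalue is compared), the null-handling on the two sides is matched from the outset, and the remaining work is the classical calculus-to-algebra correspondence, carried out under the \InRA{\I} representation.

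For the direction $\fosql \subseteq \nRA[0]$, I would first apply Theorem~\ref{th:2vl-fosql} and then translate the 2VL-normalised query by structural induction on the grammar of Figure~\ref{fig:fosql}: a table maps to its atomic relation; a \texttt{FROM} list to an iterated Cartesian product $\times$; \texttt{SELECT DISTINCT} of a column list to a projection $\project{i_1,\ldots,i_k}{e}$; the set operators \texttt{UNION}/\texttt{INTERSECT}/\texttt{EXCEPT} to $\cup$, $\cap$, $-$; and the \texttt{WHERE} condition, being a negation-free combination of conjunctions and disjunctions of atomic conditions, to the corresponding $\select{c \land c'}{e}$ and $\select{c \lor c'}{e}$ derived selections. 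The 2VL comparisons become $\select{i=j}{e}$ (or $\select{i=v}{e}$) and their derived negations, ``\texttt{IS NULL}''/``\texttt{IS NOT NULL}'' become $\select{\isnull(i)}{e}$/$\select{\isnotnull(i)}{e}$, a subquery under \texttt{EXISTS} becomes a semijoin realised by $\RAJoin$ followed by a projection back onto the outer columns, and the constants and the \texttt{NULL} literal become $\singleton{v}$ and $\singleton{\nullvalue}$. I would verify equality of the two answer sets by induction, invoking the semantics of Figures~\ref{fig:ra} and~\ref{fig:nra} at each step.

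For the converse $\nRA[0] \subseteq \fosql$, I would translate each algebra constructor back into the grammar of Figure~\ref{fig:fosql}: atomic relations and constant/null singletons into trivial \texttt{SELECT} queries, $\times$ into a multi-table \texttt{FROM}, $\project{i_1,\ldots,i_k}{e}$ into \texttt{SELECT DISTINCT} over the chosen columns, $\cup$ and $-$ into \texttt{UNION} and \texttt{EXCEPT}, and the two primitive selections into \texttt{WHERE} clauses using the SQL equality restricted to 2VL, with $\isnull$/$\isnotnull$ mapping to ``\texttt{IS NULL}''/``\texttt{IS NOT NULL}''. Here the restriction to $\nRA[0]$ — no zero-ary relations and projections only onto non-empty attribute sets — is exactly what makes the translation land inside SQL, since SQL has no zero-ary tables; I would check by induction on the expression that every intermediate result has positive arity, so the target stays within $\fosql$.

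The main obstacle I expect is bookkeeping the positional (unnamed) perspective of $\nRA$ against the named columns of SQL: the translation must thread column names and renamings (\texttt{AS}) consistently through products, projections, and nested subqueries, and in the $\fosql \to \nRA[0]$ direction it must show that the $\RAJoin$/projection implementation of \texttt{EXISTS} (and of the \texttt{IN}/\texttt{NOT IN} encodings given before Theorem~\ref{th:2vl-fosql}) really reproduces the 2VL truth conditions for correlated subqueries. Once the naming discipline is fixed and the handling of \nullvalue is inherited from Theorem~\ref{th:2vl-fosql}, the inductive semantic-preservation arguments reduce to routine case checks against Figures~\ref{fig:ra} and~\ref{fig:nra}.
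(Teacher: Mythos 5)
Your proposal is correct and follows essentially the same route as the paper: both reduce \fosql to its 2VL-normalised form via Theorem~\ref{th:2vl-fosql}, observe that \nRA[0] is just \RA[0] with \nullvalue as an ordinary constant and 2VL equality in selections, and then fall back on the classical null-free SQL-to-algebra correspondence in both directions. The only difference is that the paper discharges the two translations by citing the existing literature on the null-free equivalence, whereas you sketch the structural inductions explicitly (including the role of the zero-ary restriction), which is a reasonable elaboration rather than a different argument.
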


\begin{proof}
We first observe that \nRA[0] is just like \RA[0] with an added constant $\nullvalue$, and where the standard equality predicate and its derived negation predicate in selections are replaced by predicates which behave exactly like the SQL equality predicate ``\texttt{=$_{\textrm{\tiny 2VL}}$}\!'' and its negation ``\texttt{!=$_{\textrm{\tiny 2VL}}$}\!'' under classical 2VL semantics.

\noindent
Any 2VL translated \fosql query (as defined in Theorem~\ref{th:2vl-fosql}) is a SQL query where the null value is like any other value, the conditions in the WHERE clause are composed only by classical 2VL semantics predicates (and therefore the Boolean operators reduce to just classical 2VL Boolean operators), and equality and its negation are the SQL equality predicate ``\texttt{=$_{\textrm{\tiny 2VL}}$}\!'' and its negation ``\texttt{!=$_{\textrm{\tiny 2VL}}$}\!'' under classical 2VL semantics.

\noindent
Using results about the translation of null-free SQL queries to relational algebra (as proposed, e.g., by \citet{ceri:gottlob:ieee-1985,dadashzadeh_converting_1990,VandenBussche:et:al:2009}), these queries, after having expanded the derived operators up to a negated normal form within selection conditions, can be translated into relational algebra with SQL null values expressions. 

\noindent
Conversely, with a similar argument any relational algebra with SQL null values expression can be translated in a 2VL translated \fosql query \citep{paredaens_structure_1989}.\qed
\end{proof}

\begin{example}
Let's consider the \textit{pseudo-identity} SQL query (1) from Section~\ref{sec:intro}:
{\footnotesize
\begin{lstlisting}
SELECT c1, c2 FROM r WHERE c1 = c1 AND c2 = c2;
\end{lstlisting}}
\noindent
Using the standard translation, its direct equivalent translation in \nRA is $(\select{1=1}{\select{2=2}{\texttt{r}}})$ as in Example~\ref{ex:nRA-identity}. As expected, both the SQL and the \nRA query give the same answer over any database instance.\qed	
\end{example}

The idea that the SQL null value can be encoded like any other constant as part of the data domain, and that equality and inequality fail for the special null value constant, as we stated in~\citep{franconi:tessaris:null:a:12,franconi:tessaris:null:b:12}, was known already in the database theory community, as it is witnessed in the works restricted to conjunctive queries by~\citet{nutt-et-al-null:07}, and in the works by \citeauthor{bravo:bertossi:06} (\citeyear{bravo:bertossi:06,BERTOSSI_2016}) in the context of peer-to-peer databases~\citep{DBLP:conf/edbtw/FranconiKLZ04}. Only recently there has been a complete formal account of it extended with bag semantics by~\citet{libkin_peterfreund_2020}. In this Section we have emphasised how a minimal extension of standard relational algebra can deal with SQL null values: the only change is in the behaviour of equality in the selection operator -- the absorbed negation normal form is considered in the mapping of derived operators as defined in Figures~\ref{fig:ra} and~\ref{fig:nra}. Moreover, the \nRA algebra, unlike the algebra introduced by~\citet{libkin_peterfreund_2020}, includes a \emph{singleton} operator and allows for \emph{zero-ary relations} and \emph{zero-ary projections}. The singleton operator covers the case of SQL queries introducing new domain constants in the SELECT clause. The singleton operator and the ability to have zero-ary relations and zero-ary projections are crucial to obtain the equivalence of the relational algebra with null values \nRA with the domain relational calculus with null values \FOLe that will be introduced in Section~\ref{sec:null_relational_calculus}.

\medskip
\citet{bultzingsloewen_translating_1987} and~\citet{negri_formal_1991} have proposed a direct translation of SQL with null values to a three-valued relational calculus (the former proposed a \emph{domain} relational calculus, the latter a \emph{tuple} relational calculus), with a model theory based on \InRA{\I} interpretations, where the null value is an element of the domain. These calculi mirror the relational algebra \nRA presented in this Section, by extending comparison operators to evaluate to \emph{unknown} whenever they involve a null value, introducing an isNull operator, and dealing accordingly with a three-valued semantics for Boolean operators and quantifiers. \citet{CONSOLE2022103603} show how to reduce the SQL three-valued relational calculus with the null value as an element of the domain, to a two-valued relational calculus, still with the null value as an element of the domain, extended with an \emph{assertion} operator necessary to evaluate the absorbed negation normal form of the Boolean combination of atomic conditions under classical 2VL semantics.

The aim of our research is to find a relational calculus, equivalent to \fosql, having a null value term in the language behaving as a marker for missing data, and therefore not appearing as an element of the domain of interpretation.

\section{Relational Calculus with Null Values} 
\label{sec:null_relational_calculus}\label{def:nFOL}

The \emph{Relational Calculus with Null Values} (\FOLe) is a first-order logic language with an explicit term $\nullvalue$ representing the null value, and with the interpretation of predicates as \textit{partial} tuples, namely predicates denote tuples over \emph{subsets} of the arguments instead of the whole set of arguments. Note that $\nullvalue$ is a term but not a constant, since the domain of interpretation in \FOLe does not include the null value: there is no object of the domain denoted by a null value, which indeed is meant to represent a missing value. \FOLe extends classical first-order logic in order to take into account the possibility that some of the arguments of a relation might not exist: this is accomplished by considering the interpretations of \FOLe to be \IFOLe{\I} interpretations.

Given a set of predicate symbols each one associated to an arity, and a set $\C$ of constants -- forming the relational schema (or \emph{signature}) $\R$ -- the equality binary predicate ``\foleqp'', a set of variable symbols, and the special \emph{null} symbol $\nullvalue$, terms of \FOLe are constants, variables, and $\nullvalue$, and formulae of \FOLe are defined by the following rules:
\begin{enumerate}
	\item if $t_1,\ldots,t_n$ are terms and $P$ is a predicate symbol in $\R$ of arity $n$,  $P(t_1,\ldots,t_n)$ is an atomic formula;
	\item if $t_1,t_2$ are terms different from $\nullvalue$,  $\foleq{t_1}{t_2}$ is an atomic formula;
	\item atomic formulae are formulae;
	\item if $\varphi$ and $\psi$ are formulae, then $\neg \varphi$, $\varphi \wedge \psi$ and $\varphi \vee \psi$ are formulae;
	\item if $\varphi$ is a formula and $x$ is a variable, then $\exists x. \varphi$ and $\forall x. \varphi$ are formulae.
\end{enumerate}

The semantics of \FOLe formulae is given in terms of database instances of type \IFOLe{\I}, called \emph{interpretations}. As such, differently from classical \FOL, the interpretation of a n-ary predicate symbol is a set of partial n-tuples: a partial n-tuple $\tau$ of values from a set $\Delta$ is a \textit{partial} function $\tau: \{1,\ldots,n\} \rightarrow\Delta$. In classical \FOL, the interpretation of a predicate symbol is instead a set of \textit{total} tuples.

An interpretation \IFOLe{\I} is composed by a non-empty domain $\Delta$ and an interpretation function ${\IFOLe{\I}}(\cdot)$, which interprets each constant symbol as an element of the domain, each predicate symbol as a set of partial tuples, and the equality predicate as the set of all pairs of equal elements of the domain $\Delta$. An assignment function gives meaning to variable symbols: an assignment $\alpha$ over a set of variable symbols $\cal V$ is a partial function $\alpha: {\cal V} \rightarrow\Delta$. If $\alpha$ is an assignment, $x\in{\cal V}$, and $o\in\Delta$, then $\replfunc{\alpha}{x}{o}$ is the assignment which maps $x$ to $o$ and agrees with $\alpha$ on all variables distinct from $x$.

Given an interpretation \IFOLe{\I}, an assignment $\alpha$, and a term $t$ (constant, variable, or $\nullvalue$), the interpretation of $t$ under \IFOLe{\I} and $\alpha$ - written $t^{\IFOLe{\I}\!,\,\alpha}$ - is defined as follows: 

\medskip
\begin{tabular}{l p{2em} l}
	$t^{\IFOLe{\I}\!,\,\alpha} = t^{\IFOLe{\I}}$   &&  if $t$ is a constant;\\
	$t^{\IFOLe{\I}\!,\,\alpha} = \alpha(t)$    &&   if $t$ is a variable in the domain of $\alpha$;\\
	$t^{\IFOLe{\I}\!,\,\alpha}$ undefined   && otherwise.
\end{tabular}
\smallskip

Given an interpretation \IFOLe{\I} and an n-ary predicate symbol $P\in\R$, the interpretation of $P$ - written $P^{\IFOLe{\I}}$ - is a set of partial n-tuples, and the interpretation of the binary equality predicate symbol ``$=$'' is the set of total 2-tuples $\{\{1\mapsto o,2\mapsto o\} | o\in\Delta\}$.

Given an interpretation \IFOLe{\I}, an assignment $\alpha$, and an n-ary atom $P(t_1, ... , t_n)$ with $t_1, ... , t_n$ an ordered sequence of $n$ terms (constants, variables, or $\nullvalue$), the interpretation \IFOLe{\I} satisfies the atom $P(t_1, ... , t_n)$ under the assignment $\alpha$ -- written $\IFOLe{\I}\!,\,\alpha\models P(t_1, ... , t_n)$ -- if and only if the partial n-tuple $\tau = \{ i\mapsto t_i^{\IFOLe{\I}\!,\,\alpha}\mid  \; t_i^{\IFOLe{\I}\!,\,\alpha} \;\text{defined}\;\text{for}\;i=1,\ldots,n\}$ is in the interpretation of $P$, namely $\tau\in P^{\IFOLe{\I}}$.

An interpretation $\IFOLe{\I}$ satisfies a non atomic formula $\varphi$ with an assignment $\alpha$ - written $\IFOLe{\I}\!,\,\alpha \models \varphi$ - if the following holds:
$$
\begin{array}{l@{\;\;\;\;}c@{\;\;\;\;}ll}
\IFOLe{\I}\!,\,\alpha \models \neg\varphi &\text{iff}&  \IFOLe{\I}\!,\,\alpha \not\models \varphi\\
\IFOLe{\I}\!,\,\alpha \models \varphi \land \psi &\text{iff}&  \IFOLe{\I}\!,\,\alpha \models \varphi  \text{\;\;and\;\;} \IFOLe{\I}\!,\,\alpha \models \psi\\
\IFOLe{\I}\!,\,\alpha \models \varphi \lor \psi &\text{iff} & \IFOLe{\I}\!,\,\alpha \models \varphi  \text{\;\;or\;\;}  \IFOLe{\I}\!,\,\alpha \models \psi\vspace{.5ex}\\
\IFOLe{\I}\!,\,\alpha \models \forall x.\varphi  &\text{iff} & \text{for all~} o\in\Delta: \IFOLe{\I},\replfunc{\alpha}{x}{o}\models\varphi\vspace{.5ex}\\
\IFOLe{\I}\!,\,\alpha \models \exists x.\varphi  &\text{iff} & \text{exists~} o\in\Delta: \IFOLe{\I},\replfunc{\alpha}{x}{o}\models\varphi
\end{array}
$$

As usual, an interpretation \IFOLe{\I} satisfying a closed formula is called a \emph{model} of the formula.
Satisfiability of a \FOLe closed formula without any occurrence of the null symbol $\nullvalue$ doesn't depend on partial tuples, and its models can be characterised by classical first-order semantics: in each model the interpretation of predicates would include only tuples represented as total functions. So, \FOLe without the null symbol $\nullvalue$ coincides with classical \FOL.

\begin{example}
The models of the \FOLe formula $\texttt{r}(\texttt{a},\texttt{a})\land \texttt{r}(\texttt{b},\nullvalue)$ are the interpretations $\IFOLe{\I}$ such that $\IFOLe{\I}(\texttt{r})$ includes the tuples $\{1\mapsto \texttt{a},2\mapsto \texttt{a}\}$ and $\{1\mapsto \texttt{b}\}$.\qed 
\end{example}

%%%%%%%%%%%%%%%%%%%%%%%%%%%%%%%%%%%%%%%%%%%%%%%%%%%%%%%%%%%%%%%%%%

\section{The \emph{Decomposed} Relational Calculus with Null Values}
\label{sub:logical_reconstruction}

Given a database instance $\I$ with signature $\R$, let's consider a classical first-order logic language with equality (\FOL) over the \emph{decomposed} signature \subR, as it has been defined in Section~\ref{sec:preliminaries}. This language has a classical semantics with models of type \IRA{\I} and it does have a term to refer to null values. We call this logic the \textit{Decomposed Relational Calculus with Null Values}. In this Section we show that, given a database instance $\I$ with signature $\R$, \FOLe over $\R$ and \FOL over \subR{} are equally expressive: namely, for every formula in \FOLe over the signature $\R$ there is a corresponding formula in \FOL over the decomposed signature \subR, such that the two formulae have isomorphic models, and that for every formula in \FOL over the decomposed signature $\subR$ there is a corresponding formula in \FOLe over the signature $\R$, such that the two formulae have isomorphic models.
As we discussed before, the isomorphism between the interpretations \InRA{\I} and \IRA{\I} is based on the fact that each $|A|$-tuple in the relation $\IRA{\I}(\subR[R]_A)$ corresponds exactly to the $n$-tuple in $\InRA{\I}(R)$ having non-null values \emph{only} in the positional arguments specified in $A$, with the same values and in the same relative positional order.

This correspondence is interesting since it relates two relational calculi talking about database instances with null values $\I$, in neither of which the null value is an element of the domain of interpretation. The first calculus is a non-standard first order logic, since tuples correspond to partial functions; its signature is equal to the relational schema, and it reduces to standard first order logic in absence of null values. The latter calculus is a standard first order logic over a decomposition of the relational schema, and it also reduces to standard first order logic in absence of null values.

In order to relate the two logics, we define a bijective translation function $\FOLetoFOLd(\cdot)$ (and its inverse $\FOLdtoFOLe(\cdot)$) which maps \FOLe formulae into \FOL formulae (and vice versa). 

\begin{definition}[Bijective translation $\FOLetoFOLd$]
$\FOLetoFOLd(\cdot)$ is a bijective function from \FOLe formulae over the signature $\R$ to \FOL formulae over the signature $\subR$, defined as follows.
\begin{itemize}
\item 
$\FOLetoFOLd(R(t_1,\ldots,t_n)) =  \subR[R]_{\set{i_1,\ldots,i_k}}(t_{i_1},\ldots,t_{i_k})$,

~~where $R\in\R$ an $n$-ary relation, $\set{i_1,\ldots,i_k} = \set{j\in [1\cdots n]\mid t_j\text{ is not } \nullvalue}$.

\item $\FOLdtoFOLe(\subR[R]_{\set{i_1,\ldots,i_k}}(t_{i_1},\ldots,t_{i_k})) =  R(t'_1,\ldots,t'_n)$,

\hspace{1em}\begin{minipage}[t]{\textwidth}where $\set{i_1,\ldots,i_k}\subseteq[1\cdots n]$ and $t_{i_j} = t'_j$ for $j=1,\ldots,k$~,\\ and $t'_j$ is \nullvalue for $j\in[1\cdots n]\setminus \set{i_1,\ldots,i_k}$.\end{minipage}
\end{itemize}

\noindent
In both the direct and inverse cases we assume $i_1,\ldots,i_k$ in ascending order. 

\noindent
The translation of equality atoms and of non atomic formulae is the identity transformation inductively defined on top of the above translation of atomic formulae.\qed
\end{definition}

\begin{example}
The \FOLe formula $\exists x. R(\texttt{a},x)\land R(x,\nullvalue)\land R(\nullvalue,\nullvalue)$ over the signature $R$ is translated as the \FOL formula $\exists x. \subR[R]_{\{1,2\}}(\texttt{a},x)\land \subR[R]_{\{1\}}(x)\land \subR[R]_{\set{}}$ over the decomposed signature $\subR$, and vice versa.\qed
\end{example}

The above bijective translation preserves the models of the formulae, modulo the isomorphism among models presented in Section~\ref{sec:preliminaries}. We prove this in the following proposition.

\begin{proposition}\label{prop:equivalence_nRC_FOL}
	Let $\varphi$ be a \FOLe formula over the signature $\R$, and $\subR[\varphi]$ a \FOL formula over the signature $\subR$. Then for any (database) instance $\I$ and assignment $\alpha$ \emph{total w.r.t.\ free variables in $\varphi$}:
	$$\IFOLe{\I}\!,\,\alpha \models_{\FOLe} \varphi \text{~~~if and only if~~~} \IRA{\I}, \alpha \models_{\FOL} \FOLetoFOLd(\varphi).$$

	Since $\FOLetoFOLd(\cdot)$ is a bijection, we can equivalently write:
	$$\IFOLe{\I}\!,\,\alpha \models_{\FOLe} \FOLdtoFOLe(\subR[\varphi]) \text{~~~if and only if~~~} \IRA{\I}, \alpha \models_{\FOL} \subR[\varphi].$$
\end{proposition}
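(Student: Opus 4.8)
The plan is to argue by structural induction on the \FOLe formula $\varphi$, maintaining throughout the invariant that the assignment $\alpha$ is total with respect to the free variables of the subformula currently under consideration (the hypothesis of the proposition supplies the base of this invariant, and the quantifier step will preserve it). The entire argument rests on the isomorphism between $\IFOLe{\I}$ and $\IRA{\I}$ recalled above: a partial $n$-tuple with domain $A$ belonging to $\IFOLe{\I}(R)$ corresponds exactly to the total $|A|$-tuple of $\IRA{\I}(\subR[R]_A)$ carrying the same values in the same relative order. I would first record that, since the two structures share the same domain $\Delta$, the same constants, and the same assignment $\alpha$, any term $t$ different from $\nullvalue$ receives the same value in both, and by totality of $\alpha$ this value is always defined.

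First I would settle the base case of a relational atom $R(t_1,\ldots,t_n)$. The crucial observation is that, under the totality hypothesis, a term $t_i$ is \emph{undefined} under $\IFOLe{\I},\alpha$ precisely when $t_i$ is the null marker $\nullvalue$; hence the partial tuple $\tau=\{i\mapsto t_i^{\IFOLe{\I},\alpha}\mid t_i^{\IFOLe{\I},\alpha}\text{ defined}\}$ has domain exactly $\set{i_1,\ldots,i_k}=\set{j\mid t_j\text{ is not }\nullvalue}$, which is the very index set selected by the translation $\FOLetoFOLd(R(t_1,\ldots,t_n))=\subR[R]_{\set{i_1,\ldots,i_k}}(t_{i_1},\ldots,t_{i_k})$. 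Invoking the isomorphism then gives that $\tau\in\IFOLe{\I}(R)$ iff the associated total tuple lies in $\IRA{\I}(\subR[R]_{\set{i_1,\ldots,i_k}})$, that is, iff $\IRA{\I},\alpha\models_{\FOL}\FOLetoFOLd(R(t_1,\ldots,t_n))$. For an equality atom $\foleq{t_1}{t_2}$ the translation is the identity, the terms $t_1,t_2$ are by syntax different from $\nullvalue$ and hence defined, and on both sides the atom holds iff $t_1$ and $t_2$ denote the same element of the shared domain, so the two sides agree.

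For the inductive step I would exploit that $\FOLetoFOLd$ acts as the identity on the Boolean connectives and quantifiers, so it only remains to match the satisfaction clauses. The cases $\neg$, $\wedge$, $\vee$ are immediate from the induction hypothesis, since \FOLe and \FOL use the identical two-valued clauses. For $\exists x.\psi$ and $\forall x.\psi$, both calculi range over the same domain $\Delta$; applying the induction hypothesis at each $o\in\Delta$ to the extended assignment $\replfunc{\alpha}{x}{o}$ -- which is again total on the free variables of $\psi$ -- transfers the equivalence through the existential and universal clauses. This establishes the first equivalence, and the second follows at once from the bijectivity of $\FOLetoFOLd$: taking $\varphi=\FOLdtoFOLe(\subR[\varphi])$ makes $\FOLetoFOLd(\varphi)=\subR[\varphi]$.

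I expect the main obstacle to be exactly the relational-atom base case -- specifically the verification that the set of positions on which $\tau$ is defined coincides with the index set chosen by the translation. This is where totality of $\alpha$ on the free variables is indispensable: it guarantees that a term can be undefined only by being $\nullvalue$, never through a missing variable binding, so that ``undefined position'' and ``null position'' coincide and the partial-tuple / decomposed-relation correspondence applies verbatim; dropping the totality assumption would break this alignment and with it the whole induction.
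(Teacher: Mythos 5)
Your proof is correct and follows essentially the same route as the paper's: structural induction in which the non-atomic cases are discharged by the coincidence of the satisfaction clauses, and the relational-atom case is settled by matching the domain of the partial tuple with the index set chosen by $\FOLetoFOLd$ via the $\IFOLe{\I}$/$\IRA{\I}$ isomorphism, with totality of $\alpha$ ensuring that ``undefined'' and ``null'' positions coincide. If anything, you spell out the role of totality at the atom level more explicitly than the paper, which instead restricts attention to ground atoms and handles free variables only in the inductive step; the content is the same.
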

\begin{proof}

The statements of the theorem are proved by induction on the syntax of the formulae $\varphi$ and $\subR[\varphi]$. Since syntax and semantics of non-atomic formulae is the same for both \FOLe and \FOL, we show that the statement hold for the atomic cases.

The case of equality can be easily proved by noticing that $\nullvalue$ is not allowed to appear among its arguments and the fact that $\alpha$ is total w.r.t.\ the free variables that  might appear among the equality arguments.

We focus on ground atomic formulae, since the valuation function for variable symbols $\alpha$ is the same in the pre- and the post-conditions. 

\begin{enumerate}
\item $\left[\text{if }\IFOLe{\I}\!,\,\alpha \models_{\FOLe} \varphi\text{, then }\IRA{\I}, \alpha \models_{\FOL} \FOLetoFOLd(\varphi)\right]$

Let $R(t_1,\ldots,t_n)$ be a \FOLe ground atomic formula where $A = \set{i_1,\ldots,i_k}\subseteq [1\cdots n]$ is the set of its positional arguments for which $t_i$ is not $\nullvalue$. Then it holds that $\FOLetoFOLd(R(t_1,\ldots,t_n)) = \subR[R](t_{i_1},\ldots,t_{i_k})$. 

By the assumption $\IFOLe{\I}\!,\,\alpha \models_{\FOLe} R(t_1,\ldots,t_n)$; therefore there is a tuple $\tau\in {\IFOLe{\I}(R)}$ such that $\tau(j)=\IFOLe{\I}(t_j)$ for $j\in A$, and $\tau$ is undefined for $j\not\in A$. 
By definition of $\IRA{\I}$, $(\tau(i_1),\ldots,\tau(i_k))\in {\IRA{\I}(\subR[R])}$ therefore $\IRA{\I},\alpha \models \widetilde{R}_A(t_{i_1},\ldots,t_{i_k})$.

\item $\left[\text{if }\IRA{\I},\alpha \models_{\FOL} \subR[\varphi]\text{, then }\IFOLe{\I}, \alpha \models_{\FOLe} \FOLdtoFOLe(\subR[\varphi])\right]$

Let $\subR[R]_{\set{i_1,\ldots,i_k}}(t_1,\ldots,t_k)$ be a \FOLe ground atomic formula with $\set{i_1,\ldots,i_k}\subseteq [1\cdots n]$; then $\FOLdtoFOLe(\subR[R]_{\set{i_1,\ldots,i_k}}(t_1,\ldots,t_k)) = R(t'_1,\ldots,t'_n)$, where $\set{i_1,\ldots,i_k}\subseteq[1\cdots n]$ and $t'_{i_j} = t_j$ for $j=1,\ldots,k$ and $t'_j=\nullvalue$ for $j\in[1\cdots n]\setminus \set{i_1,\ldots,i_k}$. 

By assumption $\IRA{\I},\alpha \models \subR[R]_{\set{i_1,\ldots,i_k}}(t_1,\ldots,t_k)$, therefore $(\IRA{\I}(t_1),\ldots,\IRA{\I}(t_k)) \in\IRA{\I}(\subR[R]_{\set{i_1,\ldots,i_k}})$. By definition of $\IFOLe{\I}$ there is a tuple $\tau\in {\IFOLe{\I}}(R)$ s.t.\ $\tau(i_j) = \IRA{\I}(t_j)$ for $j\in \set{i_1,\ldots,i_k}$ and undefined otherwise. Therefore $\IFOLe{\I}\!,\,\alpha\models_{\FOLe} R(t'_1,\ldots,t'_n)$.
\end{enumerate}

For the inductive step we need to show that if the sub-formula has a free variable, then the assignment includes that variable in the domain. This is guaranteed by the fact that by hypothesis the initial assignment is total, so additional variables are bound by a quantifier in the formula; while the mapping $\FOLetoFOLd(\cdot)$ doesn't introduce additional variables. 
\qed
\end{proof}

%%%%%%%%%%%%%%%%%%%%%%%%%%%%%%%%%%%%%%%%%%%%%%%%%%%%%%%%%%%%%%%%%%%

\section{Queries in the Relational Calculi}
\label{sec:queries}

In this Section we introduce additional conditions over the logics we have defined so far, such that they can be used effectively with database instances. We first define the \textit{Standard Name Assumption}, to connect constants in the logic language to domain values in the database instances. We then define the notion of \textit{queries} over database instances, written as logical formulae. Finally, we introduce the \textit{domain independent} fragments of the logics, so to let query answers depend only on the actual content of the relations in a database instance. The definitions given in this Section are general, and they do apply to both $\FOLe$ and $\FOL$ over decomposed signature. 

The \textit{Standard Name Assumption} (SNA) in a logic states that, given a database instance $\I$, the interpretation of any constant $c\in\C$ is equal to itself, namely it holds that $c^{\I} = c$, therefore implying that $\C\subseteq\Delta$.
The \textit{Active Domain Inclusion} assumption states that, given a database instance $\I$, its active domain is included among the constants $\C$ of the logic, namely for each relation $R$ in the signature of the logic the image of each $\tau\in R^{\I}$ is included in $\C$: $$\bigcup_{R\in\R,\ \tau\in R^{\I}} \tau^{\rightarrow}\ \subseteq\C\ .$$

\noindent
 By using SNA together with the active domain inclusion assumption, we can interplay between constants in the logic language and values which are stored in database instances.

We can weaken the Standard Name Assumption by assuming \emph{Unique Names} instead.
An interpretation ${\I}$ satisfies the Unique Name Assumption (UNA) if $a^{{\I}} \neq b^{{\I}}$ for any different $a,b\in\C$. Note that SNA implies UNA, but not vice versa.
An interpretation is a model of a formula with the Standard Name Assumption \emph{if and only if} the interpretation obtained by homomorphically transforming the standard names with arbitrary domain elements is a model of the formula; this latter interpretation satisfies the Unique Name Assumption.
It is possible therefore to interchange the Standard Name and the Unique Name Assumptions; this is of practical advantage, since the Unique Name Assumption can be encoded in classical first-order logic with equality and it is natively present in most knowledge representation -- e.g., description logics -- theorem provers. 

%%%%%%%%%%%%%%%%%%%%%%%%%%%%%%%%%%%%%%%%%%%%%%%%%%%%%%%%%%%%%%%%%%

A query in a logic with standard names and active domain inclusion is an open formula with a total order among the free variables appearing in the formula. The semantics of queries is defined by means of \emph{assignments} from the free variables to elements of the domain. The ordering is necessary to relate answers to tuples, and it may be defined by the the lexicographical order of the free variables, or by their order of appearance in the linear formula.

\begin{definition}[Answer tuple]
Given an assignment $\alpha$ over an ordered set of $n$ variable symbols $\{v_1,\ldots , v_n\}$, the answer tuple corresponding to $\alpha$ is the partial function $\anstup{\alpha}: \{1,\dots,n\} \rightarrow \Delta$ such that $\anstup{\alpha}(i) = \alpha (v_i)$ \qed
\end{definition}

Queries act as operators transforming a database instance into another database instance -- the latter composed by a single relation, corresponding to the answer to the query. 

\begin{definition}[Query answer]
Given a database instance ${\I}$, an answer to a query $\mathcal{Q}$ is the answer tuple $\anstup{\alpha}$ such that ${\I},\alpha \models\mathcal{Q}$.
The answer set $\eval[{\I}]{\mathcal{Q}}$ is the relation:
\begin{center}
$\eval[{\I}]{\mathcal{Q}}=\left\{\anstup{\alpha}\mid{\I},\alpha \models\mathcal{Q}\right\}$\qed
\end{center}
\end{definition}

\begin{example}
The \textit{pseudo-identity} \FOLe query 
$(\exists x_1,y_1.\ \texttt{r}(x_1,y_1)\land x=x_1\land y=y_1)$, with free variables $x$ and $y$,
evaluates to
$\{\{1\mapsto \texttt{a},2\mapsto \texttt{a}\}\}$ over the database instance $\IFOLe{\I}$ with $\IFOLe{\I}(\texttt{r})=\{\{1\mapsto \texttt{a},2\mapsto \texttt{a}\},\{1\mapsto \texttt{b}\}\}$, namely $\eval[\IFOLe{\I}]{\exists x_1,y_1.\ \texttt{r}(x_1,y_1)\land x=x_1\land y=y_1}=\{\{1\mapsto \texttt{a},2\mapsto \texttt{a}\}\}$. \qed	
\end{example}

\begin{definition}[Certain Answer] 
An answer tuple $\anstup{\alpha}$ is a certain answer of a query $\mathcal{Q}$ over a set $\mathcal{KB}$ of closed formulae if it is an answer to the query $\mathcal{Q}$ in each instance ${\I}$ satisfying $\mathcal{KB}$; i.e., ${\I},\alpha \models \mathcal{Q}$ for each ${\I}$ such that ${\I}\models \mathcal{KB}$. The certain answer set $\eval[\mathcal{KB}]{\mathcal{Q}}$ is the relation: 
\begin{center}
$\eval[\mathcal{KB}]{\mathcal{Q}}=\left\{\anstup{\alpha}\mid\forall\I.\ {\I}\models\mathcal{KB}\rightarrow{\I},\alpha \models\mathcal{Q}\right\}$\qed
\end{center}
\end{definition}

\begin{example}
The \textit{pseudo-identity} \FOLe query 
$(\exists x_1,y_1.\ \texttt{r}(x_1,y_1)\land x=x_1\land y=y_1)$, with free variables $x$ and $y$,
evaluates to
$\{\{1\mapsto \texttt{a},2\mapsto \texttt{a}\}\}$ over the knowledge base $\mathcal{KB}=\texttt{r}(\texttt{a},\texttt{a})\land\exists x.\ \texttt{r}(x,\nullvalue)$, namely $\eval[\mathcal{KB}]{\exists x_1,y_1.\ \texttt{r}(x_1,y_1)\land x=x_1\land y=y_1}=\{\{1\mapsto \texttt{a},2\mapsto \texttt{a}\}\}$. \qed	
\end{example} 

%%%%%%%%%%%%%%%%%%%%%%%%%%%%%%%%%%%%%%%%%%%%%%%%%%%%%%%%%%%%%%%%%%%

The following example shows that there are queries evaluating to unintuitive answers, since they evaluate to different answers over database instances having the same signature, the same interpretation to each relation symbol and constant symbol, but different domains.

\begin{example}
The query 
$\neg\texttt{p}(x)$, with free variable $x$,
evaluates to the empty set $\{\}$ over the database instance ${\I}$ with ${\I}(\texttt{p})=\{\{1\mapsto \texttt{a}\}\}$ and domain $\Delta^{\I}=\{\texttt{a}\}$, while it evaluates to $\{\{1\mapsto \texttt{b}\}\}$ over the database instance ${\J}$ with the same interpretation for \texttt{p} , namely ${\J}(\texttt{p})=\{\{1\mapsto \texttt{a}\}\}$, but different domain $\Delta^{J}=\{\texttt{a},\texttt{b}\}$. \qed
\end{example} 

To overcome this problem, we restrict queries to be only \textit{domain independent} formulae~\citep{DBLP:conf/cilc/KerhetF12}. 

\begin{definition}[Domain Independence]
A closed formula $\varphi$ is \textit{domain independent} if for every two interpretations
$\mathcal{I}=\langle\Delta^\mathcal{I},\mathcal{I}(\cdot)\rangle$ and 
$\mathcal{J}=\langle\Delta^\mathcal{J},\mathcal{J}(\cdot)\rangle$, which agree on the interpretation of relation symbols and constant symbols
-- i.e. $\mathcal{I}(\cdot)=\mathcal{J}(\cdot)$ -- but disagree on the interpretation domains $\Delta^\mathcal{I}$ and $\Delta^\mathcal{J}$:
$$\mathcal{I}\ \models\varphi \text{~~~if and only if~~~} 
\mathcal{J}\ \models\varphi.$$
The domain independent fragment of a logic includes only its domain independent formulae.\qed 
\end{definition}

It is well known that the domain independent fragment of the classical first-order logic \FOL can be characterised with its \emph{safe-range} syntactic fragment: intuitively, a formula is safe-range if and only if its variables are bounded by positive predicates or equalities -- for the exact syntactical definition see, e.g., \citep{AbHV95}.
It is easy to see that \textit{also} the domain independent fragment of \FOLe can be characterised with the \emph{safe-range} syntactic fragment of \FOLe. 
Due to the strong semantic equivalence expressed in Proposition~\ref{prop:equivalence_nRC_FOL} and to the fact the the bijection $\FOLetoFOLd(\cdot)$ preserves the syntactic structure of the formulae, we can reuse the results about safe-range transformations and domain independence holding for classical \FOL. 
To check whether a formula is safe-range, the formula is transformed into a logically equivalent \emph{safe-range normal form} and its \emph{range restriction} is computed according to a set of syntax based rules; the range restriction of a formula is a subset of its free variables, and if it coincides with the free variables then the formula is said to be safe-range \citep{AbHV95}. 
\begin{proposition}
Any safe-range \FOLe closed formula is domain independent, and any domain independent \FOLe closed formula can be transformed into a logically equivalent safe-range \FOLe closed formula. 
\end{proposition}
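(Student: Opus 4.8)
The plan is to reduce both halves of the statement to the corresponding classical results for \FOL over the decomposed signature $\subR$ (see~\citep{AbHV95}), transporting them across the bijection $\FOLetoFOLd(\cdot)$ of Proposition~\ref{prop:equivalence_nRC_FOL}. Two features of this bijection do all the work. First, it preserves the syntactic skeleton of a formula: it acts as the identity on Boolean connectives, quantifiers, and equality atoms, and it sends each relational atom $R(t_1,\ldots,t_n)$ to the relational atom $\subR[R]_A(t_{i_1},\ldots,t_{i_k})$, where $A=\set{i_1,\ldots,i_k}$ collects the non-null positions; this is why it should respect the notion of safe-range. Second, by Proposition~\ref{prop:equivalence_nRC_FOL} it preserves models up to the domain-preserving isomorphism between $\IFOLe{\I}$- and $\IRA{\I}$-interpretations; this is why it should respect domain independence.

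First I would show that a \FOLe closed formula $\varphi$ is safe-range if and only if $\FOLetoFOLd(\varphi)$ is safe-range. The range restriction of a formula is computed by induction on its structure, so, since $\FOLetoFOLd(\cdot)$ leaves connectives, quantifiers, and equality atoms untouched, it suffices to inspect the relational atoms. Here the key point is that $\nullvalue$ is a term but never a variable, so the variables occurring in $R(t_1,\ldots,t_n)$ are exactly those occurring at its non-null positions, which are precisely the variables occurring in $\subR[R]_A(t_{i_1},\ldots,t_{i_k})$. Each relational atom therefore range-restricts the same set of variables before and after translation, and by induction the range restrictions of $\varphi$ and $\FOLetoFOLd(\varphi)$ coincide. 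Because $\FOLetoFOLd(\cdot)$ is a bijection onto all \FOL formulae over $\subR$ (which, having no null term, never place $\nullvalue$ inside a $\subR[R]_A$ atom), the same equivalence holds in the reverse direction via $\FOLdtoFOLe(\cdot)$.

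Next I would transfer domain independence. For a closed formula the empty assignment is trivially total with respect to its (absent) free variables, so Proposition~\ref{prop:equivalence_nRC_FOL} specialises to $\IFOLe{\I}\models_{\FOLe}\varphi$ iff $\IRA{\I}\models_{\FOL}\FOLetoFOLd(\varphi)$. The decomposition isomorphism of Section~\ref{sec:preliminaries} keeps the domain $\Delta$ and the interpretation of constants fixed, and it depends only on the non-null positions and values of the partial tuples, not on $\Delta$; hence it induces a domain-preserving bijection between \FOLe-interpretations over $\R$ and \FOL-interpretations over $\subR$ that maps a pair agreeing on symbol interpretations but differing only on the domain to a pair with the same property. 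Combining this with the specialised equivalence shows that $\varphi$ is domain independent in \FOLe exactly when $\FOLetoFOLd(\varphi)$ is domain independent in \FOL.

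With these two transfer lemmas the statement follows from the classical \FOL theory. For the first half, a safe-range \FOLe closed formula $\varphi$ yields a safe-range $\FOLetoFOLd(\varphi)$, which is domain independent by the classical result, whence $\varphi$ is domain independent. For the second half, a domain independent $\varphi$ yields a domain independent $\FOLetoFOLd(\varphi)$; the classical safe-range normalisation produces a logically equivalent safe-range \FOL formula $\psi$ over $\subR$, and $\FOLdtoFOLe(\psi)$ is then a safe-range \FOLe formula over $\R$ that is logically equivalent to $\FOLdtoFOLe(\FOLetoFOLd(\varphi))=\varphi$. I expect the main obstacle to be the domain-independence transfer: one must check carefully that the model correspondence of Proposition~\ref{prop:equivalence_nRC_FOL}, stated for a single fixed instance, assembles into a bijection on interpretations that is simultaneously domain-preserving and compatible with ``agreement on the interpretation of symbols,'' so that varying $\Delta$ on the \FOLe side corresponds exactly to varying $\Delta$ on the \FOL side. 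The safe-range half is comparatively routine, once one notes that the classical normalisation introduces no new predicate symbols and so stays within the range of $\FOLetoFOLd(\cdot)$.
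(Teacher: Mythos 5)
Your proposal is correct and follows essentially the same route as the paper, which does not spell out a proof but justifies the proposition in the preceding paragraph by exactly your two observations: the bijection $\FOLetoFOLd(\cdot)$ preserves the syntactic structure of formulae (hence safe-rangeness), and by Proposition~\ref{prop:equivalence_nRC_FOL} it preserves models up to the domain-preserving decomposition isomorphism (hence domain independence), so the classical \FOL results can be reused. Your write-up merely makes explicit the transfer lemmas that the paper leaves implicit.
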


We observe that an interpretation is a model of a formula in the domain independent fragment of \FOLe with the Standard Name Assumption \emph{if and only if} the interpretation which agrees on the interpretation of relation and constant symbols but with the interpretation domain equal to the set of standard names $\C$ is a model of the formula. Therefore, in the following when dealing with the domain independent fragment of \FOLe with the Standard Name Assumption we can just consider interpretations with the interpretation domain equal to $\C$.

\section{Equivalence of Algebra and Calculus}
\label{sub:nra_and_nRC}

In this Section we prove the strong relation between \nRA and \FOLe by showing that the \nRA relational algebra with nulls and the domain independent fragment of \FOLe with the Standard Name Assumption are equally expressive.
The notion of \emph{equal expressivity} is captured by the following two propositions.

\begin{proposition}\label{prop:nRA_nFOL}
	Let $e$ be an arbitrary \nRA expression of arity $n$, and $t$ an arbitrary $n$-tuple as a total function with values taken from the set $\C \cup \{\nullvalue\}$. There is a function $\Omega(e,t)$ translating $e$ with respect to $t$ into a closed safe-range \FOLe formula, such that for any instance $\I$ with the Standard Name Assumption:
	\begin{center}
	~~~$t\in \eval[\InRA{\I}]{e}  \text{~~~if and only if~~~} \IFOLe{\I} \models_{\FOLe} \Omega(e,t) .$%\qed
	\end{center}
\end{proposition}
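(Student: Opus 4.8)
The plan is to define the translation $\Omega$ by structural induction on the \nRA expression $e$ and to establish the biconditional simultaneously with the definition. To make the induction go through it is convenient to generalise $\Omega(e,\bar{s})$ so that its second argument $\bar{s}$ is an arity‑matching tuple of \emph{terms} (constants, variables, or $\nullvalue$) rather than a ground tuple; the statement of the proposition is then the special case in which $\bar{s}=t$ is ground. Throughout I rely on two facts established earlier: the isomorphism between the \InRA{\I} and \IFOLe{\I} representations (so that an explicit $\nullvalue$ in position $i$ of a tuple corresponds exactly to position $i$ being undefined in the partial‑tuple semantics of \FOLe), and the Standard Name Assumption (so that the constants occurring in $t$ denote themselves and can be matched against the content of the relations).

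The base and set‑theoretic cases are routine. For an atomic relation $R$ I set $\Omega(R,t)=R(t(1),\ldots,t(n))$, placing the marker $\nullvalue$ in exactly the null positions of $t$; the biconditional is then immediate from the \InRA{\I}/\IFOLe{\I} isomorphism applied to the satisfaction clause for \FOLe atoms. For the singletons $\singleton{v}$ and $\singleton{\nullvalue}$ the expression denotes a single fixed tuple, so $\Omega$ returns a ground tautology or a ground contradiction according to whether $t$ matches. The two selection operators are handled by observing that, since $t$ is concrete, each selection condition is a \emph{ground} test: $\Omega(\select{i=v}{e},t)$ is $\Omega(e,t)$ when $t(i)=v$ (and hence $t(i)\neq\nullvalue$) and an unsatisfiable formula otherwise, and similarly for $\select{i=j}{e}$; this matches the \nRA semantics in which equality fails on nulls. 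For the Cartesian product I split $t$ into its length‑$m$ and length‑$n$ halves and take the conjunction of the two recursive translations; union and difference translate to disjunction and to conjunction‑with‑negation, respectively.

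The genuinely delicate case, and the main obstacle, is projection. For $\project{i_1,\ldots,i_k}{e}$ with $e$ of arity $\ell$, a tuple $t$ of arity $k$ belongs to the result exactly when $e$ contains an $\ell$‑tuple $s'$ whose restriction to $i_1,\ldots,i_k$ equals $t$ and whose remaining positions $Q=[1\cdots\ell]\setminus\{i_1,\ldots,i_k\}$ may carry \emph{either} an arbitrary domain value \emph{or} the null value. Because quantifiers in \FOLe range over the domain $\Delta$, which by design excludes $\nullvalue$, I cannot capture the ``or null'' by a plain existential; instead I build a disjunction over all subsets $S\subseteq Q$ of positions declared null. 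In the disjunct for $S$, the positions in $S$ receive the term $\nullvalue$, the positions in $Q\setminus S$ receive fresh existentially quantified variables, the positions $i_1,\ldots,i_k$ receive the components of $t$, and the body is the recursive translation of $e$ applied to this tuple of terms. This is precisely where the generalisation of $\Omega$ to tuples of terms is needed.

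The final task is to verify that every produced formula is closed and safe‑range. Closedness is clear, since the only variables introduced are the existentially bound ones from projection. For safe‑range I check that each recursive step preserves the range‑restriction invariant: the variables inserted by projection occur positively inside the base relation atoms of the recursive call and are therefore range‑restricted; the conjunctions, disjunctions and negations follow the standard safe‑range normal‑form rules, using the fact that difference and the set operations of \nRA force equal arities, so that every variable appearing under a negation is also range‑restricted by a positive conjunct. Since the derived \nRA operators of Figure~\ref{fig:nra} expand into the core operators already treated, no separate argument is needed for them, and the induction is complete.
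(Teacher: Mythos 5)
Your overall strategy coincides with the paper's: you define $\Omega$ by structural induction, you generalise its second argument from ground tuples to tuples of terms so that projection can pass fresh variables into the recursive calls, and you handle projection --- correctly identified as the delicate case --- exactly as the paper does, via a disjunction over all subsets of the projected-away positions that are declared null, with existentially quantified variables elsewhere. The atomic, singleton, product, union and difference cases, and the closing safe-range/closedness check, also match the paper's proof.

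There is, however, one step that fails as written: your treatment of selection. You say that ``since $t$ is concrete, each selection condition is a ground test'', and accordingly you resolve $\select{i=v}{e}$ at translation time, returning $\Omega(e,t)$ when $t(i)=v$ and an unsatisfiable formula otherwise. This contradicts your own generalisation of $\Omega$ to tuples of terms: inside the recursive calls generated by projection, position $i$ may carry a fresh existentially quantified variable $x_i$ rather than a constant, and then the syntactic test $t(i)=v$ is simply false, so your rule returns $\false$. Concretely, for a binary relation $R$ the expression $\project{1}{\select{2=v}{R}}$ would translate to a disjunction in which every disjunct is $\false$: the non-null disjunct passes $x_2$ to the selection and your rule kills it, while the null disjunct is killed because $\nullvalue$ never satisfies the selection --- yet the query has nonempty answers in general. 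The correct rule, which is the paper's, makes the only syntactic case split on whether $t(i)$ is the marker $\nullvalue$ (in which case the translation is $\false$, since equality atoms of \FOLe may not contain $\nullvalue$); otherwise it emits the equality $\foleq{t(i)}{v}$ (resp.\ $\foleq{t(i)}{t(j)}$) as a conjunct of the formula and lets the semantics of \FOLe, under the Standard Name Assumption, evaluate it. With that correction your induction goes through as described.
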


\begin{proposition}\label{prop:nFOL_to_nRA}
	Let $\varphi$ be an arbitrary safe-range \FOLe closed formula. There is a \nRA expression $e$, such that for any instance $\I$ with the Standard Name Assumption:
	\begin{center}
	~~~$\IFOLe{\I} \models_{\FOLe} \varphi \text{~~~if and only if~~~}  \eval[\InRA{\I}]{e} \neq \emptyset .$%\qed
	\end{center}
\end{proposition}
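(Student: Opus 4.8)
The plan is to prove this \emph{converse} direction by reducing it to the classical Codd theorem over the decomposed signature, rather than redoing the calculus-to-algebra translation from scratch. Concretely, I would chain together three equivalences: the strong semantic correspondence of Proposition~\ref{prop:equivalence_nRC_FOL} between \FOLe over $\R$ and classical \FOL over the decomposed signature $\subR$; the classical equivalence of the domain-independent fragment of \FOL with standard relational algebra \RA (Codd's theorem, see, e.g., \citep{AbHV95}); and a syntactic substitution that re-expresses the decomposed relation symbols $\subR[R]_A$ back inside \nRA over the original signature $\R$.

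First I would set $\subR[\varphi] := \FOLetoFOLd(\varphi)$. Since $\FOLetoFOLd(\cdot)$ is a bijection that leaves variables, equalities and all connectives untouched and only rewrites each atom $R(t_1,\ldots,t_n)$ into $\subR[R]_A(\ldots)$ by \emph{dropping} its $\nullvalue$-arguments (never a variable), the set of variables bound by each atom --- and hence the syntactically computed range restriction --- is unchanged. Thus $\subR[\varphi]$ is a \emph{closed safe-range} \FOL formula over $\subR$, and by Codd's theorem there is a standard \RA expression $\tilde e$ of arity $0$ over the signature $\subR$, mentioning only the symbols $\subR[R]_A$ and constant singletons, with $\IRA{\I}\models_{\FOL}\subR[\varphi]$ iff $\eval[\IRA{\I}]{\tilde e}\neq\emptyset$ for every classical interpretation $\IRA{\I}$ under the Standard Name Assumption. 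Here the availability of zero-ary relations and zero-ary projections in \RA is exactly what lets a \emph{closed} formula be captured by the non-emptiness of a zero-ary expression.

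Next I would transform $\tilde e$ into an \nRA expression $e$ over $\R$ by replacing every occurrence of an atomic relation $\subR[R]_A$ with the \nRA expression computing the $A$-decomposition of $R$ directly from the \InRA{\I} representation, namely $\project{A}{\Bigl(\select{\bigwedge_{i\in A}\isnotnull(i)\ \wedge\ \bigwedge_{j\in[1\cdots n]\setminus A}\isnull(j)}{R}\Bigr)}$. By the isomorphism between \IRA{\I} and \InRA{\I} recalled in Section~\ref{sec:preliminaries} --- each $|A|$-tuple of $\IRA{\I}(\subR[R]_A)$ is exactly the $n$-tuple of $\InRA{\I}(R)$ that is non-null precisely on $A$ --- this expression evaluates over \InRA{\I} to $\IRA{\I}(\subR[R]_A)$. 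A routine induction on the structure of $\tilde e$ then gives $\eval[\InRA{\I}]{e} = \eval[\IRA{\I}]{\tilde e}$: all the substituted base relations are \emph{null-free}, and every \RA operator behaves identically in \nRA on null-free inputs (the only difference between \RA and \nRA is that equality in a selection fails on $\nullvalue$, which never arises here), so no null is ever introduced. Composing the three equivalences yields $\IFOLe{\I}\models_{\FOLe}\varphi$ iff $\eval[\InRA{\I}]{e}\neq\emptyset$, as required.

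The step I expect to be the main obstacle is the fidelity of this substitution: I must verify carefully that re-expressing each $\subR[R]_A$ by the selection-and-projection above, and plugging it inside an arbitrary \RA expression, really preserves the evaluated relation --- i.e.\ that nulls are confined to the base level and that the classical behaviour of selection-equality, of difference (used for negation via the active-domain subexpression of $\tilde e$) and of the other operators is genuinely unaffected. An alternative, more self-contained route would be a direct induction on the safe-range normal form of $\varphi$, translating atoms by selections that impose $\isnull$/$\isnotnull$ and equality constraints according to the $\nullvalue$/constant/variable pattern of the arguments, conjunction by join, disjunction by union, existentials by projection, and negation by difference against an active-domain expression; in that formulation the delicate point is instead ensuring that the active-domain expression excludes $\nullvalue$ (so that free variables never bind to the null marker) and that safe-range guarantees every difference is taken against an adequately range-restricting positive subexpression.
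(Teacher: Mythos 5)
Your proposal is correct and follows essentially the same route as the paper: it chains Proposition~\ref{prop:equivalence_nRC_FOL}, the classical Codd equivalence over the decomposed signature (the paper's Lemma~\ref{lemma:FOLtoRA}), and the substitution of each $\subR[R]_A$ by a projection of an $\isnull$/$\isnotnull$ selection on $R$ (the paper's Lemma~\ref{lemma:RAtonRA}, proved by the same induction with the same observation that the substituted base relations are null-free so the operators behave classically). Your additional remarks on why $\FOLetoFOLd(\varphi)$ stays safe-range and on the role of zero-ary expressions are consistent with what the paper asserts in Section~\ref{sec:queries}.
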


\noindent
These lemmata enable to state the main theorem as:
\begin{theorem}[Extension of Codd's Theorem]
	The \nRA relational algebra with nulls and the domain independent fragment of \FOLe with the Standard Name Assumption are equally expressive.
\end{theorem}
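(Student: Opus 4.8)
The plan is to derive the theorem directly from the two translations recorded in Proposition~\ref{prop:nRA_nFOL} and Proposition~\ref{prop:nFOL_to_nRA}, lifting them from the Boolean level (tuple membership and nonemptiness) to the level of full query answer sets. Concretely, I read \emph{equally expressive} as the assertion that a mapping sending each instance $\I$ to a relation is definable by some \nRA expression \iff it is definable by some open safe-range \FOLe formula evaluated under the Standard Name Assumption. I would prove the two inclusions separately, using the isomorphism of Section~\ref{sec:preliminaries} to match the partial answer tuples of \FOLe with the null-marked tuples of \nRA.

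For the inclusion \nRA into \FOLe, I would start from an arbitrary \nRA expression $e$ of arity $n$ and build an open \FOLe formula $\psi_e$ with free variables $v_1,\ldots,v_n$. The key observation is that an assignment $\alpha$ over $\{v_1,\ldots,v_n\}$ produces a partial answer tuple $\anstup{\alpha}$ whose undefined positions correspond, under the isomorphism, to the null positions of a total $n$-tuple $t_\alpha$ over $\C\cup\set{\nullvalue}$; under SNA the defined entries of $t_\alpha$ are exactly the standard names $\alpha(v_i)$. Proposition~\ref{prop:nRA_nFOL} already gives, for each fixed $t$, a closed safe-range formula $\Omega(e,t)$ characterising $t\in\eval[\InRA{\I}]{e}$. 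To obtain a single open formula I would disjoin over the $2^n$ possible null patterns $A\subseteq[1\cdots n]$, taking for each $A$ the formula obtained from $\Omega(e,\cdot)$ by substituting $v_i$ for the non-null positions $i\in A$ and $\nullvalue$ for the remaining positions, and guarding it so that the answer tuple has support exactly $A$ --- the positions in $A$ are forced to be defined by equality atoms $v_i=v_i$, while the positions outside $A$ must be kept genuinely unassigned. Because each disjunct is built from a safe-range formula with safe-range guards, the result stays safe-range, hence domain independent by the safe-range characterisation of Section~\ref{sec:queries}, and its answer set coincides with $\eval[\InRA{\I}]{e}$ modulo the isomorphism.

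For the inclusion \FOLe into \nRA, I would take an arbitrary open safe-range \FOLe formula $\varphi(v_1,\ldots,v_n)$ and produce an \nRA expression $e$ of arity $n$ whose evaluation reproduces the answer set. The case $n=0$ (closed $\varphi$) is exactly Proposition~\ref{prop:nFOL_to_nRA}, which yields an \nRA expression that is nonempty \iff $\varphi$ holds. For the open case I would follow the Codd-style inductive translation underlying that proposition, keeping the free variables as output columns rather than existentially closing them, so that each satisfying assignment $\anstup{\alpha}$ becomes the corresponding null-marked tuple of $\eval[\InRA{\I}]{e}$; safe-rangeness guarantees the induction stays within \nRA and that the arity equals the number of free variables. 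Combining the two inclusions gives the theorem.

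The main obstacle is precisely this lift from the Boolean statements of the two propositions to equality of answer sets. Two points need care. First, the bookkeeping of null patterns: I must ensure that the guards in $\psi_e$ force the support of each answer tuple to be exactly the intended set $A$, so that no spurious total-versus-partial variants of the same tuple are generated and the correspondence with $\eval[\InRA{\I}]{e}$ is exact --- this is where the partial-function semantics of \FOLe interacts delicately with the requirement that an unconstrained free variable neither silently acquire nor lose an assignment. Second, I must check that safe-rangeness, and hence domain independence, is preserved both by the disjunction-over-patterns construction and by the open-formula version of the backward translation, so that both sides really define queries in the intended fragments. Once these are settled, the equivalence follows mechanically from Propositions~\ref{prop:nRA_nFOL} and~\ref{prop:nFOL_to_nRA} together with the Section~\ref{sec:preliminaries} isomorphism.
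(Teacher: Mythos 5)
Your reading of ``equally expressive'' is substantially stronger than the one the paper actually uses, and this matters. The paper's proof of this theorem is essentially immediate: it declares that the notion of equal expressivity \emph{is} captured by Proposition~\ref{prop:nRA_nFOL} (membership $t\in \eval[\InRA{\I}]{e}$ reduces to satisfiability of the \emph{closed} safe-range formula $\Omega(e,t)$) together with Proposition~\ref{prop:nFOL_to_nRA} (satisfiability of a closed safe-range formula reduces to nonemptiness of an \nRA expression), and the theorem follows by simply conjoining the two Boolean-level reductions. No lift to equality of answer sets for open formulas is attempted anywhere in the paper. Your proposal instead tries to show that every $n$-ary \nRA expression is equivalent, as an instance-to-relation mapping, to an open safe-range \FOLe formula and conversely. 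That would be a more informative statement, but your argument for it has a genuine hole exactly where you flag it, and the hole is not a matter of bookkeeping.

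The problem is the forward direction's requirement that the disjunct for a null pattern $A$ admit only assignments whose support is \emph{exactly} $A$. In that disjunct the variables $v_j$ with $j\notin A$ are replaced by $\nullvalue$ and therefore do not occur at all; by the paper's semantics, satisfaction of the disjunct is then independent of whether $\alpha$ assigns $v_j$, so every extension of the intended partial assignment to a larger support also satisfies the disjunction and contributes a spurious answer tuple $\anstup{\alpha}$ with arbitrary values in the extra positions. There is no safe-range guard expressing ``$v_j$ is undefined'': the only candidate, $\neg\foleq{v_j}{v_j}$, is a negated equality that destroys safe-rangeness (and with it the appeal to domain independence on which your argument relies), and a disjunction whose disjuncts have different free-variable sets is not safe-range in the first place. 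The backward direction has a parallel unproved step: Proposition~\ref{prop:nFOL_to_nRA} is stated and proved only for closed formulas, and the ``keep the free variables as output columns'' variant would need its own induction rather than a citation. If you want the theorem as the paper states it, drop the lift entirely and observe that the two propositions already constitute the claimed mutual reductions, modulo the instance isomorphisms of Section~\ref{sec:preliminaries}; if you want the stronger answer-set version, the support-exactly-$A$ issue must be solved first, and it is not clear it can be within the safe-range fragment as the paper defines it.
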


This means that there exists a reduction from the membership problem of a tuple in the answer of a \nRA expression over a database instance with null values into the satisfiability problem of a closed safe-range \FOLe formula over the same database (modulo the isomorphism among database instances presented in Section~\ref{sec:preliminaries}); and there exists a reduction from the satisfiability problem of a closed safe-range \FOLe formula over a database instance with null values into the emptiness problem of the answer of a \nRA expression over the same database (modulo the isomorphism among database instances).

%%%%%%%%%%%%%%%%%%%%%%%%%%%%%%%%%%%%%%%%%%%%%%%%%%%%%%%%%%%%%%%%%%%%%%%%%%%%

\subsection{From Algebra to Calculus}
\label{sec:alg-calc}
	
To show that the safe-range fragment of \FOLe with the Standard Name Assumption captures the expressivity of \nRA queries, we first define explicitly a translation function which maps \nRA expressions into safe-range \FOLe formulae.

\begin{definition}[From \nRA to safe-range \FOLe]\label{def:nRAtonFOL_func}
	Let $e$ be an arbitrary \nRA expression, and $t$ an arbitrary tuple of the same arity as $e$, as a total function with values taken from the set $\C \cup \{\nullvalue\} \cup \V$, where $\V$ is a countable set of variable symbols. The function $\Omega(e,t)$ translates $e$ with respect to $t$ into a \FOLe formula according to the following inductive definition:
	
	\begin{itemize}
		\item for any $R/_{\ell} \in \R$, $\Omega(R,t) \leadsto R(t(1),\ldots,t(\ell))$
		
		\item $\Omega(\singleton{v},t) \leadsto 
		\begin{cases}
		\foleq{t(1)}{v} & \text{if $t(1)\neq\nullvalue$}\\
		\false & \text{otherwise}		
		\end{cases}$
		
		\item $\Omega(\singleton{\nullvalue},t) \leadsto 
		\begin{cases}
		\false & \text{if $t(1)\neq\nullvalue$}\\
		\true & \text{otherwise}\\
		\end{cases}$

		\item $\Omega(\select{\textsf{i = v}}{e}, t) \leadsto 		 		\begin{cases}
		\Omega({e}, t) \wedge \foleq{t(i)}{v}  & \text{if $t(i)\neq\nullvalue$}\\
		\false & \text{otherwise}	
		\end{cases}$

		\item $\Omega(\select{\textsf{i = j}}{e}, t) \leadsto 
		\begin{cases}
		\Omega({e}, t) \wedge \foleq{t(i)}{t(j)} & \text{if $t(i),t(j)\neq\nullvalue$}\\
		\false & \text{otherwise}	
		\end{cases}$

		\item $\Omega(\project{i_1\cdots i_k}{e},t) \leadsto 
		\exists x_1\cdots x_n \bigvee_{H\subseteq {\set{1\cdots n}\setminus\set{i_1\cdots i_k}}} \Omega(e,t_H)$
		
		\vspace{1ex}
		where $x_i$ are fresh variable symbols and $t_H$ is a sequence of $n$ terms defined as:
		$$t_H(i) \doteq 
		\begin{cases}
		t(i) & \text{if $i\in\set{i_1,\ldots,i_k}$}\\
		\nullvalue & \text{if $i\in H$}\\
		x_i & \text{otherwise}
		\end{cases}$$
		
		\item $\Omega(e_1 \times e_2, t) \leadsto 
		\Omega(e_1,t') \wedge \Omega(e_2,t'')$ \\ 
		where $n_1, n_2$ are the arity of $e_1, e_2$ respectively, \\ 
				where $t'$ is a $n_1$-ary tuple function s.t. $t'(i)=t(i)$ for $1\leq i \leq n_1$, \\
				and $t''$ is a $n_2$-ary tuple function s.t.\ $t''(i) = t(n_1+i)$ for $1\leq i\leq n_2$						
		\vspace{1ex}
		\item $\Omega(e_1 \cup e_2, t) \leadsto 
		\Omega(e_1,t) \vee \Omega(e_2,t)$

		\item $\Omega(e_1 - e_2, t) \leadsto 
		\Omega(e_1,t) \wedge \neg \Omega(e_2,t)$\qed
	\end{itemize}
\end{definition}

We can now show the proof of Proposition~\ref{prop:nRA_nFOL}.

\begin{proof}[Proposition~\ref{prop:nRA_nFOL}]
To prove the proposition we show that for any expression $e$, tuple $t$ of the same arity, and instance $\I$ with the Standard Name Assumption the above defined function $\Omega$ satisfies the property that
$$t\in \eval[\InRA{\I}]{e}\text{ iff }\IFOLe{\I} \models_{\FOLe} \Omega(e,t).$$
We shall prove the property by induction on the structure of $e$, were the base cases are expressions of the form $R, \singleton{v}, \singleton{\nullvalue}$.

\begin{description}
\item[$R$] \hfill \\
  Let $e$ be $R\in\R$ of arity $n$ and $t$ a tuple s.t.\ $t\in \eval[\InRA{\I}]{R}$ where $\set{i_1,\ldots,i_k}$ are the indexes for which $t(i)=\nullvalue$. By Definition~\ref{def:nRAtonFOL_func}, $\Omega(R,t)\leadsto R(t)$.

  Since $t\in \eval[\InRA{\I}]{R}$, then $t\in \InRA{\I}(R)$, so there is a partial function $t' \in {\IFOLe{\I}}(R)$ s.t.\ $t'(i)$ is undefined for $i\in \set{i_1,\ldots,i_k}$ and $t'(i) = t(i)$ otherwise. Therefore $\IFOLe{\I}\models_{\FOLe}R(t)$.

  On the other hand, if $\IFOLe{\I}\models_{\FOLe}R(t)$ then there is a partial function $t' \in {\IFOLe{\I}}(R)$ s.t.\ $t'(i)$ is undefined for $i\in \set{i_1,\ldots,i_k}$ and $t'(i) = \IFOLe{\I}(t(i))$ otherwise. Since $\I$ is an interpretation with the Standard Name Assumption, ${\IFOLe{\I}}(t(i)) = t(i)$; therefore $t\in\InRA{\I}(R)$ and $t\in \eval[\InRA{\I}]{R}$.

\item[$\singleton{v}$]  \hfill \\
  $t\in \eval[\InRA{\I}]{\singleton{v}}$ iff $t(1) \neq \nullvalue$ and $t(1) = v$ iff $\models_{\FOLe} \foleq{t(1)}{v}$ and $t(1) \neq \nullvalue$, which is exactly the definition of $\Omega(\singleton{v},t)$ (note that it doesn't depend on the actual instance $\I$).

\item[$\singleton{\nullvalue}$]  \hfill \\
  this case is analogous to the previous one: $t\in \eval[\InRA{\I}]{\singleton{\nullvalue}}$ iff $t(1) = \nullvalue$ iff $\models_{\FOLe} \Omega(\singleton{\nullvalue},t)$.

\end{description}

For the inductive step we assume that the property holds for expressions $e, e_1, e_2$ and show that holds for the following composed expressions as well.

\begin{description}
\item[$\select{\textsf{i = v}}{e}$]  \hfill \\
  We consider two cases: $t(i)=\nullvalue$ and $t(i)\neq\nullvalue$.
  
  If $t(i)=\nullvalue$ then $t \not\in \eval[\InRA{\I}]{\select{\textsf{i = v}}{e}}$, which is equivalent to the translation $\Omega(\select{\textsf{i = v}}{e},t)\leadsto \false$.
  
  If $t(i)\neq\nullvalue$, the translation is $\Omega({e}, t) \wedge \foleq{t(i)}{v}$ and $\IFOLe{\I} \models_{\FOLe} \Omega(\select{\textsf{i = v}}{e},t)$ iff $\IFOLe{\I} \models_{\FOLe} \Omega({e}, t)$ and $\IFOLe{\I} \models_{\FOLe} \foleq{t(i)}{v}$ because there are no shared variables. By inductive hypothesis $\IFOLe{\I} \models_{\FOLe} \Omega({e}, t)$ iff $t \in \eval[\InRA{\I}]{e}$ and -- by definition of equality -- $\IFOLe{\I} \models_{\FOLe} \foleq{t(i)}{v}$ iff $t(i) = v$. $t \in \eval[\InRA{\I}]{e}$ and $t(i) = v$ is the necessary and sufficient condition for $t \in\eval[\InRA{\I}]{\select{\textsf{i = v}}{e}}$, therefore is equivalent to $\IFOLe{\I} \models_{\FOLe} \Omega({\select{\textsf{i = v}}{e}}, t)$.

\item[$\select{\textsf{i = j}}{e}$]  \hfill \\
  We consider the two cases: either $t(i)$ or $t(j)$ is equal to $\nullvalue$ or they are both different from $t(i)=\nullvalue$.

  In the first case for any $e$ and $\InRA{\I}$ $t\not\in \eval[\InRA{\I}]{\select{\textsf{i = j}}{e}}$ because of the constraint that both $i,j$ elements of the tuple must be different from $\nullvalue$. Therefore -- given that the translation is $\false$ -- $\IFOLe{\I} \models_{\FOLe} \false$ iff $t\in \eval[\InRA{\I}]{\select{\textsf{i = j}}{e}}$ is satisfied.
  
  In the second case, let's consider the translation $\Omega({e}, t) \wedge \foleq{t(i)}{t(j)}$; since there are no shared variables between the two expressions ($t$ contains constants only), then $\IFOLe{\I}\models_{\FOLe}\Omega({e}, t) \wedge \foleq{t(i)}{t(j)}$ iff $\IFOLe{\I}\models_{\FOLe}\Omega({e}, t)$ and $\IFOLe{\I}\models_{\FOLe}\foleq{t(i)}{t(j)}$. By inductive hypothesis $\IFOLe{\I}\models_{\FOLe}\foleq{t(i)}{t(j)}$ iff $t\in\eval[\InRA{\I}]{e}$ and, by definition of equality, $\models_{\FOLe}\foleq{t(i)}{t(j)}$ iff $t(i) = t(j)$; therefore $\IFOLe{\I}\models_{\FOLe}\Omega({e}, t) \wedge \foleq{t(i)}{t(j)}$ iff $t\in\eval[\InRA{\I}]{\select{\textsf{i = j}}{e}}$.

\item[$\project{i_1\cdots i_k}{e}$]  \hfill \\
Let $t\in\eval[\InRA{\I}]{\project{i_1\cdots i_k}{e}}$, then there is $t'$ of arity $n$ s.t.\ $t'\in \eval[\InRA{\I}]{e}$ and $t(j) = t'(i_j)$ for $i=1\ldots k$. By the recursive hypothesis, $\IFOLe{\I}\models_{\FOLe}\Omega(e,t')$. Let $H' \subseteq [1\cdots n]\setminus\set{i_1,\cdots, i_k}$ be the sets of indexes for which $t'(\cdot)$ is $\nullvalue$; then $\IFOLe{\I}\models_{\FOLe}\exists x_1\cdots x_n \Omega(e,t_{H'})$, so $\IFOLe{\I}\models_{\FOLe}\exists x_1\cdots x_n \bigvee_{H\subseteq {\set{1\cdots n}\setminus\set{i_1\cdots i_k}}} \Omega(e,t_H)$ because $H'$ is one of these disjuncts.

For the other direction, let us assume that $\IFOLe{\I}\models_{\FOLe}\Omega(\project{i_1\cdots i_k}{e},t)$, then $\IFOLe{\I}\models_{\FOLe}\exists x_1\cdots x_n \bigvee_{H\subseteq {[1\cdots n]\setminus\set{i_1\cdots i_k}}} \Omega(e,t_H)$, so there is $H'\subseteq {[1\cdots n]\setminus\set{i_1\cdots i_k}}$ and assignment $\alpha$ for variables in $t_{H'}$ s.t.\ $\IFOLe{\I}\models_{\FOLe} \Omega(e,\alpha(t_{H'}))$. By the inductive hypothesis there is a tuple $t'$ corresponding to $\alpha(t_{H'})$ s.t.\ $t'\in \eval[\InRA{\I}]{e}$; by construction of $t_{H'}$, $t(i) = \alpha(t_{H'}(i))$ for $i\in\set{i_1,\ldots, i_k}$ therefore $t\in\eval[\InRA{\I}]{\project{i_1\cdots i_k}{e}}$.

\item[$e_1 \times e_2$]  \hfill \\
  $\IFOLe{\I} \models_{\FOLe} \Omega({e_1 \times e_2},t)$ iff there are $t_1,t_2$ having the same arity of $e_1, e_2$ respectively (let them be $n_1$ and $n_2$) with $t_1(i) = t(i)$ and $t_2(j)=t(n_1+j)$, such that $\IFOLe{\I} \models_{\FOLe} \Omega({e_1},t_1)\land \Omega({e_2},t_2)$. Since there are no shared variables, the latter is equivalent to $\IFOLe{\I} \models_{\FOLe} \Omega({e_1},t_1)$ and $\IFOLe{\I} \models_{\FOLe} \Omega({e_2},t_2)$; by inductive hypothesis those are true iff $t_1\in\eval[\InRA{\I}]{e_1}$ and $t_2\in\eval[\InRA{\I}]{e_2}$. By definition, the latter is true iff $t\in \eval[\InRA{\I}]{e_1 \times e_2}$.

\item[$e_1 \cup e_2$]  \hfill \\
  $\IFOLe{\I} \models_{\FOLe} \Omega({e_1 \cup e_2},t)$ iff $\IFOLe{\I} \models_{\FOLe} \Omega({e_1},t)\lor \Omega({e_2},t)$ which, since there are no shared variables, is equivalent to $\IFOLe{\I} \models_{\FOLe} \Omega({e_1},t)$ or $\IFOLe{\I} \models_{\FOLe} \Omega({e_2},t)$. By inductive hypothesis the latter is true iff $t\in\eval[\InRA{\I}]{e_1}$ or $t\in\eval[\InRA{\I}]{e_2}$, which -- by definition -- is true iff $t\in\eval[\InRA{\I}]{e_1 \cup e_2}$.

\item[$e_1 - e_2$]  \hfill \\
 $\IFOLe{\I} \models_{\FOLe} \Omega({e_1 - e_2},t)$ iff $\IFOLe{\I} \models_{\FOLe} \Omega({e_1},t)\land \neg \Omega({e_2},t)$ which, since there are no shared variables, is equivalent to $\IFOLe{\I} \models_{\FOLe} \Omega({e_1},t)$ and $\IFOLe{\I} \not\models_{\FOLe} \Omega({e_2},t)$. By inductive hypothesis the latter is true iff $t\in\eval[\InRA{\I}]{e_1}$ and $t\not\in\eval[\InRA{\I}]{e_2}$, which -- by definition -- is true iff $t\in\eval[\InRA{\I}]{e_1 - e_2}$.\qed
\end{description}
\end{proof}

\begin{example}
Given some database instance with a binary relation \texttt{r}, checking whether the tuple $\langle \texttt{a},\texttt{a}\rangle$ or the tuple $\langle \texttt{b},\nullvalue\rangle$ are in the answer of the \nRA pseudo-identity query $\select{1=1}{\select{2=2}{\texttt{r}}}$  (discussed in Section~\ref{sec:intro}) corresponds to check whether the \FOLe closed safe-range formula $\Omega(\select{1=1}{\select{2=2}{\texttt{r}}},t)$, for $t=\langle \texttt{a},\texttt{a}\rangle$ or $t=\langle \texttt{b},\nullvalue\rangle$, is satisfied by the database instance. \\
In the case of $t=\langle \texttt{a},\texttt{a}\rangle$, using the Definition~\ref{def:nRAtonFOL_func} we have that the formula is equivalent to $\Omega({\select{2=2}{\texttt{r}}}, \langle \texttt{a},\texttt{a}\rangle) \land \foleq{a}{a}$, which is equivalent to $\Omega({\texttt{r}}, \langle \texttt{a},\texttt{a}\rangle) \land \foleq{a}{a}$, and finally to $\texttt{r}(\texttt{a},\texttt{a})$, which is satisfied by a database instance only if the table \texttt{r} contains the tuple $\langle \texttt{a},\texttt{a}\rangle$. \\
In the $\langle \texttt{b},\nullvalue\rangle$ case, the formula is equivalent to $\Omega({\select{2=2}{\texttt{r}}}, \langle \texttt{b},\nullvalue\rangle) \land \foleq{b}{b}$, and the formula $\Omega({\select{2=2}{\texttt{r}}}, \langle \texttt{b},\nullvalue\rangle)$ is equivalent to $\false$ by Definition~\ref{def:nRAtonFOL_func}; so, as expected, also when translated in first-order logic, it turns out that the tuple $\langle \texttt{b},\nullvalue\rangle$ (like any other tuple with at least a null value in it) can not be in the answer of the query $\select{1=1}{\select{2=2}{\texttt{r}}}$ for any database instance. %
\qed
\end{example}

\begin{example}
Let's consider the NOT-NULL constraint from Example~\ref{ex:nRA-denial}, expressed in \nRA as the denial constraint $(\select{\isnull(1)}{\texttt{q}}=\emptyset)$, in the database instance $\IFOLe{\I}$ where $\texttt{q}^{\IFOLe{\I}}=\{\langle a,a \rangle, \langle b,\nullvalue \rangle\}$. \\
In order to prove that the denial constraint is satisfied in $\IFOLe{\I}$, we have to prove that, for any tuple $t$, the safe range \FOLe formula $\Omega(\select{\isnull(1)}{\texttt{q}},t)$ is false in the database instance $\IFOLe{\I}$. 
The $\select{\isnull(i)}{e}$ operator is derived, and it is equivalent to $e - {\select{i = i}{e}}$ (see Figure~\ref{def:nRA}); therefore we need to consider the formula $\Omega(\texttt{q} - {\select{1=1}{\texttt{q}}},t)$.
Using Definition~\ref{def:nRAtonFOL_func}, we have:
$$
\Omega(\select{\isnull(1)}{\texttt{q}},t)\equiv
\Omega(\texttt{q} - \select{1=1}{\texttt{q}},t)\equiv
\Omega(\texttt{q},t)\land\neg\Omega(\select{1=1}{\texttt{q}},t)
$$
The formula $\Omega(\select{1=1}{\texttt{q}},t)$ depends on $t$; therefore, we have to consider all the possible cases for $t$. This amounts to check that none of the following formulas is satisfied by the database instance $\IFOLe{\I}$:
\noindent
$$
\IFOLe{\I}\nvDash
\begin{cases}
	\Omega(\texttt{q},t)\land\neg(\Omega(\texttt{q},t)\land\true)\equiv\false & {\text{if } t\in\texttt{q}^{\IFOLe{\I}}\text{ and hence  }\ t(1)\neq\nullvalue}\\ 
	\Omega(\texttt{q},t)\land\neg(\Omega(\texttt{q},t)\land\true)\equiv\false & {\text{if } t\notin\texttt{q}^{\IFOLe{\I}}\!\land\ t(1)\neq\nullvalue} \\
	\Omega(\texttt{q},t)\land\neg\false\equiv\texttt{q}(\nullvalue,t(2)) & {\text{if }t(1)=\nullvalue\ \text{  and hence  }\ t\notin\texttt{q}^{\IFOLe{\I}}}
\end{cases} 
$$
\noindent
It is easy to see that $\IFOLe{\I}\nvDash\Omega(\select{\isnull(1)}{\texttt{q}},t)$ for any $t$, namely for each of the three cases above.\qed
\end{example}

%%%%%%%%%%%%%%%%%%%%%%%%%%%%%%%%%%%%%%%%%%%%%%%%%%%%%%%%%%%%%%%%%%%%%%%%%%%%

\subsection{From Calculus to Algebra}
\label{sec:calc-alg}

To show that \nRA queries capture the expressivity of the safe-range fragment of \FOLe with the Standard Name Assumption, we first establish two intermediate results supporting that \FOLe queries can be expressed in (standard) relational algebra.

\begin{lemma} \label{lemma:RAtonRA}
	Let $\subR[e]$ be an expression in the standard relational algebra \RA over the decomposed relational schema $\subR$. There is a \nRA expression $e$ over the relational schema $\R$, such that for any instance $\I$:
	\begin{displaymath}
	\eval[\IRA{\I}]{\subR[e]} = \eval[\InRA{\I}]{e}
	\end{displaymath}
\end{lemma}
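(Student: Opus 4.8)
The plan is to define, by structural induction on $\subR[e]$, a translation sending each \RA expression over the decomposed schema $\subR$ to a \nRA expression over $\R$, and to prove the two evaluations coincide. The whole argument rests on a single observation: every relation of $\IRA{\I}$ consists of \emph{null-free} total tuples, standard \RA preserves null-freeness, and on null-free relations \nRA behaves exactly like \RA. Indeed, \nRA differs from \RA only in the two equality selections $\select{i = v}{}$ and $\select{i = j}{}$, whose extra side-conditions $s(i)\neq\nullvalue$ (and $s(j)\neq\nullvalue$) are vacuously satisfied when the arguments are non-null; all remaining operators --- projection, Cartesian product, union, difference, and the constant singleton --- are defined identically in the two algebras (compare Figures~\ref{fig:ra} and~\ref{fig:nra}). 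Consequently the translation can commute with every compound operator, and all the work is concentrated in the atomic case.

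For the base case I would translate each decomposed atomic relation $\subR[R]_A$, with $R/n\in\R$ and $A=\set{i_1,\ldots,i_k}\subseteq[1\cdots n]$, into the \nRA expression that recovers the corresponding horizontal slice of $R$: select in $R$ the tuples that are non-null exactly on the positions of $A$, then project onto $A$. Writing $[1\cdots n]\setminus A=\set{j_1,\ldots,j_{n-k}}$, this is
\[
  \subR[R]_A \;\leadsto\; \project{i_1,\ldots,i_k}{\select{\isnotnull(i_1)}{\cdots\select{\isnotnull(i_k)}{\select{\isnull(j_1)}{\cdots\select{\isnull(j_{n-k})}{R}}}}}.
\]
Its correctness is immediate from the definition of $\IRA{\I}$ in Section~\ref{sec:preliminaries}: the $|A|$-tuples of $\IRA{\I}(\subR[R]_A)$ are by construction exactly the projections onto $A$ of the $n$-tuples of $\InRA{\I}(R)$ whose non-null positions are precisely those in $A$, which is what the displayed expression computes. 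The degenerate cases are covered too: when $A=[1\cdots n]$ no $\isnull$ selection occurs and the projection is the identity, and when $A=\set{}$ the expression uses a zero-ary projection --- which is exactly why the statement is about full \nRA rather than \nRA[0]. The constant singleton $\singleton{v}$ (also an \RA expression over $\subR$) is translated to itself, since $v\in\C$ is never null.

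For the inductive step I would let the translation commute with each compound operator, mapping $\select{c}{\subR[e]}$, $\project{i_1,\ldots,i_k}{\subR[e]}$, $\subR[e_1]\times\subR[e_2]$, $\subR[e_1]\cup\subR[e_2]$, and $\subR[e_1]-\subR[e_2]$ to the same operator applied to the translated subexpressions. Since the translation preserves arity (the atom $\subR[R]_A$ of arity $|A|$ becomes a \nRA expression of arity $|A|$), the selection positions and projection lists carry over unchanged, so the translated expressions are well-formed. Correctness then follows from the inductive hypothesis together with the coincidence noted above: by induction each translated subexpression evaluates over $\InRA{\I}$ to the same null-free relation that the original evaluates to over $\IRA{\I}$; applying the matching operator yields equal results in both algebras --- by identical definitions for projection, product, union and difference, and, for the equality selections, because the $\neq\nullvalue$ conjuncts never fire on the null-free arguments supplied by the subexpressions.

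The step demanding the most care is the atomic one: pinning down the null-extraction-and-projection expression and checking it against the precise correspondence between $\InRA{\I}$ and $\IRA{\I}$, including the two degenerate cases $A=[1\cdots n]$ and $A=\set{}$. The inductive step is then routine bookkeeping, resting entirely on the structural fact that the null-sensitive conjuncts of the \nRA selections are inert on the null-free relations that both $\IRA{\I}$ and the translation keep producing.
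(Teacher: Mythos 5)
Your proposal is correct and follows essentially the same route as the paper's proof: the identical atomic translation of $\subR[R]_A$ via $\isnull$/$\isnotnull$ selections followed by projection onto $A$, combined with structural induction using the fact that \nRA and \RA operators coincide on null-free relations. Your additional remarks on the degenerate cases $A=[1\cdots n]$ and $A=\set{}$ are a welcome elaboration but do not change the argument.
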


\begin{proof}
	The \nRA expression $e$ is the same expression $\subR[e]$ where each basic relation $\subR[R]_{\set{i_1,\ldots,i_k}}$ with $R$ of arity $n$ is substituted by the expression
	\begin{displaymath}
	\subR[R]_{\set{i_1,\ldots,i_k}} ~{\leadsto}~ \project{i_1,\ldots,i_k}(\select{\isnotnull(\set{i_1,\ldots,i_k})}(\select{\isnull([1\cdots n]\setminus\set{i_1,\ldots,i_k})}R))
	\end{displaymath}
	where $\select{\isnull(i_1,\ldots,i_k)}e$ is a shorthand for $\select{\isnull(i_1)}\ldots\select{\isnull(i_k)}e$.
	\\
	The proof is by induction on the expression $e$ where the basic cases are the unary singleton - which is the same as \nRA since there are no nulls - and any basic relation $\subR[R]_{\set{i_1,\ldots,i_k}}$ where $R\in \R$ an $n$-ary relation. Compound expressions satisfy the equality because of the inductive assumption and the fact that the algebraic operators (in absence of null values) in \nRA and \RA have the same definition.
	\\
	Therefore we just need to show that
	\begin{displaymath}
	\eval[\IRA{\I}]{\subR[R]_{\set{i_1,\ldots,i_k}}} = \eval[\InRA{\I}]{\project{i_1,\ldots,i_k}(\select{\isnotnull(i_1,\ldots,i_k)}(\select{\isnull(j_1,\ldots, j_\ell)}R))}
	\end{displaymath}
	however, is true by the definition of $\IRA{\I}$ and $\InRA{\I}$ as shown in Section~\ref{sec:preliminaries}:
	
	$\IRA{\I}(\subR[R]_{\set{i_1,\ldots,i_k}})=
	\{t\in
	\begin{aligned}[t]
	&
	([1\cdots k]\mapsto\C)\mid \\
	&
	\exists t'\in\InRA{\I}(R).\ 
	\begin{aligned}[t]
	&\forall j\in([1\cdots n]\setminus\set{i_1,\ldots,i_k}).\ t'(j)=\nullvalue\land{}\\
	&\forall j\in\set{i_1,\ldots,i_k}.\ t'(j)\neq\nullvalue\land{}\\
	&\forall j\in\set{1,\ldots,k}.\ t(j)=t'(i_j)
	\}. 
%    \hfill 
%    ~~~~~~~~~~~~~~~
    \end{aligned}
	\end{aligned} 
	$
	\vspace{-1.5ex}
	\\
	\qed
\end{proof}

\begin{lemma}\label{lemma:FOLtoRA}
	Let $\psi$ be a safe-range \FOL closed formula. There is an \RA expression $\subR[e]$ of arity zero, such that for any instance $\IRA{\I}$ with the Standard Name Assumption:
	\begin{displaymath}
	\IRA{\I} \models_{\FOL} \psi  \text{~~~if and only if~~~}  \eval[\IRA{\I}]{\subR[e]} \neq \emptyset
	\end{displaymath}
\end{lemma}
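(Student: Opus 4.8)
The plan is to recognise that \FOL over the decomposed signature $\subR$ is nothing but classical first-order logic with equality: its interpretations $\IRA{\I}$ are ordinary total-tuple structures, with no null value in the domain. Consequently the statement is precisely the calculus-to-algebra direction of the classical Codd theorem, specialised to the signature $\subR$, and I would reuse the standard translation from safe-range \FOL to relational algebra \RA (as developed, e.g., in \citep{AbHV95}) essentially verbatim. The genuinely new content over the classical theorem is only that the base relations now range over the decomposed schema $\subR$; since these are ordinary relations of total tuples, the classical argument transfers without change.

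First I would put $\psi$ into its safe-range normal form and check that safe-rangeness is preserved, so that every quantified variable and every variable occurring under negation is provably range-restricted; this is exactly what guarantees that the forthcoming algebraic operators are always applied to expressions of compatible arity. Then I would define, by induction on the structure of the safe-range normal form, a translation assigning to each subformula $\varphi$ with ordered free variables $x_1,\dots,x_m$ an \RA expression over $\subR$ of arity $m$ whose answer set equals $\{\langle a_1,\dots,a_m\rangle \mid \IRA{\I},\{x_i\mapsto a_i\}\models_{\FOL}\varphi\}$ for every $\IRA{\I}$ with the Standard Name Assumption.

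The inductive clauses are the usual ones, and correctness of each follows directly from the semantics of the corresponding operator in Figure~\ref{fig:ra}, since at this level there are no null values to complicate equality or tuple membership. A decomposed atom $\subR[R]_A(t_1,\dots,t_k)$ becomes the base relation $\subR[R]_A$ followed by selections enforcing the constant arguments and the equalities forced by repeated variables, and a projection reordering the columns to match the chosen variable order; an equality atom becomes a selection; a conjunction becomes a Cartesian product followed by the selections equating shared variables and then a projection onto the union of the two variable sets; a disjunction becomes a union (in safe-range normal form both disjuncts share the same free variables, so their translations already have matching arity and column order); a conjunction $\varphi\wedge\neg\chi$ becomes a difference, legitimate because the free variables of $\chi$ are range-restricted by $\varphi$; and existential quantification becomes a projection, with universal quantification rewritten as $\neg\exists\neg$ beforehand. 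Since $\psi$ is closed, its ordered set of free variables is empty and the resulting $\subR[e]$ has arity zero. By the convention fixed in Section~\ref{sec:preliminaries} a zero-ary relation is either empty or equal to $\{\langle\rangle\}$; the induction then yields $\eval[\IRA{\I}]{\subR[e]}=\{\langle\rangle\}$ exactly when $\IRA{\I}\models_{\FOL}\psi$ and $\eval[\IRA{\I}]{\subR[e]}=\emptyset$ otherwise, which is the claimed equivalence.

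I expect the main obstacle to be bookkeeping rather than anything conceptual: making the arities line up through the induction (so that union operands agree, differences are well-typed, and projections expose the free variables in the fixed order) while carrying the safe-range invariant that justifies every negation and disjunction. The only point needing a word of care is the arity-zero boundary case, where ``the closed formula is true'' must be matched with ``the zero-ary answer contains the zero-tuple'', and for this I would lean on the explicit treatment of zero-ary relations and zero-ary projections already adopted for \RA in Section~\ref{sec:null_relational_algebra}.
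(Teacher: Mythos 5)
Your proposal is correct and takes essentially the same route as the paper: the paper's proof simply observes that, since no null values occur in either \FOL over $\subR$ or \RA, the statement is the classical safe-range-calculus-to-algebra equivalence (citing \citep{AbHV95}), which is exactly the reduction you carry out, just unfolded in full detail including the zero-ary boundary case.
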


\begin{proof}
Since null values are not present neither in \FOL nor \RA, the lemma derives from the equivalence between safe-range relational calculus queries and relational algebra (see e.g.~\citep{AbHV95}). \qed
\end{proof}

Now we can prove Proposition~\ref{prop:nFOL_to_nRA} using the above lemmata.

\begin{proof}[Proposition~\ref{prop:nFOL_to_nRA}]
Given a safe-range \FOLe closed formula $\varphi$, by Proposition~\ref{prop:equivalence_nRC_FOL}, there is a safe-range \FOL formula $\subR[\varphi]$ such that $\IFOLe{\I} \models_{\FOLe} \varphi$ iff $\IRA{\I} \models_{\FOL}\subR[\varphi]$ -- since $\varphi$ is closed, then any assignment is total.
\\
By restricting to instances with the Standard Name Assumption, we can use Lemma~\ref{lemma:FOLtoRA} to conclude that there is an \RA expression $\subR[e]$  of arity zero such that $\IRA{\I} \models_{\FOL}\subR[\varphi]$ iff $\eval[\IRA{\I}]{\subR[e]} \neq \emptyset$; therefore $\IFOLe{\I} \models_{\FOLe} \varphi$ iff $\eval[\IRA{\I}]{\subR[e]} \neq \emptyset$
\\
Finally we use Lemma~\ref{lemma:RAtonRA} to show that there is an \nRA expression $e$ such that $\eval[\IRA{\I}]{\subR[e]} \neq \emptyset$ iff $\eval[\InRA{\I}]{e} \neq \emptyset$; which enables us to conclude that $\IFOLe{\I} \models_{\FOLe} \varphi$ iff $\eval[\InRA{\I}]{e} \neq \emptyset$. \qed
\end{proof}

%%%%%%%%%%%%%%%%%%%%%%%%%%%%%%%%%%%%%%%%%%%%%%%%%%%%%%%%%%%%%%%%%%%%%%%%%%%%

\begin{example}
The safe-range closed \FOLe formula $\exists x. \texttt{r}(x,\nullvalue)$ is translated as the zero-ary \nRA statement $\select{\isnotnull(1)}{\select{\isnull(2)}{\texttt{r}}} \neq \emptyset$.\qed
\end{example}

\begin{example}
Let's consider what would be the classical way to express the FOREIGN KEY constraint from $\texttt{p.2}$ to $\texttt{q.1}$ from Example~\ref{ex:nRA-denial} in first-order logic: 

$\forall x,y.\ \texttt{p}(x,y)\rightarrow\exists z.\ \texttt{q}(y,z)
~~~\equiv~~~
\neg\exists y.\ (\exists x.\ \texttt{p}(x,y))\land\neg(\exists z.\ \texttt{q}(y,z))$. 

\noindent
The formula is translated as the \nRA statement: $\project{2}{\select{\isnotnull(2)}{\texttt{p}}} - \project{1}{\select{\isnotnull(1)}{\texttt{q}}}\ =\emptyset$, which is the \nRA statement we considered in Example~\ref{ex:nRA-denial}. \qed
\end{example}

%%%%%%%%%%%%%%%%%%%%%%%%%%%%%%%%%%%%%%%%%%%%%%%%%%%%%%%%%%%%%%%%%%%%%%%%%%%%

\section{Semantic Integrity with Null Values in SQL:1999}

In this Section we consider the main integrity constraints involving a specific behaviour for null values as defined in the ANSI/ISO/IEC standard SQL:1999~\citep{melton_sql1999_2002}, and we show how these can be naturally captured using \FOLe. We focus on \emph{unique and primary key constraints} and on \emph{foreign key constraints} as defined in \citep{TuGe:VLDB:01a}. We will see how the actual definitions of the unique, primary key, and foreign key constraints are a bit more involved than we have seen before: this is because more complex cases than the simple examples above may happen involving null values.

\subsection{Unique and primary key constraints.}

As specified in \citep{TuGe:VLDB:01a}, a uniqueness constraint $\textbf{\textsf{UNIQUE}}(u_1,\ldots,u_n)$ holds for a table $R$ of arity $m>n$ in a database if and only if there are no two rows $r_1,r_2$ in $R$ such that the values of all their uniqueness columns $u_i$ match and are not null. More formally, in \emph{tuple} relational calculus with explicit null values in the domain (fixing the incomplete definition in \citep{TuGe:VLDB:01a}):
$$\forall r_1,r_2\in R.\ \left(r_1\neq r_2\land\bigwedge_{i=1}^n r_1.u_i\neq\nullvalue\land r_2.u_i\neq\nullvalue\right)\rightarrow\left(\bigvee_{i=1}^n r_1.u_i\neq r_2.u_i\right).$$
\noindent
In \FOLe, which is a \emph{domain} relational calculus, this constraint can be written as:

\medskip
\noindent
{\small
$\forall \ol{x}.\ \left(
    \bigwedge\limits_{\set{i_1\cdots i_k} \subseteq\set{1\cdots m}\setminus\set{u_1\cdots u_n}}
	\forall \ol{y}, \ol{z}.\ (\nullify{u_1,\ldots, u_n,i_1,\ldots,i_k}{\ol{x},\ol{y}}{R}\land\nullify{u_1,\ldots, u_n,i_1,\ldots,i_k}{\ol{x},\ol{z}}{R}) \rightarrow \ol{y}=\ol{z}
\right)\land\ $
	
\medskip
\noindent
	$
	\forall \ol{x}.\ \left(\bigwedge\limits_{\set{i_1\cdots i_k}\subseteq \set{1\cdots m}\setminus\set{u_1\cdots u_n}} \exists \ol{y}.\ \nullify{u_1,\ldots, u_n,i_1,\ldots,i_k}{\ol{x},\ol{y}}{R}\right) \rightarrow\bot
	$
	}

\medskip\noindent
where $\nullify{i_1,\ldots,i_k}{t_1,\ldots,t_k}{R}$ is a shorthand for $R(t'_1,\ldots,t'_m)$ where $t'_{j} = t_j$ for $j=i_1\ldots i_k$ and $t'_j = \nullvalue$ for all other $j$.

A \emph{primary key} constraint is a combination of a \emph{uniqueness} constraint and one or more \emph{not null} constraints. A constraint $\textbf{\textsf{PRIMARY KEY}}(u_1,\ldots,u_n)$ holds for a table $R$ if and only if the following holds, in tuple-relational calculus with explicit null values in the domain (fixing the incomplete definition in \citep{TuGe:VLDB:01a}):

$$\forall r\in R.\ \left(\bigwedge_{i=1}^n r.u_i\neq\nullvalue\right)\land\forall r_1,r_2\in R.\ r_1\neq r_2\rightarrow\left(\bigvee_{i=1}^n r_1.u_i\neq r_2.u_i\right).$$

\noindent
Note that no column shall occur more than once within the same unique/primary key definition; furthermore, each table can have at most one primary key.

\noindent
In \FOLe this constraint can be written as the conjunction of a uniqueness constraint as above and a not null constraint as follows for each key attribute $u_i$:

$$\neg\left(
\bigvee_{\set{i_1\cdots i_k}\subseteq\set{1\cdots m}\setminus\set{u_i}} \exists \ol{y}.\ \nullify{i_1,\ldots,i_k}{\ol{y}}{R}\right).
$$

\subsection{Foreign key constraints.}

Foreign key constraints (or referential constraints) express dependencies among (portions of) rows in tables.
Given a \emph{referenced} (or \emph{parent}) table and a \emph{referencing} (or \emph{child}) table, a subset $f_i,\dots,f_k$ of the columns of the referencing table builds the foreign key and refers to the unique/primary key columns $u_j,\ldots,u_l$ of the referenced table. As specified in \citep{TuGe:VLDB:01a}, the \emph{simple match} is the default foreign key constraint implemented by all DBMS vendors. For each row $r$ of the referencing table $R$ (child table), either at least one of the values of the referencing columns $f_1,\ldots,f_n$ is a null value or the value of each referencing columns $f_i, 1\leq i\leq n$, is equal to the value of the corresponding referenced column $u_i$ for some row $s$ of the referenced table $S$. More formally, in tuple-relational calculus with null values in the domain \citep{TuGe:VLDB:01a}:

$$\forall r\in R.\ \left(\bigwedge_{i=1}^n r.f_i\neq\nullvalue\right)\rightarrow\exists s\in S.\ \left(\bigwedge_{i=1}^n r.f_i=s.u_i\right).$$

\noindent
In \FOLe this constraint can be written as:

\medskip\noindent
\begin{center}
\begin{minipage}{.7\textwidth}
$\forall \ol{x}.\ 
\left(
\bigvee\limits_{\set{i_1\cdots i_k}\subseteq\set{1\cdots m}\setminus\set{u_1\cdots u_n}} \exists \ol{y}.\ \nullify{u_1,\ldots, u_n,i_1,\ldots,i_k}{\ol{x},\ol{y}}{S}\right)
\rightarrow$

\noindent
$\hspace{2em}
\left(
\bigvee\limits_{\set{i_1\cdots i_k}\subseteq\set{1\cdots m}\setminus\set{u_1\cdots u_n}} \exists \ol{y}.\ \nullify{u_1,\ldots, u_n,i_1,\ldots,i_k}{\ol{x},\ol{y}}{R}\right).
$
\end{minipage}
\end{center}

\medskip
We just mention here the two other types of foreign key constraints defined in SQL:1999, which could be captured by \FOLe formulae as well.
In the \emph{partial match} foreign key, for each row $r$ of the referencing table $R$, there must be some row $s$ in the referenced table $S$ such that the value of each referencing column $f_i$ is null or equal to the value of the corresponding referenced column $u_i$ in $s$. In the \emph{full match} foreign key, for each row $r$ of the referencing table $R$, either the value of each referencing column $f_i$ must be a null value or there must be some row $s$ in the referenced table $S$ such that the value of each referencing column $f_i$ is equal to the value of the corresponding referenced column $u_i$ in row $s$.

We observe that, if the database does not contain null values, the SQL:1999 definitions of unique, not null, and foreign key constraints (with simple match) involving null values reduce to the well known classical \FOL definitions of these constraints without null values.

\medskip
SQL:1999 introduced the possibility of specifying alternative semantics for the constraints (not based on the actual evaluation of SQL queries). In particular, foreign keys may be given optionally a semantic where null values would be interpreted as unknown values as opposed to nonexistent (see~\citep{TuGe:VLDB:01a}). The formalisation of this alternative semantics of null values only within constraints (but not queries) will require some more work. Note that commercial relational database systems do not support this extension.

\section{Discussion and Further Work}\label{sec:conclusions} 

We have shown the first-order logical characterisation of the \fosql fragment of SQL with null values, by first focussing on \nRA, a simple extension with null values of standard relational algebra -- which captures \fosql -- and then proposing two different domain relational calculi equivalent to \nRA, in which the null value does not appear as an element of the semantic interpretation domain of the logics. We have extended Codd's theorem by stating the equivalence of relational algebra with both domain relational calculi in presence of SQL null values.

In both cases it is clear that null values do not introduce incompleteness in a database in the sense of~\citet{Imielinski:1984}, since null values are just markers for inapplicable values. Indeed, a database in both calculi is represented as a unique first-order finite relational interpretation, and not as a set of interpretations, which would characterise the incompleteness of the database. In one calculus, a SQL table can be seen as a set of \emph{partial tuples}, while in the other (equivalent) calculus, a SQL table can be seen as a set of \emph{horizontally decomposed tables} each one holding regular total tuples.
This latest characterisation is in standard classical first order logic -- without a special ``null value'' domain element in the interpretation. This allows us to connect in a well founded way SQL databases represented in the decomposed relational calculus with knowledge representation formalisms based on classical logics: we could query SQL databases with null values mediated by ontologies in description logics or OWL \citep{DBLP:journals/jair/FranconiKN13,DBLP:conf/semweb/ArtaleFPS17} or state constraints and dependencies over SQL databases with null values using description logics or OWL \citep{10.1007/978-3-642-35176-1_28}. We plan to further explore the possibilities to use classical knowledge representation languages as conceptual modelling languages and ontology mediating languages for SQL databases with null values.

\medskip
Given our formal results, we can summarise the meaning of SQL null values by saying that they are just \emph{markers for inapplicable values within a tuple}. But how does the meaning of SQL null values relate to the \emph{inapplicable semantics} as it appears in the literature mentioned in Section~\ref{sec:intro}? 
According to this semantics, there should never be in the same table a tuple equal to another tuple but with some values substituted with null values; we have observed in Section~\ref{sec:intro} that this may happen in SQL. The \emph{inapplicable semantics} for null values imposes a constraint at table level, while SQL does not. 
Indeed, the execution of the projection and the union operators in a SQL query may lead to an inconsistent result if we adopt the inapplicable semantics for null values. A possible semantically sound solution to this problem~\citep{gottlob_closed_1988,roth_null_1989}, in the case of projection, is to require that the projected attributes always contain the primary key of each relation; in the case of union, is to require a stronger notion of union compatibility, verifying that the primary key is the same in each involved relation and that it is preserved in the union relation. \citet{lerat:lipski:86}, on the other hand, require to automatically eliminate all tuples with inapplicable nulls which are responsible for inconsistencies from the result of a projection or union, by \emph{deduplicating} the outcome of the operation modulo subsumption under inapplicable semantics.  Please note that similar requirements could be stated in the case of \emph{bag semantics}. It would be interesting to further analyse the relation between the semantics of SQL null values and the inapplicable semantics of null values.

\medskip

We find also interesting to relate the meaning of SQL null values with the \emph{no information semantics}. We give some preliminary hints here.

\noindent
It is possible to give a \emph{no information} meaning to null values in a \FOLe formula $\varphi$ with null values, by means of the following translation: each atom in the \FOLe formula with one or more null values is replaced by a disjunction whose disjuncts are the atom itself and the atoms with all combinations of existentially quantified variables in place of the null values. For example, in order to give a no information meaning to the null values in the formula $\varphi=P(t,\nullvalue,\nullvalue)$, we translate it to the \FOLe formula $\IFOLnin{\varphi}=P(t,\nullvalue,\nullvalue)\lor\exists x. P(t,x,\nullvalue)\lor\exists x. P(t,\nullvalue,x)\lor\exists x,y. P(t,x,y)$, with $x,y$ fresh new variables.  

\noindent
The \emph{no information} semantics of a database instance $\I$ with no information nulls is defined by a set \IFOLnin{\I} of \IFOLe{\I} instances, namely \IFOLnin{\I} denotes an incomplete database in the spirit of \citet{Imielinski:1984}, and we have to employ certain answers. \IFOLnin{\I} includes all the instances \IFOLe{\I} representing a possible realisation of a no information null value in $\I$ as a null value \nullvalue or as an arbitrary actual value of the domain. 

\noindent
We can show the following: for each \FOLe formula $\varphi$, instance $\I$, and assignment $\alpha$, if $\IFOLe{\I},\alpha \models \varphi$ then for all $J\in\IFOLnin{\I}$ it holds $\J,\alpha \models \IFOLnin{\varphi}$. The converse is not true, as example query (2) in Section~\ref{sec:intro} shows under no information semantics. However, if $\varphi$ does not have any null value, then the converse holds.

\bibliographystyle{elsarticle-harv-nourl}
\bibliography{null-values-bib}

\begin{thebibliography}{42}
\expandafter\ifx\csname natexlab\endcsname\relax\def\natexlab#1{#1}\fi
\providecommand{\url}[1]{\texttt{#1}}
\providecommand{\href}[2]{#2}
\providecommand{\path}[1]{#1}
\providecommand{\DOIprefix}{doi:}
\providecommand{\ArXivprefix}{arXiv:}
\providecommand{\URLprefix}{URL: }
\providecommand{\Pubmedprefix}{pmid:}
\providecommand{\doi}[1]{\href{http://dx.doi.org/#1}{\path{#1}}}
\providecommand{\Pubmed}[1]{\href{pmid:#1}{\path{#1}}}
\providecommand{\bibinfo}[2]{#2}
\ifx\xfnm\relax \def\xfnm[#1]{\unskip,\space#1}\fi
%Type = Book
\bibitem[{Abiteboul et~al.(1995)Abiteboul, Hull and Vianu}]{AbHV95}
\bibinfo{author}{Abiteboul, S.}, \bibinfo{author}{Hull, R.},
  \bibinfo{author}{Vianu, V.}, \bibinfo{year}{1995}.
\newblock \bibinfo{title}{Foundations of Databases}.
\newblock \bibinfo{publisher}{Addison-Wesley}.
%Type = Article
\bibitem[{{ANSI/X3/SPARC Study Group on Data Base Management
  Systems}(1975)}]{ANSI-SQL-1975}
\bibinfo{author}{{ANSI/X3/SPARC Study Group on Data Base Management Systems}},
  \bibinfo{year}{1975}.
\newblock \bibinfo{title}{Interim report}.
\newblock \bibinfo{journal}{SIGMOD Record} \bibinfo{volume}{7},
  \bibinfo{pages}{1--264}.
%Type = Inproceedings
\bibitem[{Artale et~al.(2017)Artale, Franconi, Pe{\~{n}}aloza and
  Sportelli}]{DBLP:conf/semweb/ArtaleFPS17}
\bibinfo{author}{Artale, A.}, \bibinfo{author}{Franconi, E.},
  \bibinfo{author}{Pe{\~{n}}aloza, R.}, \bibinfo{author}{Sportelli, F.},
  \bibinfo{year}{2017}.
\newblock \bibinfo{title}{A decidable very expressive description logic for
  databases}, in: \bibinfo{booktitle}{16th International Semantic Web
  Conference ({ISWC} 2017)}, \bibinfo{publisher}{Springer}. pp.
  \bibinfo{pages}{37--52}.
%Type = Article
\bibitem[{Atzeni and Morfuni(1986)}]{atzeni_functional_1986}
\bibinfo{author}{Atzeni, P.}, \bibinfo{author}{Morfuni, N.M.},
  \bibinfo{year}{1986}.
\newblock \bibinfo{title}{Functional dependencies and constraints on null
  values in database relations}.
\newblock \bibinfo{journal}{Information and Control} \bibinfo{volume}{70},
  \bibinfo{pages}{1--31}.
%Type = Article
\bibitem[{Bertossi and Bravo(2016)}]{BERTOSSI_2016}
\bibinfo{author}{Bertossi, L.}, \bibinfo{author}{Bravo, L.},
  \bibinfo{year}{2016}.
\newblock \bibinfo{title}{Consistency and trust in peer data exchange systems}.
\newblock \bibinfo{journal}{Theory and Practice of Logic Programming}
  \bibinfo{volume}{17}, \bibinfo{pages}{148--204}.
%Type = Inproceedings
\bibitem[{Bravo and Bertossi(2006)}]{bravo:bertossi:06}
\bibinfo{author}{Bravo, L.}, \bibinfo{author}{Bertossi, L.},
  \bibinfo{year}{2006}.
\newblock \bibinfo{title}{Semantically correct query answers in the presence of
  null values}, in: \bibinfo{booktitle}{Current Trends in Database Technology
  -- EDBT 2006}, \bibinfo{publisher}{Springer}. pp. \bibinfo{pages}{336--357}.
%Type = Inproceedings
\bibitem[{de~Bruijn et~al.(2005)de~Bruijn, Franconi and
  Tessaris}]{franconi:et:al:rdf:05}
\bibinfo{author}{de~Bruijn, J.}, \bibinfo{author}{Franconi, E.},
  \bibinfo{author}{Tessaris, S.}, \bibinfo{year}{2005}.
\newblock \bibinfo{title}{Logical reconstruction of {RDF} and ontology
  languages}, in: \bibinfo{booktitle}{Principles and Practice of Semantic Web
  Reasoning}, \bibinfo{publisher}{Springer}. pp. \bibinfo{pages}{65--71}.
%Type = Inproceedings
\bibitem[{B{\"u}ltzingsloewen(1987)}]{bultzingsloewen_translating_1987}
\bibinfo{author}{B{\"u}ltzingsloewen, G.v.}, \bibinfo{year}{1987}.
\newblock \bibinfo{title}{Translating and {Optimizing} {SQL} {Queries} {Having}
  {Aggregates}}, in: \bibinfo{booktitle}{Proceedings of the 13th
  {International} {Conference} on {Very} {Large} {Data} {Bases}}, pp.
  \bibinfo{pages}{235--243}.
%Type = Techreport
\bibitem[{Cannan et~al.(1987)Cannan, Dee and Kerridge}]{cannan_proposal_1987}
\bibinfo{author}{Cannan, S.}, \bibinfo{author}{Dee, E.},
  \bibinfo{author}{Kerridge, J.}, \bibinfo{year}{1987}.
\newblock \bibinfo{title}{A proposal to provide support for multiple {NULL}
  states}.
\newblock \bibinfo{type}{{ISO}/{TC97}/{SC21} {WG3} {Database} report}
  \bibinfo{number}{DBL-AMS-51}. ISO - International Organisation for
  Standardisation.
%Type = Article
\bibitem[{Ceri and Gottlob(1985)}]{ceri:gottlob:ieee-1985}
\bibinfo{author}{Ceri, S.}, \bibinfo{author}{Gottlob, G.},
  \bibinfo{year}{1985}.
\newblock \bibinfo{title}{Translating {SQL} into relational algebra:
  Optimization, semantics, and equivalence of {SQL} queries}.
\newblock \bibinfo{journal}{IEEE Transactions on Software Engineering}
  \bibinfo{volume}{SE-11}, \bibinfo{pages}{324--345}.
%Type = Article
\bibitem[{Codd(1979)}]{codd:tods-79}
\bibinfo{author}{Codd, E.F.}, \bibinfo{year}{1979}.
\newblock \bibinfo{title}{Extending the database relational model to capture
  more meaning}.
\newblock \bibinfo{journal}{ACM Transactions on Database Systems}
  \bibinfo{volume}{4}, \bibinfo{pages}{397--434}.
%Type = Article
\bibitem[{Codd(1986)}]{codd_missing_1986}
\bibinfo{author}{Codd, E.F.}, \bibinfo{year}{1986}.
\newblock \bibinfo{title}{Missing information (applicable and inapplicable) in
  relational databases}.
\newblock \bibinfo{journal}{ACM SIGMOD Record} \bibinfo{volume}{15},
  \bibinfo{pages}{53--53}.
%Type = Article
\bibitem[{Console et~al.(2022)Console, Guagliardo and
  Libkin}]{CONSOLE2022103603}
\bibinfo{author}{Console, M.}, \bibinfo{author}{Guagliardo, P.},
  \bibinfo{author}{Libkin, L.}, \bibinfo{year}{2022}.
\newblock \bibinfo{title}{Propositional and predicate logics of incomplete
  information}.
\newblock \bibinfo{journal}{Artificial Intelligence} \bibinfo{volume}{302}.
%Type = Article
\bibitem[{Dadashzadeh and Stemple(1990)}]{dadashzadeh_converting_1990}
\bibinfo{author}{Dadashzadeh, M.}, \bibinfo{author}{Stemple, D.W.},
  \bibinfo{year}{1990}.
\newblock \bibinfo{title}{Converting {SQL} queries into relational algebra}.
\newblock \bibinfo{journal}{Information \& Management} \bibinfo{volume}{19},
  \bibinfo{pages}{307--323}.
%Type = Book
\bibitem[{Date(2003)}]{Date:2003}
\bibinfo{author}{Date, C.}, \bibinfo{year}{2003}.
\newblock \bibinfo{title}{An Introduction to Database Systems}.
\newblock \bibinfo{edition}{8} ed., \bibinfo{publisher}{Addison-Wesley}.
%Type = Book
\bibitem[{Date and Darwen(2010a)}]{date:darwen:2010}
\bibinfo{author}{Date, C.J.}, \bibinfo{author}{Darwen, H.},
  \bibinfo{year}{2010}a.
\newblock \bibinfo{title}{Database Explorations: Essays on the Third Manifesto
  and Related Topics}.
\newblock \bibinfo{publisher}{Trafford Publishing}.
%Type = Book
\bibitem[{Date and Darwen(2010b)}]{date_database_2010}
\bibinfo{author}{Date, C.J.}, \bibinfo{author}{Darwen, H.},
  \bibinfo{year}{2010}b.
\newblock \bibinfo{title}{Database {Explorations}: essays on the {Third}
  {Manifesto} and related topics}.
\newblock \bibinfo{publisher}{Trafford}.
%Type = Incollection
\bibitem[{Farr{\'e} et~al.(2006)Farr{\'e}, Nutt, Teniente and
  Urp{\'\i}}]{nutt-et-al-null:07}
\bibinfo{author}{Farr{\'e}, C.}, \bibinfo{author}{Nutt, W.},
  \bibinfo{author}{Teniente, E.}, \bibinfo{author}{Urp{\'\i}, T.},
  \bibinfo{year}{2006}.
\newblock \bibinfo{title}{Containment of conjunctive queries over databases
  with null values}, in: \bibinfo{booktitle}{Database Theory -- ICDT 2007}.
  \bibinfo{publisher}{Springer}. volume \bibinfo{volume}{4353}, pp.
  \bibinfo{pages}{389--403}.
%Type = Article
\bibitem[{Franconi et~al.(2013)Franconi, Kerhet and
  Ngo}]{DBLP:journals/jair/FranconiKN13}
\bibinfo{author}{Franconi, E.}, \bibinfo{author}{Kerhet, V.},
  \bibinfo{author}{Ngo, N.}, \bibinfo{year}{2013}.
\newblock \bibinfo{title}{Exact query reformulation over databases with
  first-order and description logics ontologies}.
\newblock \bibinfo{journal}{Journal of Artifificial Intelligence Research}
  \bibinfo{volume}{48}, \bibinfo{pages}{885--922}.
%Type = Inproceedings
\bibitem[{Franconi et~al.(2004)Franconi, Kuper, Lopatenko and
  Zaihrayeu}]{DBLP:conf/edbtw/FranconiKLZ04}
\bibinfo{author}{Franconi, E.}, \bibinfo{author}{Kuper, G.M.},
  \bibinfo{author}{Lopatenko, A.}, \bibinfo{author}{Zaihrayeu, I.},
  \bibinfo{year}{2004}.
\newblock \bibinfo{title}{A distributed algorithm for robust data sharing and
  updates in {P2P} database networks}, in: \bibinfo{booktitle}{Current Trends
  in Database Technology - {EDBT} 2004}, \bibinfo{publisher}{Springer}. pp.
  \bibinfo{pages}{446--455}.
%Type = Inproceedings
\bibitem[{Franconi and Tessaris(2012a)}]{franconi:tessaris:null:a:12}
\bibinfo{author}{Franconi, E.}, \bibinfo{author}{Tessaris, S.},
  \bibinfo{year}{2012}a.
\newblock \bibinfo{title}{The algebra and the logic for {SQL} nulls}, in:
  \bibinfo{booktitle}{Twentieth Italian Symposium on Advanced Database Systems,
  {SEBD} 2012}, pp. \bibinfo{pages}{163--176}.
%Type = Inproceedings
\bibitem[{Franconi and Tessaris(2012b)}]{franconi:tessaris:null:b:12}
\bibinfo{author}{Franconi, E.}, \bibinfo{author}{Tessaris, S.},
  \bibinfo{year}{2012}b.
\newblock \bibinfo{title}{On the logic of {SQL} nulls}, in:
  \bibinfo{booktitle}{Proceedings of the 6th Alberto Mendelzon International
  Workshop on Foundations of Data Management}, pp. \bibinfo{pages}{114--128}.
%Type = Inproceedings
\bibitem[{Gottlob and Zicari(1988)}]{gottlob_closed_1988}
\bibinfo{author}{Gottlob, G.}, \bibinfo{author}{Zicari, R.},
  \bibinfo{year}{1988}.
\newblock \bibinfo{title}{Closed {World} {Databases} {Opened} {Through} {Null}
  {Values}.}, in: \bibinfo{booktitle}{{VLDB}}, pp. \bibinfo{pages}{50--61}.
%Type = Article
\bibitem[{Grant(2008)}]{grant:2008}
\bibinfo{author}{Grant, J.}, \bibinfo{year}{2008}.
\newblock \bibinfo{title}{Null values in {SQL}}.
\newblock \bibinfo{journal}{SIGMOD Record} \bibinfo{volume}{37},
  \bibinfo{pages}{23--25}.
%Type = Article
\bibitem[{Guagliardo and Libkin(2017)}]{guagliardo:libkin:17}
\bibinfo{author}{Guagliardo, P.}, \bibinfo{author}{Libkin, L.},
  \bibinfo{year}{2017}.
\newblock \bibinfo{title}{A formal semantics of {SQL} queries, its validation,
  and applications}.
\newblock \bibinfo{journal}{Proceedings of the VLDB Endowment}
  \bibinfo{volume}{11}, \bibinfo{pages}{27--39}.
%Type = Article
\bibitem[{Hartmann and Link(2012)}]{hartmann_implication_2012}
\bibinfo{author}{Hartmann, S.}, \bibinfo{author}{Link, S.},
  \bibinfo{year}{2012}.
\newblock \bibinfo{title}{The implication problem of data dependencies over
  {SQL} table definitions: {Axiomatic}, algorithmic and logical
  characterizations}.
\newblock \bibinfo{journal}{ACM Transactions on Database Systems}
  \bibinfo{volume}{37}, \bibinfo{pages}{1--40}.
%Type = Article
\bibitem[{Imieli\'{n}ski and Lipski(1984)}]{Imielinski:1984}
\bibinfo{author}{Imieli\'{n}ski, T.}, \bibinfo{author}{Lipski, Jr., W.},
  \bibinfo{year}{1984}.
\newblock \bibinfo{title}{Incomplete information in relational databases}.
\newblock \bibinfo{journal}{Journal of the ACM (JACM)} \bibinfo{volume}{31},
  \bibinfo{pages}{761--791}.
%Type = Inproceedings
\bibitem[{Kerhet and Franconi(2012)}]{DBLP:conf/cilc/KerhetF12}
\bibinfo{author}{Kerhet, V.}, \bibinfo{author}{Franconi, E.},
  \bibinfo{year}{2012}.
\newblock \bibinfo{title}{On checking domain independence}, in:
  \bibinfo{booktitle}{Proceedings of the 9th Italian Conference on
  Computational Logic}, pp. \bibinfo{pages}{246--250}.
%Type = Article
\bibitem[{K{\"o}hler et~al.(2016)K{\"o}hler, Leck, Link and
  Zhou}]{kohler_possible_2016}
\bibinfo{author}{K{\"o}hler, H.}, \bibinfo{author}{Leck, U.},
  \bibinfo{author}{Link, S.}, \bibinfo{author}{Zhou, X.}, \bibinfo{year}{2016}.
\newblock \bibinfo{title}{Possible and certain keys for {SQL}}.
\newblock \bibinfo{journal}{The VLDB Journal} \bibinfo{volume}{25},
  \bibinfo{pages}{571--596}.
%Type = Article
\bibitem[{Lerat and Lipski(1986)}]{lerat:lipski:86}
\bibinfo{author}{Lerat, N.}, \bibinfo{author}{Lipski, W.},
  \bibinfo{year}{1986}.
\newblock \bibinfo{title}{Nonapplicable nulls}.
\newblock \bibinfo{journal}{Theoretical Computer Science} \bibinfo{volume}{46},
  \bibinfo{pages}{67--82}.
%Type = Article
\bibitem[{Libkin and Peterfreund(2020)}]{libkin_peterfreund_2020}
\bibinfo{author}{Libkin, L.}, \bibinfo{author}{Peterfreund, L.},
  \bibinfo{year}{2020}.
\newblock \bibinfo{title}{Handling {SQL} {Nulls} with {Two}-{Valued} {Logic}}.
\newblock \bibinfo{journal}{arXiv:2012.13198 [cs]} .
%Type = Article
\bibitem[{Lien(1982)}]{lien:82}
\bibinfo{author}{Lien, Y.E.}, \bibinfo{year}{1982}.
\newblock \bibinfo{title}{On the equivalence of database models}.
\newblock \bibinfo{journal}{Journal of the ACM (JACM)} \bibinfo{volume}{29},
  \bibinfo{pages}{333--362}.
%Type = Book
\bibitem[{Melton and Simon(2002)}]{melton_sql1999_2002}
\bibinfo{author}{Melton, J.}, \bibinfo{author}{Simon, A.R.},
  \bibinfo{year}{2002}.
\newblock \bibinfo{title}{{SQL}:1999: understanding relational language
  components}.
\newblock \bibinfo{publisher}{Academic Press}, \bibinfo{address}{San
  Francisco}.
%Type = Article
\bibitem[{Negri et~al.(1991)Negri, Pelagatti and Sbattella}]{negri_formal_1991}
\bibinfo{author}{Negri, M.}, \bibinfo{author}{Pelagatti, G.},
  \bibinfo{author}{Sbattella, L.}, \bibinfo{year}{1991}.
\newblock \bibinfo{title}{Formal {Semantics} of {SQL} {Queries}}.
\newblock \bibinfo{journal}{ACM Transactions on Database Systems}
  \bibinfo{volume}{16}, \bibinfo{pages}{513--534}.
%Type = Book
\bibitem[{Paredaens et~al.(1989)Paredaens, Bra, Gyssens and
  Gucht}]{paredaens_structure_1989}
\bibinfo{author}{Paredaens, J.}, \bibinfo{author}{Bra, P.},
  \bibinfo{author}{Gyssens, M.}, \bibinfo{author}{Gucht, D.},
  \bibinfo{year}{1989}.
\newblock \bibinfo{title}{The {Structure} of the {Relational} {Database}
  {Model}}.
\newblock \bibinfo{publisher}{Springer}.
%Type = Inproceedings
\bibitem[{Patel-Schneider and Franconi(2012)}]{10.1007/978-3-642-35176-1_28}
\bibinfo{author}{Patel-Schneider, P.F.}, \bibinfo{author}{Franconi, E.},
  \bibinfo{year}{2012}.
\newblock \bibinfo{title}{Ontology constraints in incomplete and complete
  data}, in: \bibinfo{booktitle}{International Semantic Web Conference -- ISWC
  2012}, \bibinfo{publisher}{Springer}. pp. \bibinfo{pages}{444--459}.
%Type = Inproceedings
\bibitem[{Ricciotti and Cheney(2021)}]{ricciotti_formalization_2020}
\bibinfo{author}{Ricciotti, W.}, \bibinfo{author}{Cheney, J.},
  \bibinfo{year}{2021}.
\newblock \bibinfo{title}{A {Formalization} of {SQL} with {Nulls}}, in:
  \bibinfo{booktitle}{18th International Symposium on Database Programming
  Languages (DBPL 2021)}.
%Type = Article
\bibitem[{Roth et~al.(1989)Roth, Korth and Silberschatz}]{roth_null_1989}
\bibinfo{author}{Roth, M.A.}, \bibinfo{author}{Korth, H.F.},
  \bibinfo{author}{Silberschatz, A.}, \bibinfo{year}{1989}.
\newblock \bibinfo{title}{Null values in nested relational databases}.
\newblock \bibinfo{journal}{Acta Informatica} \bibinfo{volume}{26},
  \bibinfo{pages}{615--642}.
%Type = Article
\bibitem[{Rubinson(2014)}]{rubinson_ontological_2014}
\bibinfo{author}{Rubinson, C.}, \bibinfo{year}{2014}.
\newblock \bibinfo{title}{Ontological {Considerations} {When} {Modeling}
  {Missing} {Data} {With} {Relational} {Databases}}.
\newblock \bibinfo{journal}{Social Science Computer Review}
  \bibinfo{volume}{32}, \bibinfo{pages}{769--780}.
%Type = Article
\bibitem[{T{\"u}rker and Gertz(2001)}]{TuGe:VLDB:01a}
\bibinfo{author}{T{\"u}rker, C.}, \bibinfo{author}{Gertz, M.},
  \bibinfo{year}{2001}.
\newblock \bibinfo{title}{Semantic integrity support in {SQL:1999} and
  commercial (object-)relational database management systems}.
\newblock \bibinfo{journal}{The VLDB Journal} \bibinfo{volume}{10},
  \bibinfo{pages}{241--269}.
%Type = Techreport
\bibitem[{{Van den Bussche} and Vansummeren(2009)}]{VandenBussche:et:al:2009}
\bibinfo{author}{{Van den Bussche}, J.}, \bibinfo{author}{Vansummeren, S.},
  \bibinfo{year}{2009}.
\newblock \bibinfo{title}{Translating {SQL} into the relational algebra}.
\newblock \bibinfo{type}{Technical Report}. Hasselt University and
  Universit{\'e} Libre de Bruxelles.
%Type = Article
\bibitem[{Zaniolo(1984)}]{Zaniolo:84}
\bibinfo{author}{Zaniolo, C.}, \bibinfo{year}{1984}.
\newblock \bibinfo{title}{Database relations with null values}.
\newblock \bibinfo{journal}{Journal of Computer and System Sciences}
  \bibinfo{volume}{28}, \bibinfo{pages}{142--166}.

\end{thebibliography}

\end{document}